\def \S {\boldsymbol\Sigma}
\def \P {\mathbf{P}}
\def \cP {\mathcal{P}}
\def \O {\mathbf{O}}
\def \< {\langle}
\def \> {\rangle}
\def \x {\mathbf{x}}
\def \y {\mathbf{y}}
\def \z {\mathbf{z}}
\DeclareMathOperator*{\E}{\mathbb{E}}
\theoremstyle{definition}
\newtheorem{theorem}{Theorem}
\newtheorem{lemma}[theorem]{Lemma}
\newtheorem{definition}[theorem]{Definition}
\newtheorem{corollary}[theorem]{Corollary}
\newtheorem{example}[theorem]{Example}
\theoremstyle{remark}
\newtheorem{claim}[theorem]{Claim}
\newcounter{note}[section]
\title{Designing Networks with Good Equilibria under Uncertainty}
\author{George Christodoulou\thanks{University of Liverpool,
    UK. Email:\texttt{\{gchristo,salkmini\} @liverpool.ac.uk} }
		\thanks{This author was supported by EPSRC grants EP/M008118/1 and EP/K01000X/1.}
  \and Alkmini Sgouritsa\footnotemark[1] }
\date{}
\begin{document}
\thispagestyle{empty}
\maketitle

\begin{abstract}

  We consider the problem of designing network cost-sharing protocols
  with good equilibria under uncertainty. The underlying game is a
  {\em multicast game} in a rooted undirected graph with nonnegative
  edge costs. A set of $k$ {\em terminal vertices} or {\em players}
  need to establish connectivity with the root. The social optimum is
  the Minimum Steiner Tree.

  We are interested in situations where the designer has incomplete
  information about the input. We propose two different models, the
  {\em adversarial} and the {\em stochastic}. In both models, the
  designer has prior knowledge of the underlying metric but the
  requested subset of the players is {\em not known} and is
  activated either in an adversarial manner (adversarial model) or is
  drawn from a known probability distribution (stochastic model).

  In the adversarial model, the goal of the designer is to choose a single,
  {\em universal} cost-sharing protocol that has low Price of Anarchy
  (PoA) for {\em all possible} requested subsets of players. The main
  question we address is: {\em to what extent can prior knowledge of
    the underlying metric help in the design?}

  We first demonstrate that there exist classes of graphs where
  knowledge of the underlying metric can dramatically improve the
  performance of good network cost-sharing design. For {\em
    outerplanar} graph metrics, we provide a universal cost-sharing
  protocol with constant PoA, in contrast to protocols that, by
  ignoring the graph metric, cannot achieve PoA better than
  $\Omega(\log k)$.  Then, in our main technical result, we show that
  there exist graph metrics, for which knowing the underlying metric
  does not help and {\em any universal} protocol has PoA of
  $\Omega(\log k)$, which is tight. We attack this problem by
  developing new techniques that employ powerful tools from extremal
  combinatorics, and more specifically Ramsey Theory in high
  dimensional hypercubes.
	
  Then we switch to the stochastic model, where each player is
  independently activated according to some probability distribution
  that is known to the designer. We show that there exists a {\em
    randomized} ordered protocol that achieves constant PoA. By using
  standard derandomization techniques, we produce a {\em
    deterministic} ordered protocol that achieves constant PoA. We
  remark, that the first result holds also for the {\em black-box}
  model, where the probabilities are not known to the designer, but is
  allowed to draw independent (polynomially many) samples.
	
\end{abstract}


\thispagestyle{empty}\setcounter{page}{0}
\clearpage \newpage

\section{Introduction}

\paragraph{Network Cost-Sharing Games.}
We study a {\em multicast game} in a rooted undirected graph $G=(V,E)$
with a nonnegative cost $c_e$ on each edge $e\in E$. A set of $k$ {\em
  terminal vertices} or {\em players} $s_1,\ldots, s_k$ need to
establish connectivity with the root $t$. Each player selects a path
$P_i$ and the outcome produced is the graph $H=\cup_iP_i$. The global
objective is to minimize the cost $\sum_{e\in H}c_e$ of this graph,
which is the {\em Minimum Steiner Tree}.

The cost of an edge may represent infrastructure cost for establishing
connectivity or renting expense, and needs to be covered by the
players that use that edge in the solution. There are several ways to
split the edge costs among the users and this is dictated by a {\em
  cost-sharing protocol}. Naturally, it is in the players best
interest to choose paths that charge them with small cost, and
therefore the solution will be a Nash equilibrium (NE).
Algorithmic Game
Theory provides tools to analyze the quality of the equilibrium
solutions; this can be measured with the Price of Anarchy
(PoA)~\cite{KP99} (or Price of Stability (PoS)~\cite{ADKTWR08}) that
compares the worst-case (or the best-case) cost in a Nash equilibrium
with the cost of the minimum Steiner tree.
This is a fundamental network design game that was originated by
Anshelevich et al.~\cite{ADKTWR08} and has been extensively studied
since. \cite{ADKTWR08} studied the {\em Shapley} cost-sharing protocol,
where the cost of each edge is equally split among its users. They
showed that the quality of equilibria can be really poor\footnote{Even for simple
networks the PoA grows linearly with the number of players, $k$. The
PoS is not well-understood. It is a big open question to determine its
exact value that is between constant and
$O(\log/\log\log k)$~\cite{Li09}.}.

\paragraph{Cost-Sharing Protocol Design.}
Different cost-sharing protocols result in different quality of
equilibria. In this work, we are interested in the design of protocols
that induce good equilibrium solutions in the {\em worst-case},
therefore we focus on protocols that guarantee low PoA. 
Chen, Roughgarden and Valiant~\cite{CRV10} were the first to
address design questions for network cost-sharing games. They gave a
characterization of protocols that satisfy some natural axioms and
they thoroughly studied their PoA for the following two classes of
protocols, that use different informational assumptions from the
perspective of the designer.
\vspace{-1mm}
\begin{description}
\item{\em Non-uniform protocols}. The designer has full knowledge of
  the instance, that is, she knows both the network topology given by
  $G$ and the costs $c_e$, and in addition the set of players'
  requests $s_1, \ldots, s_k$. They showed that a simple priority protocol has a
  constant PoA; the Nash equilibria induced by the protocol simulate
  Prim's algorithm for the Minimum Spanning Tree (MST) problem, and
  therefore achieve constant approximation.
\item{\em Uniform protocols}. The designer needs to decide how to
  split the edge cost among the users {\em without knowledge of the
    underlying graph}. They showed that the PoA is $\Theta(\log k)$;
  both upper and lower bound comes from the analysis of the Greedy
  Algorithm for the Online Steiner Tree problem (OSTP).
\end{description}

\paragraph{Cost-Sharing Design  under Uncertainty.}
Arguably, there are situations where the former assumption is too
optimistic while the latter is too pessimistic. 
We propose a model that lies in the middle-ground, as a framework to
design network cost-sharing protocols with good equilibria, when the
designer has {\em incomplete information}.

We assume that the designer has prior knowledge of the underlying
metric, (given by the graph $G$ and the shortest path metric induced
by the costs $c_e$), but is {\em uncertain} about the requested subset
of players. We consider two different models, the {\em adversarial
  model} and the {\em stochastic model}. In the former, the designer
{\em knows nothing} about the number or the positions of the $s_i$'s
and has as goal to process the graph and choose a single, {\em
  universal} cost-sharing protocol that has low PoA against {\em all
  possible} requested subsets. Here, no distributional assumptions are
made about arrivals of players, and take the worst-case approach similarly to
Competitive Analysis. Once the designer selects the protocol, then an
adversary will choose the requested subset of players and their
positions in the graph (the $s_i$'s), in a way that {\em maximizes}
the PoA of the induced game. In the stochastic model, the
players/vertices are activated according to some probability
distribution which is given to the designer. The goal is now to choose
a universal protocol where the expected worst-case cost in the Nash
equilibrium is not far from the expected optimal cost.

\begin{example}{\bf (Ordered protocols)}. An important special class
  with interesting properties is that of {\em ordered protocols}. The
  designer decides a total order of the users, and when a subset of
  players uses some edge, the full cost is covered by the player who
  comes first in the order. Any NE of the induced game corresponds to
  the solution produced by the Greedy Algorithm for the MST: each
  player is connected, via a shortest path, with the component of the
  players that come before him in the order. The analysis of the PoA
  in the uniform model boils down to the analysis of the Greedy
  Algorithm for the OSTP, where the worst-case order is considered.
  The following example demonstrates that even this special class of
  ordered protocols becomes very rich, once the designer has prior
  knowledge of the underlying metric space.  Uniform protocols throw
  away this crucial component, the structure of the underlying metric,
  that universal protocols can use in their favor to come up with
  better PoA guarantees.

{\footnotesize
\begin{figure}[t]
\centering
\begin{tabular}{c c c}
\includegraphics[scale=0.3]{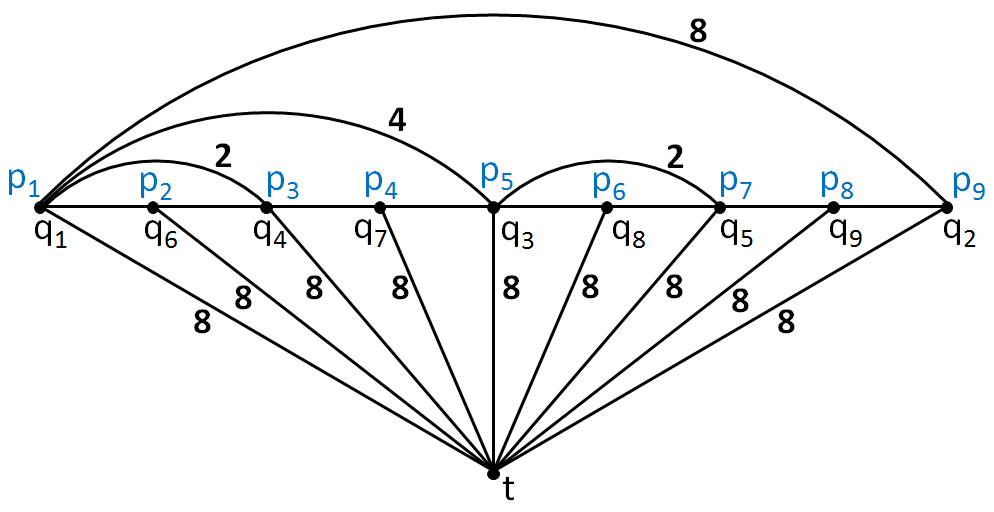} &
\includegraphics[scale=0.38]{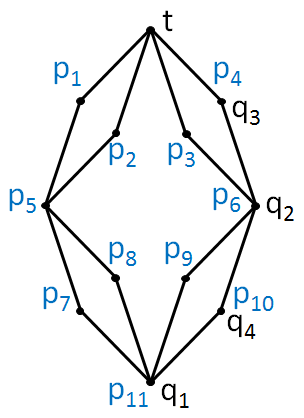} &
\includegraphics[scale=0.35]{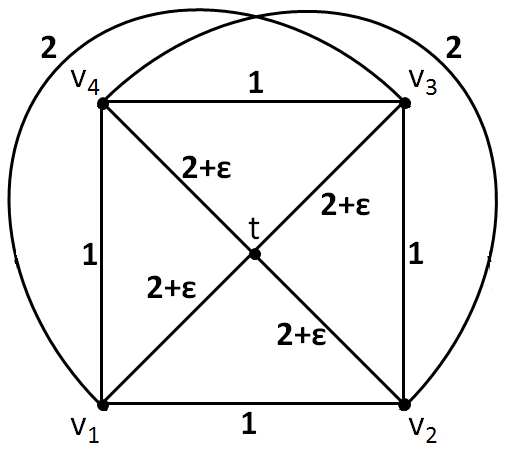}\\
$(a)$&$(b)$&$(c)$
\end{tabular}
 \caption{\footnotesize In $(a)$ and $(b)$ we assume two orders on the vertices,
  denoted by $q_i$ or $p_i$. The $q$-order is adversarially chosen and
  simulates the adversary for the OSTP~\cite{IW91}, that
  results to high PoA of $\Omega(\log k)$. The $p$-order results to
  constant PoA. $(c)$ shows an example where both the {\em best}
  ordered protocol and the Shapley protocol have PoA $\geq 5/4$,
  whereas there is an {\em intermediate} protocol with PoA $1$. In edges 
	with no written cost, we consider the unit cost; we take $\varepsilon > 0$ 
	arbitrarily small.}
     \label{fig:line_diam_sq}
\end{figure}
}
\vspace{-1mm}
\begin{description}
\item{Uniform protocols}. The designer chooses an order of the players
  $1,\ldots, k$ without prior knowledge of the graph. The adversary
  constructs a worst-case graph, by simulating the adversary for the
  Greedy Algorithm of the OSTP~\cite{IW91}, and places the players accordingly
  (see for example Figure~\ref{fig:line_diam_sq}(a),(b), the $q$
  labels). Therefore the PoA of uniform ordered protocol is $\Omega
  (\log k)$~\cite{CRV10}.
\item{Universal protocols}. The designer takes into account the graph;
  consider the worst-case graph for the Greedy Algorithm of the OSTP
  (illustrated in Figure~\ref{fig:line_diam_sq}(a),(b) for a small
  number of players). For the graph of
  Figure~\ref{fig:line_diam_sq}(a), choose the linear order dictated
  from the path $p_1,\ldots, p_9$ (say from left to right). For the
  graph of Figure~\ref{fig:line_diam_sq}(b) order the vertices
  according to their distance from $t$, $p_1,\ldots, p_{11}$. The
  adversary will choose $k$ and the positions of the players
  ($s_1,\ldots, s_k$). In both cases, it is not hard to see that, {\em
    no matter which subset of players the adversary chooses}, the PoA
  remains constant as $k$ grows.
\end{description}
\vspace{-2mm}

\label{ex:order}
\end{example}

\begin{example}{\bf (Generalized weighted Shapley).}
  In \cite{CRV10}, it was shown that ordered protocols are essentially
  optimal among uniform protocols. In our model, the choice of the
  optimal method may depend on the underlying graph metric. Take the
  example in Figure~\ref{fig:line_diam_sq}(c). By using Shapley cost sharing 
	the adversary can choose $v_1,v_2,v_3$ and in the Nash equilibrium $v_1$, 
	$v_3$ connect directly to $t$ and $v_2$ connects through $v_1$. Regarding 
	{\em any} ordered protocol, the square defined by the $v_i$'s
  contains a path of length $2$ where the middle vertex comes {\em
    last} in the order. The adversary will select this triplet of
  players, say $v_1,v_2,v_3$. In the Nash equilibrium, $v_1$ connects 
	directly to $t$, $v_3$ and $v_2$ connect through $v_1$.  
	In both cases, the cost of the Nash equilibria is $5$ and the minimum Steiner tree 
	that connects those vertices with $t$ has cost $4$ (by ignoring $\varepsilon$) 
	and therefore, PoA $\geq 5/4$. 

        However the following (generalized Shapley) protocol, has
        $PoA=1$. Partition the players into two sets
        $S_1=\{v_1,v_2\},S_2=\{v_3,v_4\}$. If players from both
        partitions appear on some edge, then the cost is charged only
        to players from $S_1$.  Players that belong to the same
        partition share the cost equally. One can verify that for all
        possible subsets of players this protocol produces only
        optimal equilibria.
\label{ex:GWS}
\end{example}

\paragraph{Results.}
We propose a framework for the design of (universal) network
cost-sharing protocols with good equilibria, in situations where the
designer has incomplete information about the input.  We consider two
different models, the {\em adversarial} and the {\em stochastic}. In
both models, the designer has prior knowledge of the underlying metric
but the requested subset of the players is {\em not known} and is
activated either in an adversarial manner (adversarial model) or is
drawn from a known probability distribution (stochastic model). The
central question we address is: {\em to what extent does prior
  knowledge of the metric help in good network design under
  uncertainty?}

For the adversarial model, we first demonstrate that there exist
classes of graph metrics where prior knowledge of the underlying
metric can dramatically improve the performance of good network
cost-sharing design. For {\em outerplanar} graph metrics, we provide a
universal ordered cost-sharing protocol with constant PoA, against any
choice of the adversary.  This is in contrast to uniform protocols
that ignore the graph and cannot achieve PoA better than $\Omega (\log
k)$ in outerplanar metrics.

Our main technical result shows that there exist graph metrics, for
which knowing the underlying metric does not help the designer, and
{\em any universal} protocol has PoA of $\Omega(\log k)$. This matches
the upper bound of $O(\log k)$ that can be achieved without prior
knowledge of the metric~\cite{IW91,CRV10}. 

Then we switch to the stochastic model, where each player is
independently activated according to some probability that is known to
the designer. We show that there exists a {\em randomized} ordered
protocol that achieves constant PoA. By using standard
derandomization techniques~\cite{WZ07,ST08}, we produce a {\em
  deterministic} ordered protocol that achieves constant PoA. We
remark, that the first result holds also for the {\em black-box}
model, where the probabilities are not known to the designer, but is
allowed to draw independent (polynomially many) samples.

Our results for the adversarial model motivate the following question
that is left open.
 
\noindent {\bf Open Question:} For which metric spaces can one design
universal protocols with constant PoA? What sort of structural graph
properties are needed to obtain good guarantees?

\vspace{5pt}
\paragraph{Techniques.} 
We prove our main lower bound for the adversarial model in two parts. In the first part
(Section~\ref{generalLB}) we bound the PoA achieved by
{\em any ordered} protocol. 
Our origin is a well-known ``zig-zag'' {\em ordered} structure that
has been used to show a lower bound on the Greedy Algorithm of the
OSTP (see the labeled path $(q_1,q_6,q_4,\ldots,q_2)$ in
Figure~\ref{fig:line_diam_sq}(a)). The challenge is to show that high
dimensional hypercubes exhibit such a distance preserving structure
{\em no matter how the vertices are ordered}. Section~\ref{generalLB} is devoted to this and we believe that this is
of independent interest.
	 
We show the existence proof by employing powerful tools from Extremal
Combinatorics and in particular Ramsey Theory~\cite{Ram90}. We are
inspired by a Ramsey-type result due to Alon et al.~\cite{AlonRSV06},
in which they show that for any given length $\ell\geq 5$, any
$r$-edge coloring of a high dimensional hypercube contains a
monochromatic cycle of length $2\ell$. Unfortunately, we cannot
immediately use their results, but we show a similar Ramsey-type
result for a different, carefully constructed structure; we assert
that every 2-edge coloring of high dimensional hypercubes $Q_n$
contains a monochromatic copy of that structure. Then, we prescribe a
special $2$-edge-coloring that depends on the ordering of $Q_n$, so
that the special subgraph preserves some nice labeling
properties. 
A suitable subset of the subgraph's vertices can be 1-embedded into a
hypercube of lower dimension. Recursively, we show existence of the
desired {\em distance preserving} structure.
 
In the second part (Section~4), we extend the lower bound to {\em all
  universal} cost-sharing protocols, by using the characterization of
\cite{CRV10}. At a high level, we use as basis the construction for
the ordered protocol and create ``multiple copies"\footnote{Note that
  the standard complexity measure, to analyze the inefficiency of
  equilibria, is the number of participants, $k$, and not the total
  number of vertices in the graph (see for example
  \cite{ADKTWR08,CRV10}).}. The adversary will choose different
subsets of players, depending on whether the designer chose protocols
``closer'' to Shapley or to ordered. In the latter case, we use
arguments from Matching Theory to guarantee existence of ordered-like
players in one of the hypercubes.

For the stochastic model (Section~\ref{sec:stochastic}), we construct
an approximate minimum Steiner tree over a subset of vertices which
are drawn from the known probability distribution. This tree is used
as a base to construct a spanning tree, which determines a total order
over the vertices. We finally produce a deterministic order by
applying standard derandomization techniques~\cite{WZ07,ST08}.


\paragraph{Related Work}
Following the work of \cite{ADKTWR08,ADTW08}, a long line of research
studies network cost-sharing games, mainly focusing on the PoS of the
Shapley cost-sharing mechanism. \cite{ADKTWR08} showed a tight
$\Theta(\log k)$ bound for directed networks, while for undirected
networks several variants have been studied \cite{BB11,BFM13,CR09,CCLPS09,DFKM13,FKLOS06,Li09,BCFM13} but the exact value of PoS
still remains a big open problem. For multicast games, an improved
upper bound of $O(\log k/\log\log k)$ is known due to Li~\cite{Li09},
while for broadcast games, a series of work \cite{FKLOS06,LL13} lead
finally to a constant due to Bil\`{o} et al.~\cite{BFM13}. The PoA of
some special equilibria has been also studied
in~\cite{CKMNS08,CCLNO07}.

Chen, Roughgarden and Valiant \cite{CRV10} initiated the study of
network cost-sharing design with respect to PoA and PoS. They
characterized a class of protocols that satisfy certain desired
properties (which was later extended by Gopalakrishnan, Marden and
Wierman, in \cite{GMW13}), and they thoroughly studied PoA and PoS for
several cases.  
Falkenhausen and Harks~\cite{FH13} studied parallel links and weighted
players while Gkatzelis, Kollias and Roughgarden~\cite{GKR14}, focus on weighted
congestion games with polynomial cost functions. 

Close in spirit to universal cost-sharing protocols is the notion of
Coordination Mechanisms~\cite{CKN09} that provides a way to improve the
PoA in cases of incomplete information. The designer has to decide in
advance local scheduling policies or increases in edge latencies,
without knowing the exact input, and has been used for scheduling
problems~\cite{CKN09,ILMS05,Kol08,AJM07,Caragiannis09,CCG+11,BIKM14,AbedCH14,AbedH12} as well as for simple routing
games~\cite{CMP14,BKM14}.
 
As discussed in
Example~\ref{ex:order}, the analysis of the equilibria induced by
ordered protocols corresponds to the analysis of Greedy Algorithm for
the MST. In the uniform model, this corresponds to the analysis of the
Greedy Algorithm~\cite{IW91,AAB04} for the (Generalized) OSTP \cite{AA93,BC97,Umb15},
which was shown to be $\Theta(\log k)$-competitive by
Imase and Waxman~\cite{IW91} ($O(\log^2 k)$-competitive for the Generalized OSTP by \cite{AAB04}). The universal
model is closely related to universal network design
problems~\cite{JLNRS05}, hence our choice for the term
``universal''. In the universal TSP, given a metric space, the
algorithm designer has to decide a {\em master} order so that tours
that use this order have good approximation~\cite{PB89,BG88,HKL06,GKSS10,JLNRS05}.

Much work has been done in stochastic models and we only mention the
most related to our work. Karger and Minkoff \cite{KM00} showed a
constant approximation guarantee for the maybecast problem, where the
designer needs to fix (before activation) some path for every vertex
to the root. Garg et al. \cite{GGLS08} gave bounds on the
approximation of the stochastic online Steiner tree problem.  A line
of works \cite{BJO90,GKSS10,SS08,ST08} has studied the {\em a priori
  TSP}. Shmoys and Talwar \cite{ST08} assumed independent activations
and demonstrated randomized and deterministic algorithms with constant
approximations.


\section{Model and definitions}
\label{sec:prel}

\paragraph{Universal Cost-Sharing Protocols.}
A {\em multicast network cost-sharing game}, is specified by a
connected undirected graph $G=(V,E)$, with a designated root $t$ and
nonnegative weight $c_e$ for every edge $e$, a set of players
$S=\{1,\ldots, k\}$ and a cost-sharing protocol. Each player $i$ is
associated with a {\em terminal\footnote{We abuse notation and use $S$
    to refer both to the players and their associated vertices.}}
$s_i$, which she needs to connect with $t$. We say that a vertex is
{\em activated} if there exists some requested player associated with
it.  In the {\em adversarial model} the designer {\em knows nothing}
about the set $S$ of activated vertices, while in the {\em stochastic
  model}, the vertices are activated according to some probability
distribution $\Pi$ which is known to the designer.

For any set $N$ of players, a {\em cost-sharing method}
$\xi:2^N\rightarrow \mathbb{R}_+^{|N|}$ decides, for every subset $R
\subseteq N$, the cost-share $\xi(i,R)$ for each player $i\in R$. A
natural rule is that the shares for players not included in $R$ should
always be $0$, i.e.  if $i \notin R$, $\xi(i,R) = 0$.  W.l.o.g. each
player is associated with a distinct vertex\footnote{ To see this, if
  there are two players with $s_1= s_2 = v$, for some $v\in V$, we
  modify the graph by connecting a new vertex $v'$ with $v$ via a
  zero-cost edge and then we set $s_1= v$ and $s_2 = v'$. Neither the
  optimum solution, nor any Nash equilibrium are affected by this
  modification.}.  For any graph $G$ and any set of players $N$, a
{\em cost-sharing protocol} $\Xi$ assigns, for every $e \in E$, some
cost-sharing method $\xi_e$ on $N$.

Following previous work~\cite{CRV10,FH13}, we focus on cost-sharing
protocols that satisfy the following natural properties:
\begin{enumerate}[label={(\arabic*)}]
\item {\em Budget-balance:} For every network
game induced by the cost sharing protocol $\Xi$, and 
every outcome of it, 
$\sum_{i \in R}\xi_e(i,R) = c_e$, for every edge $e$ with cost $c_e$.

\item {\em Separability:} For every network
game induced by the cost sharing protocol $\Xi$, the cost shares of each edge
are completely determined by the set of players using it.

\item {\em Stability:} In every network game induced by the cost-sharing 
protocol $\Xi$, there exists at least one pure Nash equilibrium, regardless of the graph structure.
\end{enumerate} 

We call a cost-sharing protocol $\Xi$ {\em universal}, if it satisfies
the above properties for any graph $G$, and it assigns the
cost-sharing method $\xi_e:2^V \rightarrow \mathbb{R}_+^{|V|}$ to edge $e$
based only on knowledge of $G$ (without knowledge of $S$\footnote{The methods should be defined on $V$, since every vertex
  is potentially associated with some player.}) for the adversarial
model, while in the stochastic the method can in addition depend on
$\Pi$. Due to the characterization in \cite{CRV10}, we restrict
ourselves to the family of generalized weighted Shapley
protocols\footnote{\cite{CRV10} characterizes the linear protocols
  (for every edge $e$ of cost $c_e \geq 0$, it assigns the method $c_e
  \cdot \xi$, where $\xi$ is the method it assigns to any edge of unit
  cost) to be the generalized weighted Shapley protocols. They further
  showed that for any non-linear protocol, there exists a linear one
  with at most the same PoA.}.

\paragraph{Generalized Weighted Shapley Protocol (GWSP).} 
The {\em generalized weighted Shapley protocol} ({\em GWSP}) is defined by 
the players' weights (parameters) $\{w_1,\ldots,w_n\}$ 
and an {\em ordered} partition of the players $\S = (U_1, \ldots, U_h)$. 
An interpretation of $\S$ is that for $i<j$, players from $U_i$ ``arrives" before players from $U_j$.
More formally, for every edge $e$ of cost $c_e$, every set of players $R_e$ that uses $e$ and 
for $s=\arg \min_j\{U_j|U_j \cap R_e \neq \emptyset\}$, the GWSP assigns the following method to $e$:
\[\xi_e(i,R_e)=\left\{ 
	\begin{array}{ll}
		\frac{w_i}{\sum_{j \in U_s \cap R_e} w_j} c_e, \;\; & \mbox{if } i \in U_s \cap R_e \\
		0,\;\; & \mbox{otherwise }
	\end{array}
\right.\]

In the special case that each $U_i$ 
contains exactly one player, the protocol is called {\em ordered}. The order of the $U_i$ sets 
indicates a permutation of the players, denoted by $\pi$. 

\paragraph{(Pure) Nash Equilibrium (NE).}  We denote by $\cP_i$ the
strategy space of player $i$, i.e. the set of all the paths connecting
$s_i$ to $t$.  $\P = (P_1, \ldots, P_k)$ denotes an {\em outcome} or a
{\em strategy profile}, where $P_i \in \cP_i$ for all $i \in S$. As
usual, $\P_{-i}$ denotes the strategies of all players but
$i$. 
Let $R_e$ be the set of players using edge $e \in E$ under $\P$. The
cost share of player $i$ induced by $\xi_e$'s is equal to $c_i(\P) =
\sum_{e \in P_i} \xi_e(i,R_e)$.  The players' objective is to minimize
their cost share $c_i(\P)$.  A strategy profile $\P = (P_1, \ldots,
P_k)$ is a {\em Nash equilibrium (NE)} if for every player $i \in S$
and every strategy $P'_i \in \cP_i$, $c_i(\P) \leq c_i(\P_{-i},P'_i).$
 
\paragraph{Price of Anarchy (PoA).} 
The cost of an outcome
$\P=(P_1,\ldots,P_k)$ is defined as $c(\P) = \sum_{e \in \cup_i P_i}
c_e$, while $\O=(O_1,\ldots,O_k) \in \arg \min_{\P} c(\P)$ is the
optimum solution.
The {\em Price of Anarchy (PoA)} is defined as the worst-case ratio of
the cost in a NE over the optimal cost in the game induced by $S$. In
the adversarial model the {\em worst-case} $S$ is chosen, while
in the stochastic model $S$ is drawn from a known distribution
$\Pi$. Formally, in the adversarial model we define the PoA of a
protocol $\Xi$ on $G$ as

$$PoA(G,\Xi) = \max_{\substack{S\subseteq V\setminus\{t\}}} \frac{\max_{\P \in
  \;\mathcal{N}} c(\P)}{c(\O)}, $$
where $\mathcal{N}$ is the
set of all NE of the game induced by $\Xi$ and $S$ on $G$.

In the stochastic model, the PoA of $\Xi$, given $G$ and
$\Pi$ is
$$ PoA(G,\Xi, \Pi) =  \frac{\E_{S\sim \Pi} \left[\max_{\P \in \;\mathcal{N}}
    c(\P)\right]}{\E_{S\sim \Pi}[c(\O)]}. $$

In both models the objective of the designer is to come up with
protocols that {\em minimize} the above ratios.
Finally, the Price of Anarchy for a class of graph metrics $\mathcal{G}$,
is defined as 
$$PoA(\mathcal{G}) = \max_{G\in \mathcal{G}} \min_{\Xi(G)} PoA(G,\Xi); \qquad 
PoA(\mathcal{G}) = \max_{\substack{G \in \mathcal{G}}} \min_{\Xi(G,\Pi)} \max_{\Pi}PoA(G,\Xi, \Pi). $$

\paragraph{Graph Theory.} For every graph $G$, we denote by $V(G)$ and $E(G)$ 
the set of vertices and edges of $G$, respectively. For any $v,u \in V(G)$, 
$(v,u)$ denotes an edge between $v$ and $u$ and $d_G(v,u)$ denotes the shortest distance 
between $v$ and $u$ in $G$; if $G$ is clear from the context, we simply write $d(v,u)$. 
A graph $G$ is an {\em induced} 
subgraph of $H$, if $G$ is a subgraph of $H$ and for every $v,u \in V(G)$, $(v,u) \in E(G)$ 
if and only if $(v,u) \in E(H)$. 
$G$ is a {\em distance preserving} ({\em isometric}) 
subgraph of $H$, if $G$ is a subgraph of $H$ and for every $v,u \in V(G)$, $d_G(v,u) = d_H(v,u)$.


\section{Lower Bound of Ordered Protocols}
\label{generalLB}

The main result of this section is that the PoA of {\em any} ordered
protocol is $\Omega(\log k)$ which is tight. We formally define
(Definition~\ref{def:badOrder}) the `zig-zag' pattern of the lower
bounds of the Greedy Algorithm of the OSTP (see
Example~\ref{ex:order}(a) and Figure~\ref{fig:badPath}). Then the main
technical challenge is to show that {\em for any} ordering of the
vertices of high dimensional hypercubes, there always exists a {\em
  distance preserving} path, such that the order of its vertices
follows that zig-zag pattern. Finally, by connecting any two vertices
of the hypercube with a direct edge of suitable cost, similar to the
example in Figure\ref{fig:line_diam_sq}(a), we get the final lower
bound construction.

\begin{definition}[Classes]
\label{def:badOrderSets}

For $r>0$, and for a path graph $P=(v_0,\ldots, v_{2^{r}})$ of $2^r+1$
vertices, we define a partition of the vertices into $r+1$ {\em
  classes}, $D_0,D_1,\ldots, D_r$, as follows: 
Class $0$ contains the endpoints of $P$, $D_0 = \{v_0,v_{2^r}\}$. For
every $j\in [r]$, $D_j=\{v_i|\left(i \mod 2^{r-j}\right) = 0 \mbox{
  and } \left(i \mod 2^{r-j+1}\right) \neq 0\}$.  For $v \in D_j$, $w
\in D_{j'}$ and $ j < j'$, we say that $v$ belongs to a {\em lower}
class than $w$ (and $w$ belongs to a {\em higher} class than $v$).
\end{definition}

As an example, consider the path $P=
(v_0,v_1,v_2,v_3,v_4,v_5,v_6,v_7,v_8)$, where $r=3$. Then, $D_0 =
\{v_0,v_8\}$, $D_1 = \{v_4\}$, $D_2 = \{v_2,v_6\}$ and $D_3 =
\{v_1,v_3,v_5,v_7\}$.  Note that always $|D_0 |=2$ and for $j \neq 0$,
$|D_j|=2^{j-1}$. 

For $j > 0$ and $v_i \in D_j$, we define the {\em parents} of $v_i$ as
$\Pi(v_i) = \{ w| d_P(v_i,w)=2^{r-j}\}$, i.e. the closest vertices
that belong to lower classes. Remark that for all $v\notin \{v_0,
v_{2^{r}}\}$ i) the cardinality of $\Pi(v)$ is $2$, ii) the vertices
of $\Pi(v)$ belong to lower classes than $v$, iii) all vertices
between $v$ and any vertex of $\Pi(v)$ belong to higher classes than
$v$.  We are now ready to define the ``zig-zag" pattern.

\begin{definition}[Zig-zag pattern]
\label{def:badOrder}
We call a path graph $P=(v_0,v_1, \ldots, v_{2^{r}})$, with distinct integer labels $\pi$, {\em
  zig-zag}, and we denote it by $P_r(\pi)$, if for every $i \notin \{0,{2^{r}}\}$, $\pi(w) <\pi(v_i)$ for all $w \in \Pi(v_i)$.  
\end{definition}
An example of such a path for $r=3$ is shown in
Figure~\ref{fig:badPath}. Our main result of this section is that
there exist graphs, high dimensional hypercubes, such that for any
order $\pi$, $P_r(\pi)$ always appears as a {\em distance preserving}
subgraph. Our proof is existential and uses Ramsey theory.

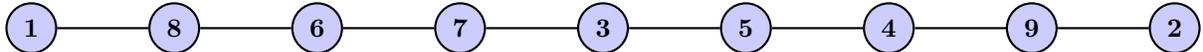
\begin{figure}[h]
\centering
\begin{tikzpicture}[xscale=1.9,yscale=0.2,decoration={markings, mark=between positions 0 and 1 step 10pt with { \draw [fill] (0,0) circle [radius=1.5pt];}}]
\node[draw,circle,thick,fill=blue!20,minimum width =0.5 cm] at (0,0) (v0)  {\small{$\mathbf{1}$}} ;
\node[draw,circle,thick,fill=blue!20,minimum width =0.5 cm] at (1,0) (v1)  {\small{$\mathbf{8}$}} ;
\node[draw,circle,thick,fill=blue!20,minimum width =0.5 cm] at (2,0) (v2)  {\small{$\mathbf{6}$}} ;
\node[draw,circle,thick,fill=blue!20,minimum width =0.5 cm] at (3,0) (v3)  {\small{$\mathbf{7}$}} ;
\node[draw,circle,thick,fill=blue!20,minimum width =0.5 cm] at (4,0) (v4)  {\small{$\mathbf{3}$}} ;
\node[draw,circle,thick,fill=blue!20,minimum width =0.5 cm] at (5,0) (v5)  {\small{$\mathbf{5}$}} ;
\node[draw,circle,thick,fill=blue!20,minimum width =0.5 cm] at (6,0) (v6)  {\small{$\mathbf{4}$}} ;
\node[draw,circle,thick,fill=blue!20,minimum width =0.5 cm] at (7,0) (v7)  {\small{$\mathbf{9}$}} ;
\node[draw,circle,thick,fill=blue!20,minimum width =0.5 cm] at (8,0) (v8)  {\small{$\mathbf{2}$}} ;
\path[draw, thick,-]
(v0) edge (v1)
(v1) edge (v2)
(v2) edge (v3)
(v3) edge (v4)
(v4) edge (v5)
(v5) edge (v6)
(v6) edge (v7)
(v7) edge (v8);
\end{tikzpicture}
\caption{An example of a $P_3(\pi)$ path. The numbers correspond to the labels.}
\label{fig:badPath}
\end{figure}

\vspace{5pt} \noindent{\bf Proof Overview:} The proof is by induction
and in the inductive step our starting point is the $n$-th dimensional
hypercube $Q_n$.  Given an ordering/labeling $\pi$ of the vertices of
$Q_n$ we first show that $Q_n$ contains a subgraph $W$ which is
isomorphic to a `pseudo-hypercube' $Q^2_m$ ($m<n$) where the labeling
of its vertices satisfies a special property (to be described
shortly).  $Q^2_m$ is defined by replacing each edge of $Q_m$ by a
2-edge path (of length two)\footnote{See $Q_m^2$ of
  Definition~\ref{def:Qn,s} and Figure~\ref{fig:Q2,4}a for an
  illustration}.

{\em Labeling property}: For the subgraph $W$ we require that all such
newly formed 2-edge paths, are $P_1(\pi)$ paths, i.e. the label
of the middle vertex is greater than the labels of the endpoints
(Figure~\ref{fig:Q2,4}(a) shows such a labeling).

Next, we contract all such 2-edge paths of $Q^2_m$ into single edges,
resulting in a graph isomorphic to $Q_m$; this is the hypercube used
for the next step. Note that each contracted edge still corresponds to
a path in $Q_n$. Therefore, after $r$ recursive steps, each edge
corresponds to a $2^r$ path of $Q_n$. Further, note that such a path
is a $P_r(\pi)$ path, due to the labeling property that we preserve at
each step. We require that, at the end of the last inductive step,
$Q_m=Q_1$ (a single edge), and (by unfolding it) we show that this edge
corresponds to a distance preserving subgraph of the original
graph/hupercube. At each step, $m<n$; the relation between $n$ and $m$
is determined by a Ramsey-type argument.
We next describe the basic ingredients that we use to show existence of
$W$. 
We apply a coloring scheme to the edges of $Q_n$ that depends on the
order of the vertices. 

{\em Coloring Scheme}: Consider $Q_n$ as a bipartite $Q_n=(A,B,E)$. 
For any edge $(v,u)$, with $v\in A$ and $u \in B$, if the $v$'s label 
is smaller than $u$'s, we paint the edge blue, otherwise we paint it red.

By a Ramsey-type argument we show that $Q_n$
has a {\em monochromatic} subgraph isomorphic to a specially defined
graph $G_m$; $G_m$ is carefully specified in such a way that it
contains at least two subgraphs isomorphic to pseudo-hypercubes
$Q^2_m$. The special property of those two subgraphs is described
next.

Let $H_1$ and $H_2$ be the two half cubes\footnote{The two half-cubes
  of order $n$ are formed from $Q_n$ by connecting all pairs of
  vertices with distance exactly two and dropping all other edges.}
of $Q_n$ and let $V(H_1) = A$ and $V(H_2)=B$.  Observe that if $Q^2_m$
is a subgraph of $Q_n$ then the corresponding $Q_m$ is an induced
subgraph of either $H_1$ or $H_2$. We carefully construct $G_m$ such
that it contains subgraphs $W_1$ and $W_2$ isomorphic to $Q^2_m$,
whose corresponding $Q_m$'s are induced subgraphs of $H_1$ and $H_2$,
respectively. The color of $G_m$ determines which of the $W_1$ and
$W_2$ will serve as the desired $W$. 
In particular, if the color is
blue, then for every edge $(v,u)$, with $v \in V(H_1)$ and $u \in
V(H_2)$, it should hold that $v$'s label 
is smaller than $u$'s and therefore the
labeling property is satisfied for $W_1$; similarly, if the color is
red, $W_2$ serves as $W$.
  
\vspace{5pt}

\noindent{\bf Proof Roadmap.} The whole proof of the lower bound
proceeds in several steps in the following sections.  In
Section~\ref{sec:Gm} we give the formal definition of the subgraph
$G_m$ of a hypercube $Q_n$.  Section~\ref{sec:ramsey} is devoted to
show that every $2$-edge coloring of a (suitably) high dimensional
hypercube contains a monochromatic copy of $G_m$
(Lemma~\ref{lem:monochrGm}), by using Ramsey theory.  Then, in
Section~\ref{sec:coloring} we show that, for any ordering of the
vertices of $Q_n$, we can define a special $2$-edge-coloring 
, so that there exists a $Q^2_m$
subgraph of $G_m$ that preserves the Labeling property
(Lemma~\ref{lem:isomToQ_m^2}).
At last, in Section~\ref{sec:LBconstruction}, by a recursive
application of the combination of the Ramsey-type result and the
coloring, we prove the existence of the {\em zig-zag} path in high
dimensional hypercubes (Theorem~\ref{thm:badPath}). We then show how
to construct a graph that serves as lower bound for all ordered
protocols (Theorem~\ref{thm:LBOrdered}). This is done by connecting
any two edges of the hypercube with a direct edge of appropriate cost,
similar to the example in
Figure~\ref{fig:line_diam_sq}(a).

\paragraph{Definitions and notation on Hypercubes.} We denote by
$[r,s]$ (for $r\leq s$) the set of integers $\{r,r+1,\ldots, s-1,s\}$,
but when $r=1$, we simply write $[s]$. We follow definitions and
notation of \cite{AlonRSV06}. Let $Q_n$ be the graph of the
$n$-dimensional hypercube whose vertex set is $\{0,1\}^n$. We
represent a vertex $v$ of $V(Q_n)$ by an $n$-bit string $x=\<
x_1\ldots x_n\> $, where $x_i\in \{0,1\}$.  By $\< xy\> $ or $xy$ we
denote the concatenation of an $r$-bit string $x$ with an $s$-bit
string $y$, i.e. $xy=\< x_1\ldots x_r y_1\ldots x_s\> $. $x=\< x_j\>
_{j=1}^r$ is the concatenation of its $r$ bits.  An edge is defined
between any two vertices that differ only in a single bit. We call
this bit, {\em flip-bit}, and we denote it by `$*$'. For example,
$x=\< 11100\> , y=\< 11000\> $ are two vertices of $Q_5$ and $(x,y)=\<
11*00\> $ is the edge that connects them. The distance between two
vertices $x,y$ is defined by their Hamming distance,
$d(v,u)=|\{j:x_j\neq y_j\}|$.  For a fixed subset of coordinates
$R\subseteq [n]$, we extend the definition of the distance as follows,
\begin{center}
$d(x,y,R) = \left\{
\begin{array}{l l}
d(x,y), &  \quad \mbox{if } \forall j\in [n] \setminus R,\; x_j=y_j  \\
\infty, &  \quad \mbox{otherwise.} \\
\end{array} \right.$ 
\end{center}

We define the {\em level} of a vertex $x$ by the number of `ones' it
contains, $\sum_{i=1}^nx_i$. We denote by $L_i$ the set of
vertices of level $i\in [0,n]$. We define the {\em prefix sum} of an
edge $e=(x,y)$, where the flip-bit is in the $j$-th coordinate, by
$p(e)=\sum_{i=1}^{j-1}x_i$. We represent any ordering $\pi$ of
$V(Q_n)$, by labeling the vertices with labels $1,\ldots, 2^n$, where
label $i$ corresponds to ranking $i$ in
$\pi$.  

\subsection{Description of $G_m$}
\label{sec:Gm}

For a positive integer $m$, we define a graph $G_m=(V_m,E_m)$ that is
a restriction of $Q_{4m}$ on $V_m=V_1\cup V_2 \cup V_3 \subseteq
V(Q_{4m})$.  A vertex of $V_1$ is defined by $2m-1$ concatenations of
pairs $\< 01\> $ and $\< 10\> $ and a single pair $\< 00\> $ that
appears in the {\em second} half of the string. A vertex of $V_2$ is
defined by $2m$ concatenations of $\< 01\> $ and $\< 10\> $.  A vertex
of $V_3$ is defined by $2m-2$ concatenations of $\< 01\> $ and $\<
10\> $, one pair $\< 11\> $ that appears on the first half of the
string, and one pair $\< 00\> $ that appears on the second half.  For
example, for $m=2$, $\< 01\;10\;00\;10\> \in V_1$, $\<
01\;10\;10\;10\> \in V_2$, $\< 01\;11\;10\;00\> \in V_3$.  More
formally, let $A = \{\< 01\> ,\< 10\> \}$, then the subsets $V_1, V_2,
V_3$ are defined as follows:
\begin{eqnarray*}
V_1:=V_1(m) =& \{\< a_j b_j\> _{j=1}^{2m}|& \exists i \in [m+1,2m] \mbox{ s.t. } \< a_i b_i\>  = \< 00\>  
\mbox{ and } \forall j\neq i, \< a_j b_j\>  \in A  \},\\
V_2:=V_2(m) =& \{\< a_j b_j\> _{j=1}^{2m}|&\forall j, \< a_j b_j\>  \in A \},\\
V_3:=V_3(m) = &\{\< a_j b_j\> _{j=1}^{2m}| & \exists i_1 \in [m], \exists i_2 \in [m+1,2m] \mbox{ s.t. } \\ 
&& \< a_{i_1} b_{i_1}\>  = \< 11\> , \;  \< a_{i_2} b_{i_2}\>  = \< 00\>  \mbox{ and } \forall j\neq i_1,i_2, \< a_j b_j\>  \in A\}.
\end{eqnarray*}
Observe that $G_m$ is bipartite with vertex partitions $V_1$ and
$V_2\cup V_3$, as vertices of $V_1$ belong to level $2m-1$, while vertices of
$V_2\cup V_3$ to level $2m$.

\begin{lemma}
\label{lem:1ConnectionofHd}
Every pair of vertices $x,x' \in V_1(m)$ with $d(x,x',[2m])=2$, have a
unique common neighbor $y \in V_3(m)$. Also, every pair of vertices $x,x'
\in V_2(m)$, with $d(x,x',[2m+1,4m])=2$, have a unique common neighbor $y \in
V_1(m)$.
\end{lemma}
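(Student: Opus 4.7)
The plan is a direct case analysis resting on a single structural observation: the two elements of $A=\{\langle 01\rangle,\langle 10\rangle\}$ differ in \emph{both} of their bits, so any two $A$-structured pair-vectors at Hamming distance exactly $2$ must realize that distance inside a single pair slot. This observation immediately channels the analysis.

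I would first record the levels of the three vertex classes: $V_1$ sits at level $2m-1$ while $V_2$ and $V_3$ both sit at level $2m$. Since $Q_{4m}$ is bipartite by level parity, any common neighbor of two same-level vertices lies one level above or below, which, together with the pair-structure constraints defining each $V_i$, will cleanly eliminate any spurious candidate.

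For the first statement, take $x,x'\in V_1$ with $d(x,x',[2m])=2$. Because $x,x'$ agree on $[2m+1,4m]$, they share their unique $\langle 00\rangle$ pair at a common second-half slot $i_0\in[m+1,2m]$, and their two differing bits therefore lie in the first-half pair slots, where both carry values in $A$. The structural observation forces these two differing bits into a single first-half slot $i^{*}\in[m]$ at which the restrictions of $x$ and $x'$ to slot $i^{*}$ are the two elements of $\{\langle 01\rangle,\langle 10\rangle\}$. Any common neighbor $y$ must agree with both $x$ and $x'$ outside slot $i^{*}$ and take either $\langle 00\rangle$ or $\langle 11\rangle$ at $i^{*}$. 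The $\langle 00\rangle$ choice lands at level $2m-2$, so is not in $V_m$; the $\langle 11\rangle$ choice yields a level-$2m$ vertex with exactly one $\langle 11\rangle$ pair in the first half, exactly one $\langle 00\rangle$ pair in the second half, and all remaining pairs in $A$, which is precisely the defining form of $V_3$. Existence and uniqueness both follow.

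The second statement is symmetric. For $x,x'\in V_2$ agreeing on $[1,2m]$, the same argument concentrates the two differing bits into a single second-half slot $i^{*}\in[m+1,2m]$, and the two candidate common neighbors arise by setting pair $i^{*}$ to $\langle 00\rangle$ or $\langle 11\rangle$. The $\langle 11\rangle$ candidate sits at level $2m+1$ and, in any case, carries a $\langle 11\rangle$ in the second half, so matches none of $V_1,V_2,V_3$; the $\langle 00\rangle$ candidate is at level $2m-1$ with one $\langle 00\rangle$ pair in the second half and all other pairs in $A$, which is exactly $V_1$. I do not foresee any real obstacle here: once the single-pair concentration is pinned down, the rest is just matching the two candidates against the defining patterns of the $V_i$.
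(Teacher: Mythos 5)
Your proof is correct and follows essentially the same route as the paper's: both pin down the single pair slot in which $x$ and $x'$ differ (since elements of $A$ differ in two bits, a Hamming distance of $2$ within the $A$-structured region must be concentrated in one slot) and then read off the common neighbor in the appropriate $V_i$. Your version is slightly more explicit about uniqueness, in that you examine both candidate common neighbors and rule one out by level/pattern, whereas the paper leaves this implicit; this is a harmless elaboration, not a different method.
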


\begin{proof}
  Recall that (by definition) if $d(x,x',R)\neq \infty$ then $x,x'$ should coincide in
  {\em all but} the $R$ coordinates.  For the first statement, observe
  that the premises of the Lemma hold only if there exists $s \in [m]$
  such that $x_{2s-1}x_{2s}=\< 10\> $ and $x'_{2s-1}x'_{2s}=\< 01\> $
  (or the other way around), in which case the required vertex $y$
  from $V_3(m)$ has $y_{2s-1}y_{2s}=\< 11\> $; the rest of the bits
  are the same among $x,x',y$.  For the second statement, the premises
  of the Lemma hold only if there exists an $s \in [m+1,2m]$ such that
	$x_{2s-1}x_{2s}=\< 10\> $ and $x'_{2s-1}x'_{2s}=\< 01\> $ (or the
  other way around), in which case the required vertex $y$ from $V_1(m)$
  has $y_{2s-1}y_{2s}=\< 00\> $ and the rest of the bits are the same among $x,x',y$.
\end{proof}

\subsection{Ramsey-type Theorem}
\label{sec:ramsey}
\begin{lemma}
\label{lem:monochrGm}
For any positive integer $m$, and for sufficiently large $n \geq
n_0=g(m)$, any 2-edge coloring $\chi$ of $Q_n$, contains a
monochromatic copy of $G_{m}$\footnote{The result could be extended to
  any (fixed) number of colors, but we need only two for our application.}.
\end{lemma}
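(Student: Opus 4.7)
The plan is to establish the lemma by an iterative Ramsey argument whose engine is the Hales--Jewett / Graham--Rothschild theorem, in the spirit of~\cite{AlonRSV06}. I would fix $n=4N$ and view each vertex of $Q_n$ as a string of $2N$ pairs over the alphabet $A'=\{\<00\>,\<01\>,\<10\>,\<11\>\}$. In this representation, a copy of $G_m$ is produced by designating $2m$ ``active'' pair-coordinates ($m$ among the first $N$ positions and $m$ among the last $N$), fixing each of the remaining $2N-2m$ pair-coordinates to some value in $\{\<01\>,\<10\>\}$, and letting the active coordinates range over the patterns that define $V_1(m), V_2(m), V_3(m)$. The task thus reduces to choosing the active coordinates together with a background assignment so that every edge of the embedded $G_m$ receives the same color under $\chi$.

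The Ramsey step exploits that every edge of $G_m$ is a single bit-flip within one active pair-coordinate, and that its ``local type'' belongs to a bounded set: the flip is either in the first or second half of the string, uses either the first or second bit of the pair, and realizes one of the four transitions $\<00\>\leftrightarrow\<01\>$, $\<00\>\leftrightarrow\<10\>$, $\<01\>\leftrightarrow\<11\>$, $\<10\>\leftrightarrow\<11\>$. I would define an auxiliary coloring of pair-coordinates $s\in[2N]$ where the color of $s$ records the $\chi$-colors of all local bit-flip types at $s$, indexed by background assignments from $\{\<01\>,\<10\>\}$ on every other pair-coordinate. The number of auxiliary colors is bounded as a function of $m$ alone, so applying Graham--Rothschild to the product structure $(A')^{2N}$ with this palette yields, for $N$ sufficiently large as a function of $m$, a combinatorial subspace of dimension at least $4m$ on which the auxiliary coloring is constant. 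Splitting the coordinates of this monochromatic subspace into their first and second halves and selecting $m$ of each gives $2m$ active pair-coordinates whose induced embedded $G_m$ is monochromatic, which pins down $g(m)$.

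The main obstacle is to arrange that a single Ramsey application monochromatizes \emph{all} edge types of $G_m$ simultaneously (both the $V_1\!\leftrightarrow\!V_2$ edges arising from $\<00\>$-flips in the second half and the $V_1\!\leftrightarrow\!V_3$ edges arising from $\<11\>$-flips in the first half, as witnessed by Lemma~\ref{lem:1ConnectionofHd}), uniformly over the chosen background assignment. This is what forces coloring higher-arity combinatorial objects rather than bare edges, and it dictates the (tower-type) dependence of $g(m)$ on $m$ that Graham--Rothschild necessarily entails. A secondary subtlety is to preserve the required first-half/second-half split when carving the monochromatic subspace: this is addressed by running Graham--Rothschild separately on the two halves (or, equivalently, by insisting at the outset that the Ramsey argument produce a subspace respecting the coordinate partition $[N]\cup[N+1,2N]$), after which a pigeonhole step selects the desired active coordinates on each side.
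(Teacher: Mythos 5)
There is a genuine gap in your Ramsey step that the structure of your auxiliary coloring cannot avoid. You propose to color each pair-coordinate $s\in[2N]$ with a record of the $\chi$-colors of all local bit-flip types at $s$, \emph{indexed by background assignments from $\{\<01\>,\<10\>\}$ on every other pair-coordinate}, and you then assert that this palette is bounded as a function of $m$ alone. It is not: the number of background assignments is $2^{2N-1}$, so the auxiliary palette has size at least $2^{2^{2N-1}}$, which grows with $N$. Graham--Rothschild (like any Ramsey statement) only yields a subspace of the desired dimension when the number of colors is fixed independently of the dimension of the ambient cube, so you cannot let $N\to\infty$ and still get a nontrivial conclusion. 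The deeper problem is that the color of a generic hypercube edge truly depends on the full background, and nothing in your setup suppresses that dependence.

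The paper circumvents exactly this issue with a structural trick you are missing: it restricts attention to the set $E^*$ of edges \emph{between the two consecutive levels} $L_{4m-1}$ and $L_{4m}$. Any such edge has $4m-1$ ones, one flip-bit, and all remaining coordinates zero, so it is uniquely determined by the pair $(R_e,p(e))$, where $R_e\subset[n]$ is its set of $4m$ non-zero coordinates and $p(e)\in[0,4m-1]$ is the prefix sum (number of ones before the flip-bit). This collapses the color to a function of $(R_e,p(e))$, which in turn induces a coloring of the $4m$-uniform complete hypergraph on $[n]$ with only $2^{4m}$ colors -- a palette bounded in $m$ alone. Classical hypergraph Ramsey then gives a monochromatic $6m$-subset $U$, pigeonhole on the $4m$ prefix values gives $2m$ prefixes of a single color, and the embedding of $G_m$ is obtained by interleaving blocks of ones into the coordinates of $U$ and zero-filling the rest. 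If you want to rescue your plan, you would need an analogous fact: that for the edges you actually use, the color is determined by data of size bounded in $m$. Fixing the background to $\{\<01\>,\<10\>\}^{2N-2m}$ does \emph{not} do this, because distinct backgrounds give distinct vertices at high levels and there is no a priori reason $\chi$ should agree across them; replacing the background by the all-zero/all-one pattern at the right levels is precisely the simplification the paper exploits. As a side note, even if you resolved the palette issue, Graham--Rothschild is heavier machinery than needed -- the two-level restriction reduces the problem to ordinary hypergraph Ramsey.
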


\begin{proof}
  The proof follows ideas of Alon et al.~\cite{AlonRSV06}. Consider a
  hypercube $Q_n$, with sufficiently large $n > 6m$ to be determined
  later, and some arbitrary $2$-edge-coloring
  $\chi:E(Q_n)\rightarrow\{1,2\}$. Let $E^*$ be the set of edges
  between vertices of $L_{4m-1}$ and $L_{4m}$ (recall that $L_i = \{
  v| w(v) = i\}$).

Each edge $e \in E^*$ contains $4m-1$ 1's, a flip-bit represented by $*$ and 
the rest of the coordinates are $0$. Moreover, $e$ is uniquely
determined by its $4m$ non-zero coordinates $R_e\subseteq [n]$ 
and its prefix sum $p(e)\in [0,4m-1]$ (number of $1'$s before the flip-bit). 
Therefore, the color $\chi(e)$ defines a coloring of the pair
$(R_e,p(e))$, i.e. $\chi(e) = \chi(R_e,p(e))$. For each subset
$R\subset [n]$ of
$4m$ coordinates, we denote by $c(R)=(\chi(R,0),...,\chi(R,4m-1))$ the
color induced by the edge coloring. The coloring of all subsets $R$
defines a coloring of the complete $4m$-uniform hypergraph of $[n]$
using $2^{4m}$ colors.

By Ramsey's Theorem for hypergraphs~\cite{Ram90}, there exists
$n_0=g(m)$ such that for any $n\geq n_0$ there exists some subset $U
\subset [n]$ of size $6m$ such that all $4m$-subsets $R \subset U$
have the same color $c(R) = c^*$. Therefore, for every $R_1, R_2
\subset U$ and $p \in [0,4m-1]$, it is
$\chi(R_1,p)=\chi(R_2,p)=c_p$. Since $p$ takes $4m$ values and there
are only two different colors, there must exist $2m$ indices
$p_0,\ldots, p_{2m-1} \in [0,4m-1]$ with the same color $\chi(R,p_i)
= \chi^*$, for all $R \subset U$, $|R| = 4m$ and $i \in [0,2m-1]$.

It remains to show that the graph formed with the edges that are
determined by those prefix sums, contains a monochromatic copy of
$G_m$. We will show this by constructing those edges from $E_{m}$ (the
set of edges of $G_{m}$). By inserting blocks of $1$'s
of suitable length among the bits of the edges of $E_m$, we 
construct the bits at the coordinates of $U$. The rest of the bits ($n-|U|$) are set to
zero.

Let $1^{r}$ be a string of $r$ 1's and define $\beta_i =
1^{p_i-p_{i-1}-1}$ for $i \in [2m-1]$, $\beta_0=1^{p_0}$ and
$\beta_{2m} = 1^{4m-1-p_{2m-1}}$. For any edge $e = \< a_jb_j\> _j \in
E_m$, we insert $\beta _0$ at the beginning of the string, for $j \in
[m]$ we insert $\beta_{j}$ between $a_j$ and $b_j$ and for $j \in
[m+1,2m]$ we insert the string $\beta_{j}$ after $b_j$.
Recall that each edge of $E_m$ contains exactly $2m$ zero bits.  Also
notice that $\sum_j |\beta_j| = p_0 +
\sum_{i=1}^{2m-1}\left(p_i-p_{i-1}-1\right)+ 4m-1-p_{2m-1} =
-(2m-1)+4m-1=2m.$ Therefore, in total we have $6m$ bits (same as the
size of $U$) and $4m$ non-zero bits (same as the size of $R$). These
$6m$ bits are put precisely at the coordinates of $U$. The rest $n-6m$
of the coordinates are filled with zeros.

It remains to show that for such edges the prefix of the flip-bit is
{\em always} one of the $p_0, \ldots , p_{2m-1}$. This would imply
that all these edges are monochromatic. Furthermore, all but $4m$
coordinates are fixed and the $4m$ coordinates form exactly the sets
$V_1(m), V_2(m), V_3(m)$; therefore, the monochromatic subgraph is
isomorphic to $G_m$.

For any edge $e = \< a_jb_j\> _j \in E_m$, let the flip-bit be at position: 
\begin{itemize}
\item $a_j$ for $j \in [m]$. Its prefix is $\sum_{i=0}^{j-1}\beta_i +
  (j-1) = p_{j-1}$, where the term $j-1$ corresponds to the number of
  pairs $\< a_sb_s\> $ with $s < j$, each of which contributes to the
  prefix with a single $1$.
\item $b_j$ for $j \in [m]$. Since $j \leq m$, $a_j = 1$. Then the
  prefix equals to $\sum_{i=0}^{j}\beta_i + (j-1) +1 = p_j$.
\item $a_j$ or $b_j$ for $j \in [m+1,2m]$. For such $j$, $\< a_jb_j\> 
  \in  \{ \< 0* \>  , \< *0 \> \}$  and all other pairs belong to $A$.
  Therefore, the prefix is equal to $\sum_{i=0}^{j-1}\beta_i + (j-1) =
  p_{j-1}$.
\end{itemize}
\end{proof}

\subsection{Coloring based on the labels}
\label{sec:coloring}
This part of the proof shows that for any ordering of the vertices of
a hypercube $Q_n$, there is a $2$-edge coloring with the following
property: {\em in the monochromatic $G_m$, either all the vertices of $V_1$ 
  or all the vertices of $V_2$  have neighbors in $G_m$ with only higher
  label.}  This implies a desired labeling property for a $Q_m^2$
subgraph of $Q_n$, the structure of which is defined next.

\begin{definition}
\label{def:Qn,s}
We define $Q_n^s$ to be a subdivision of $Q_n$, by replacing each edge
by a path of length $s$. $Q_n^1$ is simply $Q_n$.
We denote by $Z(Q^s_n)$ the set of all pairs of vertices $(x,x')$,
which correspond to edges of $Q_n$; $P(x,x')$ is the corresponding
path in $Q^s_n$. For every $(x,x') \in Z(Q_m^2)$, we denote by
$\theta(x,x')$ the middle vertex of $P(x,x')$.

\begin{figure}[h]
\centering
\begin{tabular}{c c}

\begin{tikzpicture}[xscale=0.25,yscale=0.25,decoration={markings, mark=between positions 0 and 1 step 10pt with { \draw [fill] (0,0) circle [radius=1.5pt];}}]

\node[draw,circle,thick,fill=black,inner sep=1.4pt,radius=0.25pt,label={[label distance=-4pt]below left:\scriptsize{$1$}}] at (2,0) (v1) {} ;
\node[draw,circle,thick,fill=black,inner sep=1.4pt,radius=0.25pt,label={[label distance=-4pt]below left:\scriptsize{$4$}}] at (2,8) (v2) {} ;
\node[draw,circle,thick,fill=black,inner sep=1.4pt,radius=0.25pt,label={[label distance=-4pt]below right:\scriptsize{$12$}}] at (10,8) (v3) {} ;
\node[draw,circle,thick,fill=black,inner sep=1.4pt,radius=0.25pt,label={[label distance=-4pt]below right:\scriptsize{$8$}}] at (10,0) (v4) {};
\node[draw,circle,thick,fill=black,inner sep=1.4pt,radius=0.25pt,label={[label distance=-4pt]above left:\scriptsize{$11$}}] at (0,2) (v5) {} ;
\node[draw,circle,thick,fill=black,inner sep=1.4pt,radius=0.25pt,label={[label distance=-4pt]above left:\scriptsize{$2$}}] at (0,10) (v6) {} ;
\node[draw,circle,thick,fill=black,inner sep=1.4pt,radius=0.25pt,label={[label distance=-4pt]above right:\scriptsize{$5$}}] at (8,10) (v7) {} ;
\node[draw,circle,thick,fill=black,inner sep=1.4pt,radius=0.25pt,label={[label distance=-4pt]above right:\scriptsize{$3$}}] at (8,2) (v8) {} ;

\node[draw,circle,thick,fill=blue,inner sep=1.4pt,radius=0.25pt,label={[label distance=-2pt]left:\scriptsize{$19$}}] at (2,4) (v12) {} ;
\node[draw,circle,thick,fill=blue,inner sep=1.4pt,radius=0.25pt,label={[label distance=-2pt]below:\scriptsize{$13$}}] at (6,8) (v23) {} ;
\node[draw,circle,thick,fill=blue,inner sep=1.4pt,radius=0.25pt,label={[label distance=-2pt]right:\scriptsize{$20$}}] at (10,4) (v34) {} ;
\node[draw,circle,thick,fill=blue,inner sep=1.4pt,radius=0.25pt,label={[label distance=-2pt]below:\scriptsize{$10$}}] at (6,0) (v41) {} ;
\node[draw,circle,thick,fill=blue,inner sep=1.4pt,radius=0.25pt,label={[label distance=-4pt]below left:\scriptsize{$15$}}] at (1,1) (v15) {} ;
\node[draw,circle,thick,fill=blue,inner sep=1.4pt,radius=0.25pt,label={[label distance=-5pt]below left:\scriptsize{$6$}}] at (1,9) (v26) {} ;
\node[draw,circle,thick,fill=blue,inner sep=1.4pt,radius=0.25pt,label={[label distance=-4pt]above right:\scriptsize{$14$}}] at (9,9) (v37) {};
\node[draw,circle,thick,fill=blue,inner sep=1.4pt,radius=0.25pt,label={[label distance=-5pt]above right:\scriptsize{$9$}}] at (9,1) (v48) {};
\node[draw,circle,thick,fill=blue,inner sep=1.4pt,radius=0.25pt,label={[label distance=-2pt]left:\scriptsize{$18$}}] at (0,6) (v56) {};
\node[draw,circle,thick,fill=blue,inner sep=1.4pt,radius=0.25pt,label={[label distance=-2pt]above:\scriptsize{$17$}}] at (4,10) (v67) {};
\node[draw,circle,thick,fill=blue,inner sep=1.4pt,radius=0.25pt,label={[label distance=-2pt]right:\scriptsize{$7$}}] at (8,6) (v78) {};
\node[draw,circle,thick,fill=blue,inner sep=1.4pt,radius=0.25pt,label={[label distance=-2pt]above:\scriptsize{$16$}}] at (4,2) (v85) {} ;

\path[postaction={decorate}] (2.5,0) to (3.5,0);

\path[draw, thick,-]
(v12) edge (v1)
(v12) edge (v2)
(v23) edge (v2)
(v23) edge (v3)
(v34) edge (v3)
(v34) edge (v4)
(v41) edge (v4)
(v41) edge (v1)
(v15) edge (v1)
(v15) edge (v5)
(v26) edge (v2)
(v26) edge (v6)
(v37) edge (v3)
(v37) edge (v7)
(v48) edge (v4)
(v48) edge (v8)
(v56) edge (v5)
(v56) edge (v6)
(v67) edge (v6)
(v67) edge (v7)
(v78) edge (v7)
(v78) edge (v8)
(v85) edge (v8)
(v85) edge (v5);
\end{tikzpicture}
\qquad\qquad
&
\qquad\qquad
\begin{tikzpicture}[xscale=0.3,yscale=0.3,decoration={markings, mark=between positions 0 and 1 step 10pt with { \draw [fill] (0,0) circle [radius=1.5pt];}}]

\node[draw,circle,thick,fill=black,inner sep=1.4pt,radius=0.5pt,label={[label distance=-2pt]left:\scriptsize{$1$}}] at (0,0) (v0) {} ;
\node[draw,circle,thick,fill=blue,inner sep=1.4pt,radius=0.5pt,label={[label distance=-2pt]left:\scriptsize{$12$}}] at (0,2) (v1) {} ;
\node[draw,circle,thick,fill=blue,inner sep=1.4pt,radius=0.5pt,label={[label distance=-2pt]left:\scriptsize{$5$}}] at (0,4) (v2) {} ;
\node[draw,circle,thick,fill=blue,inner sep=1.4pt,radius=0.5pt,label={[label distance=-2pt]left:\scriptsize{$6$}}] at (0,6) (v3) {} ;
\node[draw,circle,thick,fill=black,inner sep=1.4pt,radius=0.5pt,label={[label distance=-2pt]left:\scriptsize{$2$}}] at (0,8) (v4) {} ;
\node[draw,circle,thick,fill=blue,inner sep=1.4pt,radius=0.5pt,label={[label distance=-2pt]below:\scriptsize{$7$}}] at (2,8) (v5) {} ;
\node[draw,circle,thick,fill=blue,inner sep=1.4pt,radius=0.5pt,label={[label distance=-2pt]below:\scriptsize{$4$}}] at (4,8) (v6) {} ;
\node[draw,circle,thick,fill=blue,inner sep=1.4pt,radius=0.5pt,label={[label distance=-2pt]below:\scriptsize{$9$}}] at (6,8) (v7) {} ;
\node[draw,circle,thick,fill=black,inner sep=1.4pt,radius=0.5pt,label={[label distance=-2pt]right:\scriptsize{$3$}}] at (8,8) (v8) {} ;
\node[draw,circle,thick,fill=blue,inner sep=1.4pt,radius=0.5pt,label={[label distance=-2pt]right:\scriptsize{$14$}}] at (8,6) (v9) {} ;
\node[draw,circle,thick,fill=blue,inner sep=1.4pt,radius=0.5pt,label={[label distance=-2pt]right:\scriptsize{$13$}}] at (8,4) (v10) {};
\node[draw,circle,thick,fill=blue,inner sep=1.4pt,radius=0.5pt,label={[label distance=-2pt]right:\scriptsize{$16$}}] at (8,2) (v11) {};
\node[draw,circle,thick,fill=black,inner sep=1.4pt,radius=0.5pt,label={[label distance=-2pt]right:\scriptsize{$8$}}] at (8,0) (v12) {};
\node[draw,circle,thick,fill=blue,inner sep=1.4pt,radius=0.5pt,label={[label distance=-2pt]above:\scriptsize{$15$}}] at (6,0) (v13) {};
\node[draw,circle,thick,fill=blue,inner sep=1.4pt,radius=0.5pt,label={[label distance=-2pt]above:\scriptsize{$10$}}] at (4,0) (v14) {};
\node[draw,circle,thick,fill=blue,inner sep=1.4pt,radius=0.5pt,label={[label distance=-2pt]above:\scriptsize{$11$}}] at (2,0) (v15) {};


\path[draw, thick,-]
(v0) edge (v1)
(v1) edge (v2)
(v2) edge (v3)
(v3) edge (v4)
(v4) edge (v5)
(v5) edge (v6)
(v6) edge (v7)
(v7) edge (v8)
(v8) edge (v9)
(v9) edge (v10)
(v10) edge (v11)
(v11) edge (v12)
(v12) edge (v13)
(v13) edge (v14)
(v14) edge (v15)
(v15) edge (v0);

\end{tikzpicture}\\
(a)\qquad\qquad & \qquad\qquad(b)

\end{tabular}
\caption{Examples of (a) $Q_3^2$ and (b) $Q_2^4$. The labels on the nodes are examples of the labeling 
property, (a) after one inductive step, (b) after two inductive steps.}
\label{fig:Q2,4}
\end{figure}
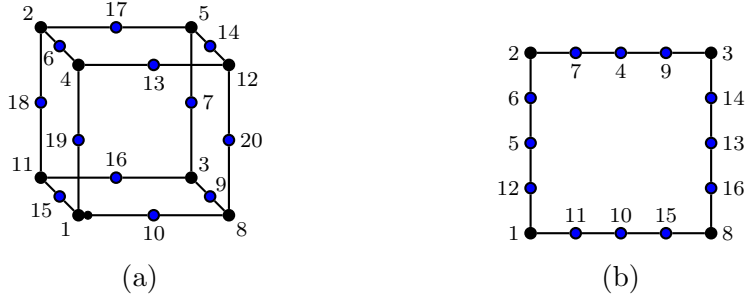 

\end{definition}

In the next lemma we show that for any ordering of the vertices of
$Q_n$, there exists a subgraph isomorphic to $Q_m^2$, such that the `middle'
vertices have higher label than their neighbors (Labeling Property). 

\begin{lemma}
\label{lem:isomToQ_m^2}
For any positive integer $m$, for all $n\geq n_0=g(m)$ and for any
ordering $\pi$ of $V(Q_n)$, there exists a subgraph $W$ of $Q_n$
that is isomorphic to $Q_m^2$, such that for every $(x,x') \in Z(W)$, 
it is $\pi(\theta(x,x')) >\max\{ \pi(x),\pi(x')\}$.
\end{lemma}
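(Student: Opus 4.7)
The plan is to combine Lemma~\ref{lem:monochrGm} with a $2$-edge-coloring of $Q_n$ that encodes the ordering $\pi$. Since $Q_n$ is bipartite with parts $A$ (even-level vertices) and $B$ (odd-level vertices), I color each edge $(v,u)$, with $v\in A$ and $u\in B$, blue if $\pi(v)<\pi(u)$ and red otherwise. Lemma~\ref{lem:monochrGm} then yields, for $n\geq n_0=g(m)$, a monochromatic copy $G'$ of $G_m$ inside $Q_n$. Note that $V_1\subseteq L_{2m-1}$ while $V_2\cup V_3\subseteq L_{2m}$, so the two sides of $G'$ lie in opposite parts of the bipartition $(A,B)$.

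Next I would exhibit, inside $G'$, two candidate subgraphs isomorphic to $Q_m^2$. For the first candidate $W_1$, fix a position $i\in[m+1,2m]$ and an assignment from $\{\langle 01\rangle,\langle 10\rangle\}$ for the remaining second-half pairs; this yields $V_1^{\ast}\subseteq V_1$ of size $2^m$, parameterized by the first half. Two vertices of $V_1^{\ast}$ that differ in exactly one first-half pair are at $[2m]$-distance $2$, so Lemma~\ref{lem:1ConnectionofHd} supplies a unique common neighbor in $V_3$. Adding these as middle vertices produces $W_1$; contracting each length-$2$ path recovers a copy of $Q_m$ on $V_1^{\ast}$, so $W_1\cong Q_m^2$. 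Symmetrically, fix the first half of a $V_2$ vertex and let $V_2^{\ast}\subseteq V_2$ be the $2^m$ vertices obtained by varying the second half; by Lemma~\ref{lem:1ConnectionofHd}, every pair of $V_2^{\ast}$-vertices differing in one second-half pair has a unique common neighbor in $V_1$, giving $W_2\cong Q_m^2$ with main vertices in $V_2$ and middle vertices in $V_1$.

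The key point is that the middle vertices of $W_1$ and $W_2$ sit on opposite sides of the level bipartition: $W_1$'s middles are in $V_3\subseteq L_{2m}$, while $W_2$'s middles are in $V_1\subseteq L_{2m-1}$. The color of $G'$ selects which candidate satisfies the labeling property. If $G'$ is blue, every $G'$-edge $(v,u)$ with $v$ of even level and $u$ of odd level satisfies $\pi(v)<\pi(u)$; in $W_2$ the mains are even-level and the middles are odd-level, so $\pi(\theta(x,x'))>\max\{\pi(x),\pi(x')\}$ for every $(x,x')\in Z(W_2)$, and $W=W_2$ works. If $G'$ is red, the inequalities flip and $W=W_1$ works, since the middle vertices of $W_1$ are the (now higher-labeled) even-level ones.

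Given the machinery already in place, there is no remaining serious obstacle: the main design choice, namely the inclusion of the auxiliary set $V_3$ in $G_m$, was made precisely so that the monochromatic $G'$ houses two $Q_m^2$ candidates whose middle vertices sit on opposite sides of the bipartition, letting a clean case analysis on the color of $G'$ close the argument.
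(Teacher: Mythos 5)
Your proposal is correct and is essentially the paper's proof: the same parity-based $2$-edge-coloring (just with the blue/red convention swapped), the same invocation of Lemma~\ref{lem:monochrGm}, the same extraction of two $Q_m^2$ candidates by fixing half the coordinates in $V_1$ or $V_2$ and using Lemma~\ref{lem:1ConnectionofHd} to supply the middle vertices, and the same case analysis on the color of the monochromatic copy. The only cosmetic slip is stating $V_1\subseteq L_{2m-1}$, $V_2\cup V_3\subseteq L_{2m}$ (the levels inside $Q_{4m}$) rather than $L_{4m-1}$, $L_{4m}$ (the levels of the embedded copy inside $Q_n$ as produced by the Ramsey argument), but this does not affect the argument since only the parity matters.
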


\begin{proof}
  Choose a sufficiently large $n\geq n_0=g(m)$ as in
  Lemma~\ref{lem:monochrGm}. Partition the vertices of $Q_{n}$ into
  sets $\mathcal{O}, \mathcal{E}$ of vertices of odd and even level,
  respectively.
  We color the edges of $Q_n$ as follows. For every edge $e=(z,z')$
  with $z\in \mathcal{O}$ and $z'\in \mathcal{E}$, if $\pi(z) <
  \pi(z')$, then paint $e$ blue. Otherwise paint it red. Therefore,
  for every blue edge, the endpoint in $\mathcal{O}$ has smaller label
  than the endpoint in $\mathcal{E}$. The opposite holds for any red
  edge.

  Lemma~\ref{lem:monochrGm} implies that $Q_n$ contains a
  monochromatic copy (blue or red) of $G_{m}$. Recall that $G_m$ is
  bipartite between vertices of levels $L_{4m-1}$ and $L_{4m}$ and
  that $V_1 \subset L_{4m-1} \subset \mathcal{O} $ and $V_2 \cup V_3
  \subset L_{4m} \subset \mathcal{E}$.
  Let $R\subset [n]$ be the subset of the $4m$ coordinates that
  correspond to vertices of $G_m$. Also let $R_1$ and $R_2$ be the
  subsets of the first $2m$ and the last $2m$ coordinates of $R$,
  respectively.

  First suppose that $G_m$ is blue. An immediate implication of our
  coloring is that for every edge $(z,z') \in E_m$ with $z \in V_1$,
  $z' \in V_2 \cup V_3$ it must be $\pi(z) < \pi(z')$. Fix a $2m$-bit
  string $s$ that corresponds to a permissible bit assignment to the
  $R_2$ coordinates of some vertex in $V_1$ (see Section
  \ref{sec:Gm}). Define $W_s$ as the subset of vertices of $V_1$ where
  the $R_2$ coordinates are set to $s$.
  Recall that each of the first $m$ pairs $\< a_jb_j\> $, $j \in [m]$,
  of a vertex $z\in W_s$, may take any of the two bit assignments $\<
  01\> $ and $\< 10\> $. Hence, $|W_s|=2^m$.

  Observe that we can embed $W_s$ into $Q_m$ with distortion 1 and
  scaling factor $1/2$, by mapping the first $m$ pairs of bits into
  single bits; map $\< 01\> $ to $0$ and $\< 10\> $ to $1$. Every two
  vertices with distance $d$ in $Q_m$, have distance $2d$ in $Q_n$.
For every $x,x' \in W_s\subset V_1$ with $d(x,x')=2$,  
Lemma~\ref{lem:1ConnectionofHd} implies that there exists
$y=\theta(x,x')\in V_3$, such that $d(x,y) = d(x',y) = 1$. Therefore,
$\pi(y) > \max\{\pi(x),\pi(x')\}$. Take the union $Y=\cup_y$ of
all such vertices $y$, then $W_s\cup Y$ induces a subgraph $W$
isomorphic to $Q_m^2$, that fulfills the labeling requirements.

The case of $G_m$ being red is similar. We focus only on vertices
$V_2$. Fix now a $2m$-bit string $s$ that corresponds to a
permissible bit assignment of the $R_1$ coordinates of a vertex in
$V_2$.  Define $W_s$ as the subset of vertices of $V_2$ where the
$R_1$ coordinates are set to $s$. Similarly, we can embed $W_s$ into
$Q_m$ with distortion 1 and scaling factor $1/2$.

For every $x,x' \in W_s\subset V_2$ with $d(x,x')=2$, where the $R_1$
coordinates are fixed to $s$, Lemma~\ref{lem:1ConnectionofHd} implies
that there exists $y=\theta(x,x')\in V_1$, such that $d(x,y) = d(x',y)
= 1$. Therefore, $\pi(y) > \max\{\pi(x),\pi(x')\}$. Take the union
$Y=\cup_y$ of all such vertices $y$, then $W_s\cup Y$ induces a
subgraph $W$ isomorphic to $Q_m^2$, that fulfills the labeling
requirements.
\end{proof}

\subsection{Lower Bound Construction}
\label{sec:LBconstruction}
Now we are ready to prove the main theorem of this section. 

\begin{theorem}
\label{thm:badPath}
For every positive integer $r$, and for sufficiently large $n\geq
n(r)$, there exists a graph $Q_n$ such that, for {\em every} ordering $\pi$
of its vertices, it contains a zig-zag {\em distance preserving} path
$P_r(\pi)$.
\end{theorem}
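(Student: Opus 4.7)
The plan is to prove the theorem by induction on $r$, using Lemma~\ref{lem:isomToQ_m^2} as the engine of each step. Define the dimension threshold recursively by $n(1) = g(1)$ and $n(r) = g(n(r-1))$, where $g$ is the Ramsey-type function from Lemma~\ref{lem:monochrGm}.

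For the base case $r=1$, apply Lemma~\ref{lem:isomToQ_m^2} with $m=1$: the resulting $W \cong Q_1^2$ is a 2-edge path $(v_0, v_1, v_2)$ with $\pi(v_1) > \max\{\pi(v_0), \pi(v_2)\}$, which is exactly a $P_1(\pi)$. Here $v_0, v_2$ are at $Q_n$-distance $2$ and $v_1$ is their common neighbor, so the path is also distance preserving.

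For the inductive step, assume the theorem for $r-1$, fix an ordering $\pi$ of $V(Q_n)$ with $n \geq n(r)$, and apply Lemma~\ref{lem:isomToQ_m^2} with $m = n(r-1)$ to obtain a subgraph $W \subset Q_n$ isomorphic to $Q_m^2$, whose ``$Q_m$-part'' $W_s$ embeds into an abstract $Q_m$ with scaling factor $1/2$. Restricting $\pi$ to $W_s$ induces an ordering on $V(Q_m)$, and by the inductive hypothesis there is a distance preserving zig-zag path $P_{r-1}(\pi|_{W_s}) = (v'_0, v'_1, \ldots, v'_{2^{r-1}})$ in $Q_m$. Unfold it into $Q_n$ by inserting, between each consecutive pair $(v'_i, v'_{i+1})$, the middle vertex $\theta_i := \theta(v'_i, v'_{i+1}) \in V_3$ supplied by $W$. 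The resulting path is $P_r := (v'_0, \theta_0, v'_1, \theta_1, \ldots, v'_{2^{r-1}})$, of length $2^r$.

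Two properties of $P_r$ require verification. For the \emph{zig-zag pattern}, the odd-indexed vertices $\theta_i$ fill the class $D_r(P_r)$, and the labeling property of Lemma~\ref{lem:isomToQ_m^2} immediately gives $\pi(\theta_i) > \max\{\pi(v'_i), \pi(v'_{i+1})\}$; for the even-indexed vertex $v'_i$, a direct index rescaling (replace index $i'$ in $P_{r-1}$ by $2i'$ in $P_r$) shows that $D_j(P_r)$ for $0 \leq j \leq r-1$ consists exactly of the vertices of $D_j(P_{r-1})$, with parents at the correspondingly rescaled positions, so the zig-zag condition transfers verbatim from the inductive hypothesis. For \emph{distance preservation}, the key observation is that each edge of $P_r$ flips exactly one bit of $Q_n$: the two edges $v'_i \to \theta_i \to v'_{i+1}$ flip the two bits of the pair at position $j_i \in [m]$ where $v'_i$ and $v'_{i+1}$ differ in $Q_m$ (recalling that the middle has pattern $\langle 11 \rangle$ at this pair by Lemma~\ref{lem:1ConnectionofHd}). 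Since $P_{r-1}$ is distance preserving in $Q_m$, the coordinates $j_0, j_1, \ldots, j_{2^{r-1}-1}$ are pairwise distinct, hence the $2^r$ bits flipped along $P_r$ are all distinct; therefore the Hamming distance in $Q_n$ between any two vertices of $P_r$ equals the number of flips between them, which is their path distance.

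The main obstacle I anticipate is the distance preservation step. It requires carefully tracking the $\langle 01 \rangle, \langle 10 \rangle, \langle 11 \rangle$ pair encoding of Lemma~\ref{lem:isomToQ_m^2} through the recursion, and recognizing that the inductive hypothesis ``distance preserving in the abstract $Q_m$'' is precisely what forces the pair indices $j_i$ in $Q_n$ to be pairwise distinct, which in turn pins down the Hamming distance. Once this bridge is set up, constructing the graph for Theorem~\ref{thm:LBOrdered} follows by adding direct shortcut edges of appropriate cost between any two hypercube vertices, analogously to Figure~\ref{fig:line_diam_sq}(a).
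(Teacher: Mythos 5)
Your proof is correct and is essentially the same as the paper's: the paper's Claim~\ref{cl:mathInduction} runs the identical recursion by iterating a parameter $i$ from $0$ to $r$ inside the fixed $Q_{n_0}$, whereas you phrase it as induction on $r$, applying Lemma~\ref{lem:isomToQ_m^2} once to descend to the abstract $Q_m$ (with $m=n(r-1)$) and invoking the hypothesis there --- the two formulations coincide, with your $n(r)$ equal to the paper's $n_0$. Your distance-preservation argument (the $2^r$ flipped coordinates are pairwise distinct, since distance preservation of $P_{r-1}$ in $Q_m$ forces the $j_i$'s to be distinct and the $\langle 01\rangle,\langle 10\rangle$ encoding sends distinct $Q_m$-coordinates to disjoint coordinate pairs of $Q_n$) is a slightly more explicit rendering of the paper's, which pins the endpoints' Hamming distance at $2^r$ and then rules out shortcuts by the triangle inequality.
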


\begin{proof}
  Let $g$ be a function as in Lemma~\ref{lem:monochrGm}. We
  recursively define the sequence $n_0, n_1, \ldots, n_r$, such that
  $n_r = 1$ and $n_{i-1} = g(n_{i})$, for $i \in [r]$.   
  We will show that $Q_{n_0}$ is the graph we are looking for.

\begin{claim}
\label{cl:mathInduction}
For every $i \in [0,r]$, and for any vertex ordering $\pi$ of
$Q_{n_0}$, it contains a subgraph isomorphic to $Q_{n_i}^{2^i}$, such
that for every $(x,x') \in Z(Q_{n_i}^{2^i})$, $P(x,x')$ is a distance 
preserving path isomorphic to $P_i(\pi)$.
\end{claim}

\begin{proof}
  The proof is by induction on $i$. As a base case, $Q_{n_0}^{2^0} =
  Q_{n_0}$ is the graph itself. An edge is trivially a path
  $P_0(\pi)$, for any $\pi$.
  Suppose now that $Q_{n_0}$ contains a subgraph isomorphic to
  $Q_{n_i}^{2^i}$, for some $i<r$, such that for every $q \in
  Z(Q_{n_i}^{2^i})$, $P(q)$ is a path $P_i(\pi)$.  It is sufficient to
  show that $Q_{n_i}^{2^i}$ contains a subgraph isomorphic to
  $Q_{n_{i+1}}^{2^{i+1}}$, such that for every $q \in
  Z(Q_{n_{i+1}}^{2^{i+1}})$, $P(q)$ is a path $P_{i+1}(\pi)$.

  For every $(x,x')\in Z(Q_{n_i}^{2^i})$, if we replace $P(x,x')$ with
  a direct edge $e=(x,x')$, the resulting graph is a copy of
  $Q_{n_i}$.  Applying Lemma~\ref{lem:isomToQ_m^2} on $Q_{n_i}$,
  guarantees the existence of a subgraph $W$ isomorphic to
  $Q_{n_{i+1}}^2$ ($n_i = g(n_{i+1})$), where for every $(y,y') \in
  Z(W)$, $\pi(\theta(y,y')) > \max\{\pi(y),\pi(y')\}$. Each of the
  edges $(y,\theta(y,y'))$ and $(y',\theta(y,y'))$ of $Q_{n_{i+1}}^2$
  are replaced by a path $P_i(\pi)$ in $Q_{n_i}^{2^i}$. Therefore, $W$
  is a copy of $Q_{n_{i+1}}^{2^{i+1}}$, with $P(y,y')$ being a path
  $P_{i+1}(\pi)$.
\end{proof}

We now argue that the resulting $P_r(\pi)$ is a distance preserving
path.  Our analysis indicate a sequence of hypercubes $Q_{n_0},
Q_{n_1}, \ldots, Q_{n_r}$.  Recall that in Lemma
\ref{lem:isomToQ_m^2}, in order to get $Q_{n_{i+1}}$ from $Q_{n_{i}}$
we mapped $\< 01\> $ to $0$ and $\< 10\> $ to $1$ and the vertices of
$Q_{n_{i+1}}$ did not differ in any other bit but the ones we mapped.
Consider now the two vertices $x,x'$ of $Q_{n_r}=Q_1$ with bit-strings
$\< 0\> $ and $\< 1\> $, respectively. Their Hamming distance in their
original bit representation (in $Q_{n_0}$) should be $2^{r}$, the same
with their distance in $P_{r}(\pi)$. Moreover, if any two vertices of
$P_{r}(\pi)$ are closer in $Q_{n_0}$ than in $P_{r}(\pi)$, then this
would contradict the fact that $d_{Q_{n_0}}(x,x')=2^r$.
\end{proof}

Finally we extend $Q_n$ so that for any order $\pi$ of its vertices, a path $P_r(\pi)$ exists 
along with the shortcuts as shown in the example in Figure~\ref{fig:line_diam_sq}(a).

\begin{theorem}
\label{thm:LBOrdered}
Any ordered universal 
cost-sharing protocol on undirected graphs admits a PoA of
$\Omega(\log k)$, where $k$ is the number of activated vertices.
\end{theorem}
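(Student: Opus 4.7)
The plan is to convert the combinatorial obstruction of Theorem~\ref{thm:badPath} into a cost-sharing instance by adding shortcut edges and a root to $Q_n$, in analogy with Figure~\ref{fig:line_diam_sq}(a). Fix a target $r$, pick $n\geq n(r)$, and build $G$ on vertex set $V(Q_n)\cup\{t\}$ with two families of edges: (i) for every pair $u,v\in V(Q_n)$, a shortcut of cost $d_{Q_n}(u,v)$, so that the original unit-cost hypercube edges are retained and the $G$-metric on $V(Q_n)$ is still the Hamming metric; and (ii) for every $v\in V(Q_n)$, an edge $(t,v)$ of cost $L:=2^r+1$. Any ordered universal protocol is a permutation $\pi$ of $V(Q_n)$, and Theorem~\ref{thm:badPath} furnishes a zig-zag distance-preserving path $P_r(\pi)=(v_0,\ldots,v_{2^r})$ in $Q_n$; the adversary activates precisely the $k:=2^r+1$ vertices of this path.

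Since consecutive vertices of $P_r(\pi)$ are adjacent in $Q_n$, the optimum Steiner tree connects all activated vertices through the $2^r$ unit-cost hypercube edges of the path and attaches the result to $t$ by one $L$-cost edge, so the optimum cost is at most $2^r+L<2^{r+2}$.

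To lower-bound the Nash equilibrium cost, recall (as in Example~\ref{ex:order}) that any NE of an ordered protocol is realised by processing the activated vertices in the order $\pi$, each newcomer routing along its cheapest path into the current tree. The zig-zag property puts $v_0,v_{2^r}$ first and forces every $v_i\in D_j$ to arrive after both parents in $\Pi(v_i)\subseteq D_{<j}$. The key claim is that every such $v_i$ with $j\geq 1$ is at distance exactly $2^{r-j}$ from the current tree: the distance-preserving property of $P_r(\pi)$ gives $d_G(v_i,\Pi(v_i))=2^{r-j}$; property (iii) of the remark following Definition~\ref{def:badOrderSets} ensures every path vertex strictly between $v_i$ and a parent sits in a higher class than $v_i$ and so has not yet been processed, forbidding any closer activated vertex; and the choice $L>2^{r-j}$ rules out a cheaper detour through $t$. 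Together with $L$ paid by the first endpoint and $2^r$ paid by the second (whose shortcut to the first strictly undercuts its $t$-edge), the NE cost is at least $L+2^r+\sum_{j=1}^r |D_j|\cdot 2^{r-j}=L+2^r+r\cdot 2^{r-1}$. Therefore $\mathrm{PoA}\geq 1+\frac{r\cdot 2^{r-1}}{L+2^r}=1+\Omega(r)=\Omega(\log k)$.

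The only step requiring more than bookkeeping is the distance claim in the previous paragraph; this is why we include the shortcuts of (i) (so that the parent-to-child cost is exactly $2^{r-j}$, rather than the possibly larger path cost through unprocessed Steiner vertices) and carefully tune $L$ in (ii) (large enough that $t$ never undercuts the shortcut to a parent, but small enough, $L=O(2^r)$, that it does not inflate $\mathrm{OPT}$). The matching $O(\log k)$ upper bound from \cite{IW91,CRV10} then yields asymptotic tightness.
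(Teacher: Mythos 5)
Your construction and argument mirror the paper's proof: augment $Q_{n(r)}$ with shortcut edges realizing the hypercube metric plus direct edges to $t$, activate the zig-zag path $P_r(\pi)$ from Theorem~\ref{thm:badPath}, and compare the NE cost (each class-$D_j$ vertex pays $2^{r-j}$ to a parent) against the Steiner-path optimum. One substantive point in your favor: the paper states the $t$-edge cost as $2^k$, which appears to be a typo for $2k$ (the Section~4 construction uses $2k$); a $t$-edge of cost $2^k=2^{2^r+1}$ would dominate both the NE and the optimum and drive the PoA to $1$, whereas your choice $L=2^r+1=k$ is on the correct $\Theta(2^r)$ scale, and you also make explicit the NE verification (distance-preservation, parents arriving earlier, $L$ not undercutting shortcuts) that the paper leaves implicit.
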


\begin{proof}
  Let $k=2^r+1$ for some positive integer $r$. From
  Theorem~\ref{thm:badPath}, we know that for any vertex ordering
  $\pi$ of $Q_{n(r)}$ there is a distance preserving path $P_r(\pi)$.

  We use $Q_{n(r)}$ as a basis to construct the weighted graph
  $\tilde{Q}_{n(r)}$ with vertex set $V(\tilde{Q}_{n(r)})=Q_{n(r)}\cup
  \{t\}$, where $t$ is the designated root. We connect every pair of
  vertices $x,y$ with a direct edge of cost $c_e=2^k$, if $t$ is one
  of its endpoints, otherwise its cost is $c_e=d_{Q_{n(r)}}(x,y)$
  (similar to Figure~\ref{fig:line_diam_sq}(a)).

  The adversary selects to activate the vertices of $P_r(\pi)$, and
  the lower bound follows; in the NE the players choose their direct
  edges to connect with one of their parents (see
  at the beginning of Section \ref{generalLB} for the term ``parent").
\end{proof}


\section{Lower Bound for all universal protocols}

In this section, we exhibit metric spaces for which no universal
cost-sharing protocol admits a PoA better than $\Omega (\log k)$.  Due
to the characterization of \cite{CRV10}, we can restrict ourselves in
generalized weighted Shapley protocols (GWSPs). We follow the notation
of \cite{CRV10}, and for the sake of self-containment we include here
the most related definitions and lemmas.

\subsection{Cost-Sharing Preliminaries}
A strictly positive function $f:2^{N} \rightarrow \mathbb{R}^{+}$ is
an {\em edge potential} on $N$, if it is strictly increasing, i.e. for
every $R \subset S \subseteq N$, $f(R) < f(S)$, and for every $S
\subseteq N$, $\sum_{i \in S} \frac{f(S)-f(S\setminus
  \{i\})}{f(\{i\})}=1.$ 
For simplicity, instead of $f(\{i\})$, we write $f(i)$. A cost-sharing
protocol is called {\em potential-based}, if it is defined by
assigning to each edge of cost $c$, the cost-sharing method $\xi$,
where for every $S \subseteq N$ and $i \in S$, $\xi(i,S) = c\cdot
\frac{f(S)-f(S\setminus \{i\})}{f({i})}.$

Let $\Xi_1$ and $\Xi_2$ be two cost-sharing protocols for {\em disjoint} sets of vertices  
$U_1$ and $U_2$, with methods $\xi_1$ and $\xi_2$, respectively. 
The {\em concatenation} of $\Xi_1$ and $\Xi_2$ 
is the cost sharing protocol $\Xi$ 
of the set $U_1 \cup U_2$, with method $\xi$ defined as  
	\begin{center}
  \begin{tabular}{l}
$\xi(i,S) = \left\{
\begin{array}{l l}
\xi_1(i,S \cap U_1) & \text{if } \;\; i \in U_1  \\
\xi_2(i,S) & \mbox{if } \;\; S \subseteq U_2\\
0 & \mbox{otherwise} \\
\end{array} \right.$ 
  \end{tabular}
\end{center}

Note that the concatenation of two protocols for disjoint sets of
vertices defines an order among these two sets. The GWSPs are
concatenations of potential-based protocols.

\begin{lemma} 
\label{lem:proximity}
(Lemma $4.10$ of \cite{CRV10}). Let $f$ be an edge potential on $N$ and 
$\xi$ the induced (by $f$) cost-sharing method, for unit costs. 
For $k \geq 1$ and a constant $\alpha$, with $1 \leq \alpha^{2k} \leq 1+k^{-3}$, 
let $S \subseteq N$ be a subset of vertices with $f(i) \leq \alpha f(j)$, 
for every $i,j \in S$. If $|S| \leq k$, then for any $i,j \in S$, 
$\xi(i,S) \leq \alpha (\xi(j,S)+2k^{-2}).$ 
\end{lemma}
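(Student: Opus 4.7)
The idea is to combine budget balance with the edge-potential identity, exploiting the fact that $\alpha^{2k}\leq 1+k^{-3}$ forces $\alpha$ to be extremely close to $1$ (so $\alpha-1=O(k^{-4})$), and then to convert the small multiplicative slack into the required $O(k^{-2})$ additive error. First I would rewrite everything in terms of marginal contributions. Setting $M_i:=f(S)-f(S\setminus\{i\})$, we have $\xi(i,S)=M_i/f(i)$, and the edge-potential identity on $S$ is precisely the budget-balance statement $\sum_{l\in S}M_l/f(l)=1$. Combined with $f(l)\in[f_{\min},\alpha f_{\min}]$ for all $l\in S$, this pins down $\sum_l M_l\in[f_{\min},\alpha f_{\min}]$, so the average marginal is at most $\alpha f_{\min}/|S|$ and the average cost share is $1/|S|\geq 1/k$.

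Next I would bound the pairwise deviation $|M_i-M_j|$ by induction on $|S|$, using the cross-partial symmetry
\[
M(l,S)-M(l,S\setminus\{i\})=M(i,S)-M(i,S\setminus\{l\}),
\]
which one verifies immediately by expanding both sides of $f(S)-f(S\setminus\{l\})-f(S\setminus\{i\})+f(S\setminus\{i,l\})$ in two ways. Applying the edge-potential identity on each subset $S\setminus\{l\}$ and noting that the hypothesis $f(i)\leq\alpha f(j)$ persists on every subset of $S$, the inductive hypothesis gives control on the $M(\cdot,S\setminus\{l\})$'s which, via the symmetry above, propagates to control on the $M(l,S)$'s. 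Translating back to $\xi(l,S)=M_l/f(l)$ and using $f(i)/f(j)\leq\alpha$ then contributes the outer multiplicative factor $\alpha$ appearing in the claimed inequality.

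The main obstacle is the error bookkeeping. Each inductive step can multiply the slack between marginals by roughly $\alpha^2$, so after at most $|S|\leq k$ steps the accumulated multiplicative error is bounded by $\alpha^{2k}\leq 1+k^{-3}$ by hypothesis. A multiplicative distortion of order $k^{-3}$ on a quantity whose typical magnitude is $1/k$ (the average cost share) amortizes to an additive distortion of order $k^{-2}$, which yields the $2k^{-2}$ slack in the conclusion; the constant $2$ absorbs the triangle-inequality step that combines the $\alpha$-factor coming from $f(i)/f(j)$ with the perturbation in $M$. The delicate part of the argument is precisely this bookkeeping: one must track the error linearly rather than multiplicatively across the $k$ recursive applications of the potential identity, which is exactly why the hypothesis is phrased as $\alpha^{2k}\leq 1+k^{-3}$ rather than a cleaner bound on $\alpha$ alone.
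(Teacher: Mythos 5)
The paper does not prove this statement at all; it is imported verbatim as ``Lemma 4.10 of \cite{CRV10}'', so there is no in-paper argument to compare against.

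Evaluating the sketch on its own merits: the preliminaries are fine. Writing $\xi(i,S)=M_i/f(i)$ with $M_i=f(S)-f(S\setminus\{i\})$, reading the potential identity as $\sum_{l\in S} M_l/f(l)=1$, and the cross-partial symmetry $M(l,S)-M(l,S\setminus\{i\})=M(i,S)-M(i,S\setminus\{l\})$ are all correct observations. But the argument stops being a proof exactly where the lemma stops being easy. The cross-partial identity compares marginals taken with respect to two \emph{different} ground sets ($S\setminus\{i\}$ on one side, $S\setminus\{l\}$ on the other), whereas an inductive hypothesis of the type you propose controls deviations between two players' shares on a \emph{single} fixed subset; you never explain how one yields the other, and no mechanism for closing that loop is visible in the sketch. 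The error bookkeeping is likewise asserted rather than derived: the claim that each step ``multiplies the slack by roughly $\alpha^2$'' has no justification, and the concluding conversion is arithmetically off --- a multiplicative distortion of order $k^{-3}$ applied to a quantity of typical size $1/k$ produces an additive error of order $k^{-4}$, not the $k^{-2}$ you assert. (For $|S|=2$ the potential identity forces $\xi(i,S)/\xi(j,S)=f(i)/f(j)$ exactly, so the additive slack must come from a delicate accumulation over larger $|S|$, which the sketch never tracks.) As written this is a plan for a proof, not a proof: the inductive step and the quantitative accounting that makes the constants come out still need to be supplied.
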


\begin{lemma}  
\label{lem:separation}
(Lemma $4.11$ of \cite{CRV10}). Let $f$ be an edge potential on $N$,
and $\xi$ be the cost-sharing method induced by $f$, for unit cost.
For any two vertices $i,j \in N$, such that $f(i) \geq \beta f(j)$:
$\xi (i,\{i,j\}) \geq \beta/(\beta+1)$ and for every
$S\supseteq\{i,j\}$, $\xi (j,S) \leq 1/(\beta+1).$
\end{lemma}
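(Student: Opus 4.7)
The plan is to settle both claims by an explicit computation for $|S|=2$ and then reduce the second claim for larger $S$ to this base case via a cross-monotonicity property of potential-based cost shares. For $|S|=2$ the budget-balance identity $\xi(i,\{i,j\}) + \xi(j,\{i,j\})=1$, combined with $\xi(i,\{i,j\})=(f(\{i,j\})-f(j))/f(i)$ and the analogous expression for $j$, is a single linear equation in the unknown $f(\{i,j\})$. Solving yields $f(\{i,j\})=(f(i)^2+f(i)f(j)+f(j)^2)/(f(i)+f(j))$, and back-substituting gives the clean identities
\[
\xi(i,\{i,j\}) = \frac{f(i)}{f(i)+f(j)}, \qquad \xi(j,\{i,j\}) = \frac{f(j)}{f(i)+f(j)}.
\]
Since $x \mapsto x/(x+c)$ is monotone increasing in $x$ for any $c>0$, the assumption $f(i) \geq \beta f(j)$ immediately gives $\xi(i,\{i,j\}) \geq \beta/(\beta+1)$, which is the first claim, and $\xi(j,\{i,j\}) \leq 1/(\beta+1)$, which is the base case of the second claim.

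For a general $S \supseteq \{i,j\}$ the goal is to show $\xi(j,S) \leq \xi(j,\{i,j\})$, after which the base case closes the argument. I would prove this by induction on $|S|$, reducing it to the single-element extension step: for $\{i,j\}\subseteq T$ and $k \notin T$, show $\xi(j, T\cup\{k\}) \leq \xi(j, T)$. Unfolding the definition of $\xi$, this amounts to the submodularity-type inequality
\[
f(T\cup\{k\}) - f((T\cup\{k\})\setminus\{j\}) \;\leq\; f(T)-f(T\setminus\{j\}),
\]
asserting that $j$'s marginal contribution to the potential $f$ is non-increasing as more players are added.

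The main obstacle is establishing this submodularity from the edge-potential axioms alone. My first attempt is to write the normalization $\sum_{\ell \in U}(f(U)-f(U\setminus\{\ell\}))/f(\ell) = 1$ for $U = T$ and $U = T\cup\{k\}$, take the difference, and isolate the $j$-term; the hope is that the remaining terms admit a sign analysis relying only on the strict monotonicity $f(U)<f(U\cup\{k\})$. If this direct route does not close the gap, the fallback is to invoke the characterization of \cite{CRV10} identifying potential-based methods with generalized weighted Shapley methods; within a Shapley class the cost share of $j$ is $w_j/\sum_{\ell \in S^\star} w_\ell$ for the earliest non-empty partition class $S^\star \ni j$, so cross-monotonicity is immediate, and the condition $f(i)\geq \beta f(j)$ (which translates to $w_i \geq \beta w_j$) delivers the desired bound $w_j/(w_i+w_j)\leq 1/(\beta+1)$ whenever $i \in S^\star$, while if $i$ lies in an earlier class than $j$ the share $\xi(j,S)$ is $0$ and the inequality is vacuous.
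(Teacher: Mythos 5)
The paper does not prove this lemma; it is imported verbatim as Lemma~4.11 of Chen, Roughgarden, and Valiant, so there is no internal proof to compare against. On the merits of your argument: the $|S|=2$ computation is correct and clean (solving the normalization identity for $f(\{i,j\})$ and back-substituting does give $\xi(i,\{i,j\})=f(i)/(f(i)+f(j))$), and reducing the general case to cross-monotonicity $\xi(j,S)\leq\xi(j,\{i,j\})$ is the right idea.

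The gap is that you never establish cross-monotonicity. Your first route (extracting submodularity of $f$ directly from the normalization identity) is explicitly left open, and I do not see it closing on its own. Your fallback leans on the paper's displayed GWSP formula $\xi(j,S)=w_j/\sum_{\ell\in S}w_\ell$, but that expression does \emph{not} agree with a potential-based method once $|S|\geq 3$: the two-player case forces $f(\ell)\propto w_\ell$ and $f(\{i,j\})=(w_i^2+w_iw_j+w_j^2)/(w_i+w_j)$, and then the relations $f(S)=f(S\setminus i)+w_i f(i)/\sum_{\ell\in S}w_\ell$ for different $i\in S$ overdetermine $f(S)$ inconsistently --- with $w=(1,2,3)$, removing player $1$ gives $f(\{1,2,3\})=119/30$ while removing player $2$ gives $47/12$. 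So ``cross-monotonicity is immediate from that formula'' is a claim about the wrong object. What is actually needed is the genuine (Kalai--Samet) weighted Shapley representation: the method induced by an edge potential $f$ equals the weighted Shapley value of the unit cost game with arrival weights $\lambda_\ell\propto 1/f(\ell)$, so that $\xi(j,S)$ is the probability that $j$ arrives first in the weighted random order over $S$; cross-monotonicity then follows from the restriction-consistency of that random order (the order restricted to $\{i,j\}$ has the correct two-player law, and $j$ first in $S$ implies $j$ first in $\{i,j\}$). With that step supplied your argument is complete; without it, the reduction to the two-player base case is unsupported.
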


\subsection{Lower Bound}
The following two technical lemmas will be used in our main theorem.

\begin{lemma}
\label{lem:matching-s}
Let $X$ be a finite set of size $msr^2$, and $X_1, \ldots, X_m$ be a
partition of $X$, with $|X_i|=sr^2$, for all $i\in [m]$. Then, for any
coloring $\chi$ of $X$ such that {\em no more} than $r$ elements have
the same color, there exists a {\em rainbow} subset $S \subset X$
(i.e. $\chi(v)\neq\chi(u)$ for all $v,u \in S$),
with $|S \cap X_i|=s$ for every $i \in [m]$.
\end{lemma}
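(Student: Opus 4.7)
The plan is to reduce the problem to finding a saturating matching in a bipartite graph and apply Hall's marriage theorem. The setup is designed so that each of the $ms$ elements we will select corresponds to a distinct ``slot'' and receives a unique color.

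I would build a bipartite graph $B$ whose left side is $L = \{(i,j) : i \in [m],\, j \in [s]\}$ of total size $ms$, and whose right side is the set $C$ of colors used by $\chi$. For each slot $(i,j) \in L$ and each color $c \in C$, place an edge between $(i,j)$ and $c$ whenever $c$ appears on at least one element of $X_i$. A matching that saturates $L$ will immediately produce the desired rainbow subset $S$: if slot $(i,j)$ is matched to color $c$, include in $S$ any one element of $X_i$ that is colored $c$. Since distinct slots are matched to distinct colors, the chosen $ms$ elements have pairwise distinct colors; and since exactly $s$ slots are associated with each $X_i$, exactly $s$ of the chosen elements lie in each $X_i$, as required.

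It remains to verify Hall's condition, which is essentially the only computation. For any $T \subseteq L$, let $I = \{i \in [m] : T \text{ contains some slot } (i,j)\}$. Then $|T| \leq s|I|$ and $N(T)$ is precisely the set of colors appearing on elements of $\bigcup_{i \in I} X_i$. Since $|\bigcup_{i \in I} X_i| = sr^2 |I|$ and each color is used on at most $r$ elements of $X$, this union must contain at least $sr^2 |I| / r = sr |I|$ distinct colors. Hence $|N(T)| \geq sr|I| \geq s|I| \geq |T|$, and Hall's condition holds.

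The main potential obstacle is the Hall verification, but the factor of $r^2$ in the size of each $X_i$ is chosen precisely so that, after dividing by the maximum color multiplicity $r$, one retains an extra factor of $r \geq 1$ of slack, which comfortably covers the required lower bound $s|I|$. By Hall's marriage theorem, $B$ admits a matching saturating $L$, and the construction above yields a rainbow $S$ with $|S \cap X_i| = s$ for every $i \in [m]$.
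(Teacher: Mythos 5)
Your proof is correct and follows essentially the same strategy as the paper: build a bipartite graph between ``slots'' (the paper groups these as sets $B_i$ of size $s$) and colors, verify Hall's condition, and extract the rainbow set from a saturating matching. The only cosmetic difference is in how Hall's condition is checked --- you count distinct colors in $\bigcup_{i\in I}X_i$ directly, whereas the paper argues via degree bounds on both sides and a double count of edges --- but the underlying idea (each color appears at most $r$ times, so the $sr^2|I|$ elements force at least $sr|I|$ colors) is the same.
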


\begin{proof}
  Given the partition $X_1,\ldots, X_m$ of $X$ and the coloring
  $\chi$, we construct a bipartite graph $G=(A,B,E)$, where $A$ is the
  set of colors used in $\chi$. For every $X_i$ we create a set $B_i$
  of size $s$; then $B=\cup B_i$. If color $j$ is used in $X_i$,
  we add an edge $(j,l)$ for all $l\in B_i$.

Each color $j \in A$ appears in at most $r$ distinct $X_i$ sets, and since for each
$X_i$ there are $s$ vertices ($B_i$), the degree of $j$ is 
at most $rs$. On the other hand, each $X_i$ has size $r^2 s$
and hence, it has at least $rs$ different colors. Therefore, the degree
of each vertex of $B$ is at least $rs$.

Consider any set $R\subseteq B$, and let $E(R)$ be the set of edges with at least
one endpoint in $R$. If $N(R)$ denotes
the set of neighbors of $R$, observe that $E(R) \subseteq E(N(R))$. 
By using the degree bound on vertices of 
$B$, $|E(R)| \geq rs |R|$ and by using the
degree bound on vertices of $A$, $|E(N(R))| \leq rs|N(R)|$. 
Therefore, $|R| \leq |N(R)|$.  By Hall's Theorem there exists
a matching which covers every vertex in $B$. 
Each vertex in $B_i$ is matched with a distinct color and therefore in each 
$X_i$ there exists a subset with at least
$s$ elements with distinct colors; let $W_i$ be such a subset with exactly $s$ elements. 
In addition the colors in different $W_i$ subsets should be distinct by the matching. 
Then, $S=\cup W_i$.
\end{proof}

\begin{lemma}
\label{lem:nonconsequent}
Let $X = (X_1,\ldots, X_m)$  be a partition of $[m^2]$, with $|X_i|=m$, for all $i \in
[m]$. Then, there exists a subset $S \subset [m^2]$ with exactly one
element from each subset $X_i$, such no two distinct
$x,y\in S$ are consecutive, i.e. for every $x,y \in S$, $|x - y| \geq 2$.
\end{lemma}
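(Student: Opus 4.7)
The plan is to use a very short first-moment argument (the probabilistic method), which sidesteps any need for a careful greedy construction or case analysis on $m$. The idea is that a uniformly random transversal has expected number of consecutive pairs strictly less than $1$, so a good transversal must exist.

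Concretely, I would proceed as follows. For each $i \in [m]$, independently pick $s_i \in X_i$ uniformly at random, and set $S = \{s_1, \ldots, s_m\}$. For each $j \in [m^2 - 1]$, let $Y_j$ be the indicator of the event that both $j$ and $j+1$ lie in $S$. There are two cases: if $j$ and $j+1$ belong to the same block $X_i$, then at most one of them can be in $S$, so $Y_j = 0$ deterministically; otherwise $j \in X_a$ and $j+1 \in X_b$ with $a \neq b$, and by independence of the choices in $X_a$ and $X_b$ together with $|X_a| = |X_b| = m$, we get $\pr[Y_j = 1] = \tfrac{1}{m} \cdot \tfrac{1}{m} = \tfrac{1}{m^2}$.

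Summing by linearity of expectation over the $m^2 - 1$ consecutive pairs yields
\[
\E\Bigl[\sum_{j=1}^{m^2 - 1} Y_j\Bigr] \;\leq\; \frac{m^2 - 1}{m^2} \;<\; 1.
\]
Since the random variable $\sum_j Y_j$ takes integer values, there must exist an outcome of the random process for which $\sum_j Y_j = 0$. That outcome is a transversal $S$ with no two consecutive elements, which is exactly what the lemma asserts.

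I do not anticipate a real obstacle here: the only subtlety worth spelling out is the degenerate ``same-block'' case, which merely contributes $0$ to the expectation and thus only helps. The bound closes with room to spare (the inequality is strict by exactly $1/m^2$), so no further refinement is needed.
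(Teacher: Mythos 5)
Your probabilistic argument is correct. Picking $s_i\in X_i$ uniformly and independently, the indicator $Y_j$ of the event $\{j,j+1\}\subseteq S$ vanishes when $j,j+1$ share a block (only one representative per block) and equals $1$ with probability exactly $1/m^2$ otherwise, by independence across disjoint blocks. Linearity gives $\E[\sum_{j=1}^{m^2-1}Y_j]\le(m^2-1)/m^2<1$, and since the sum is a nonnegative integer there is an outcome with value $0$, which is the desired transversal. There are no gaps; the degenerate same-block case is handled correctly, and the inequality is strict with exactly the slack needed.

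This is a genuinely different route from the paper's proof, which is a deterministic, constructive ``diagonal'' argument: sort each $X_i$ internally as $x_{i1}<\cdots<x_{im}$, recursively reindex the blocks so that the $i$-th block has the smallest $i$-th order statistic among the remaining blocks, and then take $S=\{x_{kk}:k\in[m]\}$. The paper's construction actually \emph{produces} the transversal explicitly in $O(m^2\log m)$ time (useful since this lemma feeds a concrete adversarial construction), whereas your first-moment argument is purely existential. On the other hand, your argument is shorter, requires no ``w.l.o.g.'' reindexing step whose correctness the paper leaves as ``it is not hard to see,'' and the bound $(m^2-1)/m^2<1$ makes transparent exactly why the hypothesis $|X_i|=m$ on $[m^2]$ is what is needed. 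If constructiveness matters for the downstream use, one could derandomize your argument by the method of conditional expectations, which would again yield an explicit polynomial-time procedure.
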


\begin{proof}
  For every $i$, let $X_i=\{x_{i1}, \ldots, x_{im}\}$. W.l.o.g we can
  assume that the $x_{ij}$'s are in increasing order with respect to
  $j$ and in addition that $X_i$'s are sorted such that $x_{ii} <
  x_{ji}$, for all $j>i$ (otherwise rename the elements recursively to
  fulfill the requirement). Then, it is not hard to see that $S =
  \{x_{kk}|k\in [m]\}$ can serve as the required set.
\end{proof}

Now we proceed with the main theorem of this section. We create a
graph where every GWSP has high PoA. At a high level, we construct a
high dimensional hypercube with sufficiently large number of potential
players at each vertex (by adding many copies of each vertex connected
via zero-cost edges). Moreover, we add shortcuts among the vertices of
suitable costs and we connect each vertex with $t$ via two parallel
links with costs that differ by a large factor (see Figure
\ref{fig:allProtConstr}). If the protocol induces a large enough set
of potential players with {\em Shapley-like} values in some vertex,
then it is a NE that all these players follow the most costly link to
$t$. Otherwise, by using Lemmas~\ref{lem:matching-s}
and~\ref{lem:nonconsequent} we show that there exists a set of
potential players $B$, with {\em ordered-like} values, one at each
vertex of the hypercube. Then, by using the results of Section
\ref{generalLB}, there exists a path where the vertices are {\em
  zig-zag}-ordered.
 
The separation into these two extreme cases was first used in
\cite{CRV10}. The crucial difference, is that for their problem the
protocol is specified independently of the underlying graph, and
therefore the adversary knows the case distinction (ordered or
shapley) and bases the lower bound construction on that. However, our
problem requires more work as the graph should be constructed {\em in
  advance}, and should work for both cases.

\begin{theorem}
  There exist graph metrics, such that the PoA of
  any universal cost-sharing protocol is at least $\Omega(\log k)$, where
  $k$ is the number of activated vertices.
\end{theorem}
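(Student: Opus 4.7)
By the characterization of \cite{CRV10} it suffices to rule out generalized weighted Shapley protocols (GWSPs). I would build a single ``hard'' metric starting from the hypercube $Q_{n(r)}$ guaranteed by Theorem~\ref{thm:badPath}: blow up each vertex $v$ into a \emph{supernode} of $N$ copies interconnected by zero-cost edges; for every pair of supernodes $(v,u)$ add a direct shortcut of cost $d_{Q_{n(r)}}(v,u)$, as in Theorem~\ref{thm:LBOrdered}; and join each supernode to the root $t$ by two parallel edges of costs $c_1 \ll c_2$ with $c_2/c_1 = \Theta(\log k)$. The parameters $N$, $r$, and the pair $c_1,c_2$ are tuned so that, on the one hand, $\Omega(\log k)$ copies sharing $c_2$ à la Shapley beats $c_1$, and on the other hand, the zig-zag shortcut construction of Theorem~\ref{thm:LBOrdered} yields a $\Theta(\log k)$ gap on $Q_{n(r)}$. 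The total number of activated players in either regime is $k$.

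Fix any GWSP with weights $\{w_i\}$ and ordered partition $(U_1,\ldots,U_h)$. Call two copies \emph{comparable} if they lie in the same block $U_s$ and their weights are within the $\alpha$-ratio required by Lemma~\ref{lem:proximity}, so that their Shapley shares on any common edge agree up to a constant factor. Following \cite{CRV10}, I would split into two cases. In the \emph{Shapley-like case}, some supernode $v$ contains a comparability class $B_v$ of size $\Omega(\log k)$. The adversary activates exactly $B_v$; by Lemma~\ref{lem:proximity}, the profile in which every player of $B_v$ routes via the \emph{expensive} link to $t$ is a NE as soon as $c_2/|B_v|$ is below $c_1$ (up to the $\alpha$-slack), which holds by our choice of parameters. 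The NE cost is $c_2$ while OPT uses the cheap link of cost $c_1$, so the PoA is $\Omega(c_2/c_1) = \Omega(\log k)$.

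In the \emph{ordered-like case}, every comparability class has fewer than $r$ representatives inside any single supernode. I would apply Lemma~\ref{lem:matching-s} with the supernodes as the parts $X_i$ and the comparability class as the color; this yields, within each supernode, $s$ copies whose colors are globally distinct. A subsequent application of Lemma~\ref{lem:nonconsequent} to the block indices of these survivors extracts one copy $b_v$ per supernode, with the chosen block indices pairwise non-consecutive. Reading the block index of $b_v$ as a label $\pi(v)$ defines an ordering of $V(Q_{n(r)})$, and Theorem~\ref{thm:badPath} then provides a distance-preserving zig-zag path $P_r(\pi)$. I would activate the representatives $\{b_v\}$ along this path; Lemma~\ref{lem:separation}, combined with the non-consecutive spacing of the block indices, ensures that on every used edge the higher-block player pays essentially the whole cost, so the induced game behaves like an ordered protocol on $P_r(\pi)$. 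The shortcut construction of Theorem~\ref{thm:LBOrdered} then forces PoA $\Omega(\log k)$.

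The main obstacle, as flagged in the techniques section, is that the graph must be committed to \emph{before} the protocol is revealed: a single gadget must defeat both a Shapley-heavy and an ordered-heavy GWSP. The delicate step is calibrating $N$, the ratio $c_2/c_1$, and the parameters $s,r$ fed to Lemmas~\ref{lem:matching-s} and~\ref{lem:nonconsequent} so that case-1 (through the parallel links) and case-2 (through the zig-zag shortcuts and Theorem~\ref{thm:badPath}) simultaneously produce $\Omega(\log k)$ on the \emph{same} hypercube-based construction; in particular, $N$ must be large enough to make the Shapley argument bite, yet small enough that the number of supernodes is still exponential in $r$ so that Theorem~\ref{thm:badPath} applies with total activated players $k$.
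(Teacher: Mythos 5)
Your high-level plan matches the paper's: blow up $Q_{n(r)}$ into supernodes of many zero-cost copies, add distance-based shortcuts and two parallel links to $t$, then split into a Shapley-like case (large comparability class inside one supernode) and an ordered-like case (apply Lemmas~\ref{lem:matching-s} and~\ref{lem:nonconsequent} to extract one representative per supernode, then feed the resulting labeling into Theorem~\ref{thm:badPath}). But there is a genuine gap in the ordered-like branch. You propose to color the copies by the fine $\alpha$-comparability classes $A_{js}$ themselves and then enforce that the selected representatives have \emph{non-consecutive} such indices. This only guarantees that two chosen players' edge potentials differ by a factor of about $\alpha^{2}=(1+k^{-3})^{1/k}$, which is $1+\Theta(k^{-4})$ — nowhere near enough separation. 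The NE argument for the zig-zag path crucially relies (via Lemma~\ref{lem:separation}) on the higher-ranked player paying at least a $(8k-1)/(8k)$-fraction of every shared edge, which requires a potential ratio of at least $8k+1$, i.e., something like $\alpha^{\Omega(k^4\log k)}$. The paper inserts an intermediate grouping step that you omit: it bundles $4k^{5}$ consecutive nonempty $A_{js}$'s inside each block $U_j$ into coarse classes $B_{jl}$, colors by the $B_i$'s rather than the $A_{js}$'s (which forces the larger supernode size $M = 16k^{12}2^{3n}$), and then the non-consecutiveness from Lemma~\ref{lem:nonconsequent} acts on the $B_i$ indices, yielding the needed $\alpha^{4k^{5}}\geq 8k+1$ separation. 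Without this grouping, Claim~\ref{cl:PNEbestProtocol} (showing that the zig-zag strategy profile is a NE) does not go through.

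Two smaller parameter issues are worth flagging. First, you set the inter-supernode shortcut for distance-$2^{j}$ pairs to $2^{j}=d_{Q_{n(r)}}(v,u)$, mirroring Theorem~\ref{thm:LBOrdered}, but for general GWSPs the paper discounts these to $\hat{c}_j = 2^{j}\left(\frac{k-1}{k}\right)^{j}$; the geometric shrink is what absorbs the residual $O(1/k)$-fraction payments of non-leading players and makes the incentive inequalities in the NE analysis close. Second, you set $c_2/c_1 = \Theta(\log k)$ and take a Shapley-like class of size $\Omega(\log k)$, which contradicts your later claim that $k$ players are activated in both regimes; the paper instead uses $c_2/c_1 = k/6$ and a Shapley-like threshold of $k$, which activates exactly $k$ players in both branches and in fact yields PoA $\Omega(k)$ in the Shapley case. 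Your parameters could be made to work, but as written they are internally inconsistent.
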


\begin{proof}
Let $k=2^{r-1}+1$ be the number of activated vertices with $r \geq 4$, (hence $k \geq 9$). 

\noindent\textbf{Graph Construction.} 
We use as a base of our lower bound construction, a hypercube
$Q:=Q_{n}$, with edge costs equal to $1$ and $n=n(r)$ as in
Theorem~\ref{thm:badPath}. Based on $Q$, for $M =16k^{12} 2^{3n}$ we
construct the following network with $N = 2^{n} M $ vertices, plus the
designated root $t$. We add to $Q$ direct edges/shortcuts as follows:
for every two vertices $v,u$ of distance $2^j$, for $j\in [r]$, we add
an edge/shortcut, $(v,u)$, with cost equal to $\hat{c}_j =
2^j\left(\frac{k-1}{k}\right)^{j} = \Omega(2^j)$.  Moreover, for every
vertex $v_q$ of $Q$, we create $M-1$ new vertices, each of which we
connect with $v_q$ via a zero-cost edge. Let $V_q$ be the set of these
vertices (including $v_q$). Finally, we add a root $t$, which we
connect with every vertex $v_q$ of $Q$, via two edges $e_{q1}$ and
$e_{q2}$, with costs $2k$ and $2k\cdot k/6$, respectively. We denote
this new network by $Q^*$ (see Figure \ref{fig:allProtConstr}).

\begin{figure}[h]
\centering
\includegraphics[scale=0.35]{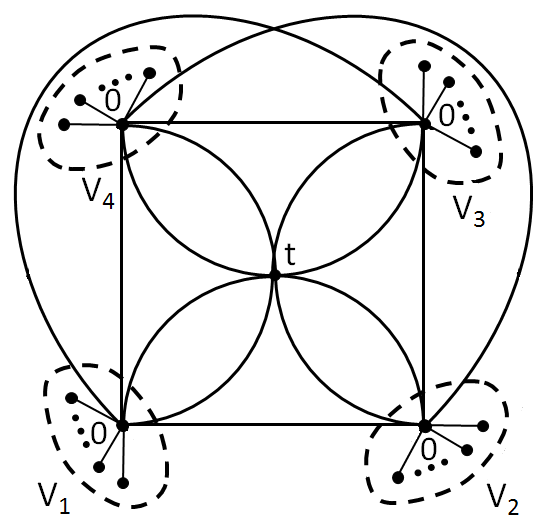} 
\caption{An example of $Q^*$ for $Q_2$ as the base hypercube.}
     \label{fig:allProtConstr}
\end{figure}

We will show that any GWSP for $Q^*$ 
has PoA $\Omega (\log k)$. Any GWSP can be described by concatenations
of potential-based cost-sharing protocols $\Xi_1, \ldots, \Xi_h$ for a
partition of the $V(Q^*)$ into $h$ subsets $U_1, \ldots , U_h$, where
$\Xi_j$ is induced by some edge potential $f_j$. Following the
analysis of Chen, Roughgarden and Valiant~\cite{CRV10}, we scale the
$f_j$'s such that for every $i,j$, $f_j(i) \geq 1$. For nonnegative
integers $s$ and for $\alpha = \left(1+k^{-3}\right)^{\frac{1}{2k}}$,
we form subgroups of vertices $A_{js}$, for each $U_j$, as
$A_{js}=\{i\in U_j: f_j(i) \in \left[\alpha^s, \alpha^{s+1}\right]\}$
(note that some of $A_{js}$'s may be empty).

The adversary proceeds in two cases, depending on the
intersection of the $A_{js}$'s with the $V_q$'s.

\noindent\textbf{Shapley-like cost-sharing.} Suppose first that there exist $A_{js}$ and $V_q$
such that $|A_{js} \cap V_q|\geq k$, and take a subset $R\subseteq
A_{js} \cap V_q$ with exactly $k$ vertices. The adversary will request
precisely the set $R$. Budget-balance implies that there exists some
vertex $i^*\in R$ which is charged at most $1/k$ proportion of the
cost. Moreover, Lemma~\ref{lem:proximity} implies that, all $i \in R$
are charged at most $\alpha (1/k+2k^{-2}) \leq 2\cdot (3/k) = 6/k$
proportion of the cost.

Note that there is a NE where all players follow the edge $e_{q2}$, with
cost $2k \cdot k/6$; no player's share is more than $2k$ and any
alternative path would cost at least $2k$. However, the optimum
solution is to use the parallel link $e_{q1}$ of cost $2k$. Therefore,
the PoA is $\Omega(k)$ for this case.

\noindent\textbf{Ordered-like cost-sharing.} If there is no such $R$
with at least $k$ vertices, then $|A_{js}\cap V_q|\leq k$ for all
$j,s$ and $q$, which means that each $A_{js}$ has size of at most $k
2^{n}$.
For every $j \in [h]$, we group consecutive sets $A_{js}$ (starting
from $A_{j0}$) into sets $B_{jl}$, such that each $B_{jl}$, (except
perhaps from the last one), contains {\em exactly} $4k^5$ {\em
  nonempty} $A_{js}$'s. The last $B_{jl}$ contains at most $4k^5$ {\em
  nonempty} $A_{js}$ sets. Consider the lexicographic order among
$B_{jl}$'s, i.e. $B_{jl} < B_{j'l'}$ if either $j < j'$ or $j = j'$
and $l<l'$. Rename these sets based on their total order as
$B_i$'s. The size of each $B_i$ is at most $4k^6 2^{n}$.

Now we apply Lemma \ref{lem:matching-s} on the set $N$, for $r =4k^6
2^{n}$ and $s=m=2^{n}$, by considering the subsets $V_q$ as the
partition of $N$ (recall that $|V_q| = M = r^2 s$).  As a coloring
scheme, we color all the vertices of each $B_i$ with the same color
and use different colors among the sets $B_i$.  Lemma
\ref{lem:matching-s} guarantees that for each $V_q$ there exists $V'_q
\subset V_q$ of size $2^{n}$, such that every $v\in V' = \cup_qV'_q$
belongs to a distinct $B_i$.

The order of $B_i$'s suggests an order of the vertices of $V'$.  Since
the $V'_q$'s form a partition of $V'$, Lemma~\ref{lem:nonconsequent}
guarantees the existence of a subset $C \subset V'$, such that $C$
contains exactly one vertex from each $V'_q$ and there are no
consecutive vertices in $C$. This means that $C$ contains exactly one
vertex from each set $V_q$ and all these vertices belong to {\em
  different} and {\em non-consecutive} sets $B_{i}$.

To summarize, so far we know that:
\begin{enumerate}[label=(\roman*)]
\item for any pair of vertices $v,u \in C$, either $v$ and $u$ come
  from different $U_j$'s or their $f_j(v)$ and $f_j(u)$ values differ
  by a factor of at least $\alpha^{4k^5} \geq 8k+1$ (since there exist
  at least $4k^5$ nonempty sets $A_{js}$ between the ones that $v$ and
  $u$ belong to).
\item $C$ is a copy of $Q_{n}$ (by ignoring zero-cost edges).
\end{enumerate} 

Let $\pi$ be the order of vertices of $C$ (recall that they are ordered according to the $B_i$'s they belong to). 
Theorem \ref{thm:badPath} guarantees that there always exists at least one
distance preserving path $P_{r}(\pi)$ (see Definition \ref{def:badOrder}).  Let $S$ be the
vertices of $P_{r}(\pi)$ excluding the last class $D_{r}$ (see
Definition~\ref{def:badOrderSets}). 
The adversary will activate this set $S$ ($|S| = k$). It remains to
show that there exists a NE, the cost of which is a factor of
$\Omega(\log k)$ away from optimum.
We will refer to these vertices as $S = \{s_1, s_2, \ldots, s_k\}$,
based on their order $\pi$, from smaller label to larger, and let player $i$ be associated with 
$s_i$.

Let $\mathcal{P'}$ be the class of strategy profiles $\P=(P_1,\ldots,
P_k)$ which are defined as follows:
\begin{itemize} 
\item $P_1=e_{11}$ and $P_2 = (s_1,s_2) \cup P_1$, 
where $(s_1,s_2)$ is the shortcut edge between $s_1$ and $s_2$. 
\item From $i=3$ to $k$, let $s_{\ell} \in \Pi(s_i)$ 
be one of $s_i$'s parents in the class hierarchy 
(we refer the reader to the beginning of Section~\ref{generalLB}); 
then $P_i = (s_i,s_{\ell}) \cup P_{\ell}$, 
where $(s_i,s_{\ell})$ is the shortcut edge between $s_i$ and $s_{\ell}$.
\end{itemize}
We show in Claim \ref{cl:PNEbestProtocol} 
that there exists a strategy profile $\P^* \in \mathcal{P'}$ which is a NE. 
$\P^*$ has cost: 
$$c(\P^*)={c}({e_{11}}) + \hat{c}_r+\sum_{j=1}^{r-1} |D_j|\cdot \hat{c}_{r-j} = 
\Omega(2^{r}) + \Omega(2^{r})+\sum_{j=1}^{r-1} 2^{j-1}\cdot \Omega(2^{r-j}) = 
\Omega(r2^r).$$ 

However, there exists the solution 
$P_{r}(\pi) \cup e_{11}$, which has cost of $O(2^r)$. 
Therefore, the PoA is $\Omega(r) = \Omega(\log k)$. 

\begin{claim}
\label{cl:PNEbestProtocol}
There exists $\P^* \in \mathcal{P'}$ which is a Nash equilibrium.
\end{claim}

\begin{proof} 
  We prove the claim by using better-response dynamics. Note that any
  GWSP induces a potential game for which better-response dynamics
  always converge to a NE (see \cite{CRV10, GMW13}). We start
  with some $\P_1 \in \mathcal{P'}$ and we prove that, after a sequence
  of players' {\em best}-responses, we end up in $\P_2 \in
  \mathcal{P'}$. Proceeding in a similar way we eventually converge to
   $\P^*$, which is the required NE.

   We next argue that for any $\P \in \mathcal{P'}$, players $1$ and
   $2$, have no incentive to deviate from $P_1$ (argument (a)) and
   $P_2$ (arguments (b)), respectively.  We further show that, given
   any strategy profile $\hat{\P}$, there exists some $\P \in
   \mathcal{G}$ such that: for every player $i \notin \{1,2\}$, if
   $\P^i = (P_1,\ldots,P_{i-1},\hat{P}_{i+1},\ldots,\hat{P}_k)$ are
   the strategies of the other players, $i$ prefers
   $P_i$ to $\hat{P}_{i}$ (arguments (c)-(e)).
We define the desired $\P$ recursively as follows: $P_1=e_{11}$, $P_2 = (s_1,s_2) \cup P_1$ and  
from $i=3$ to $k$, $P_i \in A=\arg \min_{P'_i} \{c_{i}(\P^i,P'_i)|\exists (P'_{i+1},\ldots,P'_k) 
\mbox{ s.t. } (P_1,\ldots,P_{i-1},P'_i,\ldots,P'_k)\in \mathcal{G}\}$.  
If $\hat{P}_{i} \in A$ then we set $P_i = \hat{P}_{i}$, otherwise we choose a path from $A$ arbitrarily.
 
We first give some bounds on players' shares.

\begin{enumerate}

\item Let $R \subseteq S$ be any set of players that use some edge $e$
  of cost $c_e$ and let $i$ be the one with the smallest
  label. 
  The total share of players $R\setminus\{i\}$ is upper bounded by
  $\sum_{i=1}^{|R|-1} \frac 1{(8k+1)^i+1} \cdot c_e < \frac {c_e}{8k}$
  (Lemma~\ref{lem:separation}).  Moreover, $i$'s share is at least
  $\frac{8k-1}{8k} c_e$.

\item
The total cost of any $P_i$ under $\P^i$, is at most $8k$. 
This is true because, for every player ${i'}$ with $i' \leq i$, 
the first edge of $P_{i'}$ is a shortcut to reach one of $s_{i'}$'s parents, with cost at most 
$2^{r-j}$, where $D_j$ is the class that $s_{i'}$ belongs to. Therefore, 
the cost of $P_i$ 
is at most $2k + \sum_{l=0}^{r-1} 2^{r-l} < 8k$. 
 
\item
By combining the above two arguments, under $\P^i$, the total share of player $i$ for the edges of $P_i$ at which  
she is not the first according to $\pi$, is at most $\frac 1{8k} \cdot 8k \leq 1$. 
\end{enumerate}

Here, we give the arguments for players $1$ and $2$. 
\begin{enumerate}[label=(\alph*)]
\item
The share of player $1$ under $\P$ is at most $2k$ and any other path
would incur a cost strictly greater than $2k$. 

\item
The share of player $2$ under $\P$ is at most
$2^r + 1 = 2k - 1$ (argument $3$), whereas if she doesn't connect through $s_1$, her
share would be at least $2k$. Moreover, if she connects to $t$ through $s_1$ but by using 
any other path rather than the shortcut $(s_1,s_2)$, the total cost of that path is 
at least $2^{r}\left(\frac{k-1}{k}\right)^{r-1}$. Player $2$ is first according to $\pi$ at that path  
and by argument $1$, her share is at least $2^{r}\frac{8k-1}{8k}\left(\frac{k-1}{k}\right)^{r-1} < \hat{c}_{r}$.

\end{enumerate}
We next give the required arguments in order to show that $P_i$ is a best response for 
player $i \neq \{1,2\}$ under $\P^i$. In the following, let $s_i \in D_j$ and let $s_{\ell}$ 
be the parent of $s_i$ such that $P_i = (s_i,s_\ell)\cup P_{\ell}$. 
Also let $s_{i'}$ be the predecessor of $s_i$, according to $\pi$, that is first met by following $\hat{P}_{i}$ 
from $s_i$ to $t$.  
\begin{enumerate}[resume,label=(\alph*)]

\item
Suppose that $s_{i'} = s_{\ell}$. 
\begin{itemize}
\item Assume that $\hat{P}_{i}$ doesn't use the shortcut $(s_i,s_\ell)$. The subpath 
of $\hat{P}_{i}$ from $s_i$ to $s_\ell$ contains edges at which $i$ is first
according to $\pi$ of total cost at least
$2^{r-j}\left(\frac{k-1}{k}\right)^{r-j-1}$. By argument $1$, her share is at least 
$2^{r-j}\frac{8k-1}{8k}\left(\frac{k-1}{k}\right)^{r-j-1} < \hat{c}_{r-j}$.
\item Assume that $\hat{P}_{i}$ doesn't use $P_{\ell}$. The subpath 
of $\hat{P}_{i}$ from $s_\ell$ to $t$ contains edges at which $i$ is first
according to $\pi$ of total cost at least $2$ (the minimum distance between two activated vertices). 
By argument $1$, her share is at least $2\frac{8k-1}{8k} > 1$, where $1$ is at most her share for 
$P_{\ell}$ (argument $3$). 
\end{itemize}  

\item 
Suppose that $s_{i'}$ is $s_i$'s other parent. 
If $\hat{P}_{i} \neq (s_i,s_{i'})\cup P_{i'}$, the above arguments 
still hold and so $c_{i}(\P^i,P_i) < c_{i}(\P^i,\hat{P}_{i})$. 
Otherwise, by the definition of $P_i$, either $P_i =  \hat{P}_{i}$, or 
$c_{i}(\P^i,P_i) < c_{i}(\P^i,\hat{P}_{i})$. 

\item 
Suppose that $s_{i'}$ is not a parent of $s_i$. 
Player $i$'s share in $P_i$ is at most $\hat{c}_{r-j}$
for her first edge/shortcut and at most $1$ for the rest of her path (argument $3$).
However, all edges that are used by players that
precedes $i$ in $\pi$ have cost at least $\hat{c}_{r-j}$.
Therefore, in $\hat{P}_{i}$, player $i$ is the first according to $\pi$ for edges of total cost at least
$\hat{c}_{r-j+1}$. This implies a cost-share of at least $\frac{8k-1}{8k}
\hat{c}_{r-j+1}$ (argument $1$). But for $k \geq 6$ and $j < r$, $\frac{8k-1}{8k}
\hat{c}_{r-j+1} > \hat{c}_{r-j} + 1$.

\end{enumerate}

We now describe a sequence of best-responses from some $\hat{\P} \in \mathcal{P'}$ to $\P$ 
($\P$ is constructed based on $\hat{\P}$ as described above).  
We follow the $\pi$ order of the players and for each player we apply her best response. 
First note that players $1$ and $2$ have no better response, so $P_1 = \hat{P}_1$ and $P_2 = \hat{P}_2$. 
When we process any other player $i$, we have already processed all her predecessors in $\pi$ and so, 
the strategies of the other players are $\P^i$. 
Therefore, $P_i$ is the best response for $i$ (it may be that $P_i=\hat{P}_i$, where no better response exists for $i$). 
The order that we process the vertices guarantees that $\P \in \mathcal{P'}$.

\end{proof}
\end{proof}


\section{Outerplanar Graphs}

In this section we show that there exists a class of graph metrics,
prior knowledge of which can dramatically improve the performance of
good network cost-sharing design. For {\em outerplanar} graphs, we
provide a universal cost-sharing protocol with constant PoA. In
contrast, we stress that uniform protocols cannot achieve PoA better
than $\Omega (\log k)$, because the lower bound for the greedy
algorithm of the OSTP can be embedded in an outerplanar graph (see
Figure \ref{fig:circle}(a) for an illustration).
	
We next define an ordered universal cost-sharing protocol $\Xi_{tour}$, and
we show that it has constant PoA. W.l.o.g. we assume that the metric
space is defined by a given {\em biconnected} outerplanar
graph\footnote{If it is not already biconnected, we turn it into an
  {\em equivalent} biconnected graph, by appropriately adding edges of
  infinity cost. By {\em equivalent} we mean that any NE outcome and
  the minimum Steiner tree solution remain unchanged after the
  transformation. Equivalence is obvious since we only add edges of
  infinity costs that cannot be used in neither any NE nor the minimum
  Steiner tree outcome.}. Every biconnected graph admits a {\em
  unique} Hamiltonian cycle~\cite{Sys79} that can be found in linear
time~\cite{CFN85}. $\Xi_{tour}$ orders the vertices according to the cyclic
order in which they appear in the Hamiltonian tour, starting from $t$
and proceeding in a clockwise order $\pi$.  In Figure
\ref{fig:circle}(a), $\pi(q_8) < \pi(q_4) < \pi(q_9) <
\ldots<\pi(q_{15})$.

{\footnotesize
\begin{figure}[h]
\centering
\begin{tabular}{c c}
\includegraphics[scale=0.3]{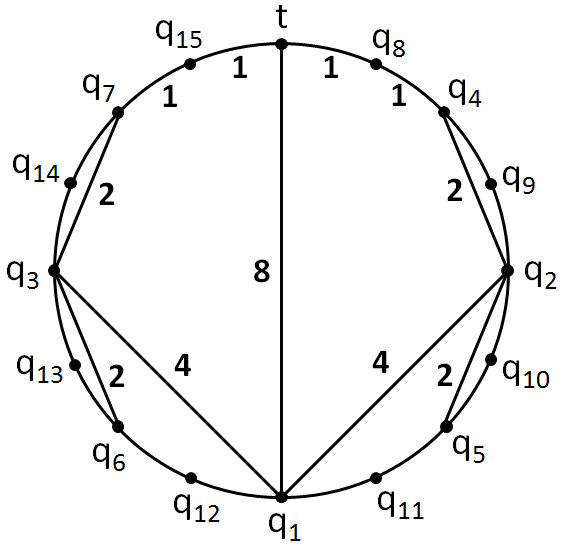} \qquad & \qquad 
\includegraphics[scale=0.3]{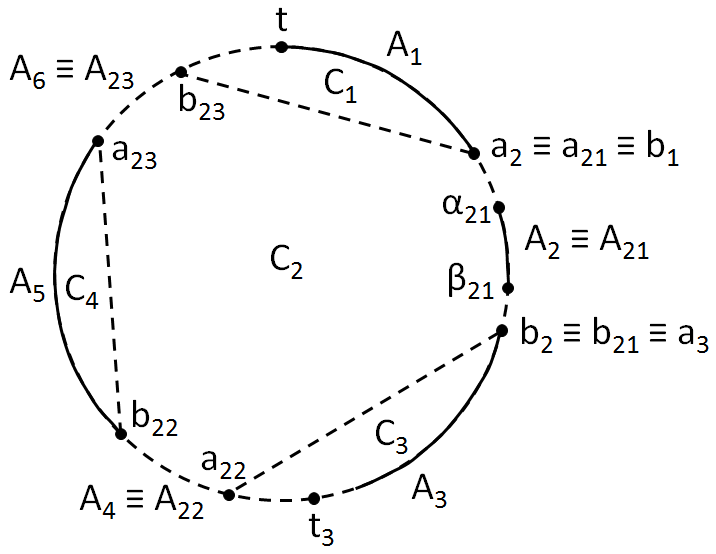} \\
$(a)$&$(b)$
\end{tabular}
\caption{\footnotesize $(a)$ shows an example of an outerplanar graph
  where the order $q_i < q_{i+1}$ gives PoA of $\Omega(\log k)$. $(b)$
  illustrates some elements from the proof of
  Theorem~\ref{thm:UBouter}, focusing on cycle $C_2$. The dashed
  components represent the optimum tree $T^*$.}
     \label{fig:circle}
\end{figure}
}

As a warm-up, we first bound from above the PoA of $\Xi_{tour}$ for cycle
graphs, and then extend it to all outerplanar graphs.

\begin{lemma}
\label{lem:cycle}
The PoA of $\Xi_{tour}$ in cycle graphs is at most $2$.
\end{lemma}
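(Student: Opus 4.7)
The plan is to exploit the restrictive tree structure inside a cycle. Denote by $C=\sum_e c_e$ the total length of the cycle and list the terminals (including $t$) in the clockwise order prescribed by $\Xi_{tour}$ as $s_0=t,s_1,\ldots,s_k,s_{k+1}=t$, letting $a_i$ be the length of the arc from $s_i$ to $s_{i+1}$, so that $\sum_{i=0}^k a_i = C$. Every connected subgraph of a cycle that spans more than one vertex is either the whole cycle or a sub-arc of it, and the cost-minimizing choice is always a sub-arc: the cycle with one inter-terminal gap removed. Hence $c(\O)=C-g^*$ with $g^*=\max_i a_i$. By the ordered-protocol/greedy correspondence recalled in Example~\ref{ex:order}, any NE of $\Xi_{tour}$ is built by letting players connect, in the prescribed order, via a shortest path to the current tree, and this again yields a connected sub-arc of the cycle, so $c(\P)=C-g_{\mathrm{NE}}$ for some $g_{\mathrm{NE}}\in\{a_0,\ldots,a_k\}$. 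The lemma thus reduces to showing that either $g_{\mathrm{NE}}=g^*$ or $g^*\leq C/2$.

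I would identify $g_{\mathrm{NE}}$ by walking through the greedy process step by step. Starting from $\{t\}$, we process $s_1,s_2,\ldots,s_k$ in order. As long as the tree has so far been extended only clockwise, its clockwise endpoint is $s_{i-1}$ and its counterclockwise endpoint is still $t$, so when $s_i$ arrives its two options cost $a_{i-1}$ (one-arc clockwise) and $a_i+a_{i+1}+\cdots+a_k$ (wrap-around counterclockwise); the greedy picks clockwise iff $a_{i-1}\leq a_i+\cdots+a_k$. Let $i^*$ be the first index at which this inequality fails, with the convention $i^*=k+1$ if it never fails. For $i<i^*$ the tree acquires the arcs $a_0,\ldots,a_{i^*-2}$; at step $i^*$ a single counterclockwise extension covers all of $a_{i^*},\ldots,a_k$ in one shot, and crucially this extension already passes through every remaining terminal $s_{i^*+1},\ldots,s_k$, which therefore attach for free and trigger no further extensions. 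So the unique omitted arc is $g_{\mathrm{NE}}=a_{i^*-1}$ and $c(\P)=C-a_{i^*-1}$.

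Given this characterization the bound is immediate. For every $j<i^*-1$ the greedy chose clockwise at step $j+1$, which by the switching rule forces $a_j\leq a_{j+1}+\cdots+a_k\leq C-a_j$, hence $a_j\leq C/2$. For every $j\geq i^*$ the failure of the inequality at step $i^*$ gives $a_j\leq a_{i^*}+\cdots+a_k<a_{i^*-1}=g_{\mathrm{NE}}$. Combining, $g^*=\max_j a_j \leq \max(C/2,\,g_{\mathrm{NE}})$. If $g^*\leq C/2$ then $c(\O)\geq C/2$ and trivially $c(\P)\leq C\leq 2\,c(\O)$; otherwise $g^*>C/2$ forces $g^*=g_{\mathrm{NE}}$, so $c(\P)=c(\O)$. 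The one subtle point to verify carefully is the ``wrap-around'' claim that after the greedy switches direction no further nonzero extension occurs; ties in the shortest-path comparison are harmless provided a consistent tie-breaking rule is fixed.
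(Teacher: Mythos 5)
Your proof is correct, but it takes a genuinely different route from the paper's. The paper never pins down the NE explicitly: it splits the optimal path $T^*$ at $t$ into two sub-paths $P'_1,P'_2$, bounds each activated player's cost-share in \emph{any} NE by the arc-distance to her immediate $\pi$-predecessor on that sub-path (a valid deviation, since she could route to that predecessor and free-ride the rest), and telescopes to get $c(\P)\le c(P'_1)+2c(P'_2)\le 2c(T^*)$. You instead invoke the ordered-protocol/greedy correspondence to identify the NE outcome exactly as the cycle minus a single inter-terminal gap $g_{\mathrm{NE}}$, and then compare it with the optimal omitted gap $g^*$: either $g^*\le C/2$, in which case $c(\O)\ge C/2$ and the factor-$2$ bound is immediate, or $g^*>C/2$, which your chain of greedy inequalities forces to coincide with $g_{\mathrm{NE}}$, giving $c(\P)=c(\O)$. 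The case analysis is sound (including the "wrap-around" step at $i^*$: it covers all remaining terminals, so exactly one gap is ever omitted), and ties genuinely are harmless. Your approach has the appeal of showing precisely when the NE is suboptimal. The paper's approach, by contrast, does not rely on the exact greedy characterization, and — more importantly for what follows — its per-player share bound $\sum_{v\in S\setminus\{\text{root}\}}c_v(\P)\le 2c(T^*)$ is exactly the modular form reused in Theorem~\ref{thm:UBouter}, where the lemma is applied inside each sub-cycle $C_i$ with a surrogate root $t_i$; your cost-vs-cost comparison would need to be reworked to serve that role.
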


\begin{proof}
  Consider a cycle graph $C = (V,E,t)$ and let $S\subseteq V$ be the
  set of the activated vertices.  Let $T^*$ be the minimum Steiner
  tree (path) that connects $S\cup\{t\}$, and $a$, $b$ be its two
  endpoints. Note that minimality of $T^*$ implies that $a, b \in S
  \cup \{t\}$. $a$ and $b$ partition $C$ into two paths
  $(T^*,C\setminus T^*)$
  and $t$ divides further $T^*$ into two paths $P'_1$,
  $P'_2$. 
Let $S_1 = \{u_1, \ldots, u_{r}=a\}$ and $S_2=\{w_1, \ldots, w_{s}=b\}$ be
the activated vertices of $P'_1$ and $P'_2$, respectively. W.l.o.g., assume that 
$\pi(u_i) < \pi(u_{i+1})$ and $\pi(w_{j+1}) <\pi(w_j)$, for all $i,j$.

Consider any NE, $\P=(P_i)_{i\in N}$. We bound from above the share of
each player $v \neq w_{s}$, by its distance from their immediate
predecessor in $\pi$, as follows.  By adopting the convention that
$u_0=t$,
\begin{eqnarray*}
c_{u_i}(\P) \leq d(u_i,u_{i-1}), \;\;\forall i \in [r],
\qquad\qquad c_{w_j}(\P) \leq d(w_j,w_{j+1}), \;\;\forall j \in [s - 1].
\end{eqnarray*} 
Also $c_{w_{s}}(\P) \leq d(w_{s},t)$. Overall,
\begin{eqnarray*}
c(\P) &=& \sum_{v \in S} c_v(\P) \leq 
    \sum_{u_i \in S_1} d(u_i,u_{i-1}) + \sum_{w_j \in S_2 - \{w_{s}\}} d(w_j,w_{j+1})+ d(w_{s},t)\\
	& \leq& c(P'_1) + c(P'_2) + c(P'_2) \leq 2c(T^*).
\end{eqnarray*}
\end{proof}

\begin{theorem}
\label{thm:UBouter}
The PoA of $\Xi_{tour}$ in outerplanar graphs is at most $8$.  
\end{theorem}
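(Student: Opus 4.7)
The starting observation parallels Lemma~\ref{lem:cycle}: labeling the activated vertices $s_1, \ldots, s_k$ in the order given by $\pi$ and setting $s_0 = t$, any NE $\P$ satisfies $c_{s_i}(\P) \leq d(s_i, s_{i-1})$, because player $i$ may always deviate along a shortest $s_i$--$s_{i-1}$ path and then re-use $P_{i-1}$; on every edge of $P_{i-1}$ player $i-1$ is already present, so under the ordered protocol $\Xi_{tour}$ player $i$ pays nothing on that portion. Consequently $c(\P) \leq \sum_{i=1}^{k} d(s_i, s_{i-1})$, and the theorem reduces to bounding this telescoping sum by $8\,c(T^*)$, where $T^*$ is a minimum Steiner tree that connects the activated vertices with $t$.

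To control the sum I would exploit the outerplanar structure. Since the graph is biconnected outerplanar, the Hamiltonian cycle $H$ is the outer face and the chords partition the interior into inner faces $C_1, \ldots, C_m$, each a cycle; these form a tree $\mathcal{T}$ under chord-adjacency. For any two consecutive $\pi$-activated vertices $s_{i-1}, s_i$, the clockwise Hamiltonian arc between them lies in a contiguous set of inner faces, and the shortest $s_{i-1}$--$s_i$ path uses only edges of these faces. I would split $d(s_i, s_{i-1})$ across the faces it traverses, attributing each chunk to the corresponding $C_j$ (suggested by Figure~\ref{fig:circle}(b), which focuses on a single cycle $C_2$).

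Next, for each inner face $C_j$ I would apply a cycle-style argument, mimicking the proof of Lemma~\ref{lem:cycle}, to the set of activated vertices lying on $C_j$ together with the ``entry points'' on $C_j$ inherited from neighbouring faces. This bounds the contribution charged to $C_j$ by twice the cost of a minimum path on $C_j$ that covers those points, and that minimum is in turn bounded by the edges of $T^*$ that lie inside $C_j$ plus the chord connecting $C_j$ to its parent face in $\mathcal{T}$. Summing over all inner faces and using outerplanarity, the bound becomes $8\,c(T^*)$: a factor $2$ from the cycle lemma, another factor $2$ because each chord lies on exactly two inner faces of $\mathcal{T}$, and a further factor $2$ from the Euler-tour-style doubling of $T^*$ when walking around its planar embedding.

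The main obstacle is the face-by-face charging step: one must carefully partition each distance $d(s_i, s_{i-1})$ across the inner faces it crosses, define the ``restriction of $T^*$ to $C_j$'' so that (i) the local cycle bound applies on $C_j$, (ii) every pair $(s_{i-1}, s_i)$ is fully accounted for without being lost between adjacent faces, and (iii) the sum of local optima is at most a constant times $c(T^*)$. The crucial fact that makes the glueing work is that an Euler-tour of $T^*$ in the planar embedding visits the activated vertices in exactly the cyclic order $\pi$ (a property specific to outerplanar graphs), so the local arguments on each $C_j$ can be combined consistently into the global $8\,c(T^*)$ bound.
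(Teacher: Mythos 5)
Your first step---bounding each player's share by the distance to her $\pi$-predecessor and hence $c(\P)\leq \sum_i d(s_i,s_{i-1})$---is a valid NE argument and is indeed what the paper's cycle lemma exploits. The gap is in the ``crucial fact'' you rely on to glue the per-face bounds: it is \emph{not} true that an Euler tour of the doubled $T^*$ in the planar embedding visits the activated vertices in the Hamiltonian cyclic order $\pi$. For a counterexample, take the $4$-cycle $(t,v_1,v_2,v_3)$ with the single chord $(t,v_2)$, put cost $1$ on $(t,v_2),(v_1,v_2),(v_2,v_3)$ and a huge cost on $(t,v_1),(t,v_3)$, and activate $v_1,v_2,v_3$. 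Then $T^*$ is the star centered at $v_2$, and any Euler tour of its doubling first reaches the terminals in the order $t,v_2,v_1,v_3$ (or $t,v_2,v_3,v_1$), \emph{not} in the boundary order $t,v_1,v_2,v_3$. So the property you describe as ``specific to outerplanar graphs'' fails, and with it the claimed consistency of the face-by-face charging.

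Your decomposition also differs from the paper's in a way that matters. You cut the graph into \emph{all} inner faces, whereas the paper forms cycles $C_1,\ldots,C_r$ by adding to the Hamiltonian tour only the chords $E^*=E(T^*)\setminus E(C)$ of the Steiner tree itself. This choice ensures that the restriction $T_i^*$ of $T^*$ to each $C_i$ is a path through the activated vertices of $C_i$ (exactly what Lemma~\ref{lem:cycle} needs), and that every edge of $T^*$ lies in at most two $C_i$'s, giving a factor $4$ for the shares of the non-root vertices. With arbitrary inner faces, many faces contain no $T^*$ edge at all, so there is nothing to charge their portion of the distances to. Finally, the paper bounds the shares of the per-cycle roots $t_1,\ldots,t_r$ (the $\pi$-smallest activated vertex of each $C_i$) by a separate arc-based argument; that step is precisely what replaces your Euler-tour claim and supplies the other factor $4$, and it has no counterpart in your sketch.
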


\begin{proof}
  Based on the previous discussion, it is sufficient to consider only
  biconnected outerplanar graphs with non-negative costs, including
  infinity. Let $G=(V,E,t)$ be any such graph with $S$ being the set
  of activated vertices.

  Let $T^*$ be the minimum Steiner tree that connects $S \cup \{t\}$,
  and $C$ be the unique Hamiltonian tour of $G$, forming its outer
  face. Let $E^*=E(T^*)\setminus E(C)$ be the set of non-crossing
  chords of $C$ that belong to $T^*$.
  Then $C\cup E^*$ forms $|E^*| +1 = r$ cycles $C_1, \ldots, C_r$,
  where every pair $C_i$, $C_j$ are either edge-disjoint or they have
  a single common edge belonging to $E^*$. On the other hand, each
  edge of $C$ belongs to exactly one $C_i$ and each edge of $E^*$
  belongs to exactly two $C_i$'s. Figure \ref{fig:circle}(b) provides
  an illustration.

  For every $i\in [r]$, let $S_i = (S \cup \{t\}) \cap V(C_i)$ be the
  activated vertices that lie in $C_i$ and $t_i$ be the vertex that is
  first in $\pi$ among $S_i$.  W.l.o.g. assume that, for all $i \in
  [r-1]$, $\pi(t_i) \leq \pi(t_{i+1})$ (then $t_1=t$).  Also let
  $T^*_i$ be the subgraph of $T^*$ that intersects with $C_i$. Then
  $T^*_i$ should be a path connecting $S_i$.

  Consider any NE, $\P=(P_i)_{i\in S}$. We show separately that the
  shares of all $S_i \setminus \{t_i\}$ are bounded by $4 c(T^*)$ and
  the shares of all $t_i$'s are bounded by $4 c(T^*)$.
 
For the first case we use Lemma~\ref{lem:cycle}. For any cycle $C_i$,
by considering $t_i$ as the root, Lemma~\ref{lem:cycle}
provides a bound on the shares of $S_i \setminus \{t_i\}$. So, $\sum_{v\in S_i \setminus 
  \{t_i\}} c_v(\P) \leq 2c(T^*_i)$.  Recall, that each edge of $E(T^*)$
belongs to at most two $C_i$'s, so by summing over all $i \in [r]$, 
$$ \sum_{i \in [r]}\sum_{v\in S_i \setminus \{t_i\}} c_v(\P) 
\leq 2\sum_{i \in [r]} c(T^*_i) \leq 4c(T^*).$$
 
The second case requires more careful treatment. The endpoints
of the edges of $E^*$ 
divide $C$ into a partition of nonzero-length arcs, $A_1, \ldots,
A_n$, named based on their clockwise appearance in $C$, starting from
an arc containing $t$.  For every $j \in [n]$, let $a_j$ and $b_j$ be
the two endpoints of $A_j$.  
The share of each $t_i$ can be bounded by its distance from $t_{i-1}$, for $i >1$ (recall 
that $t_1=t$). 
Let $A_{s_i}$ be an arc that $t_i$ lies, then 
\begin{eqnarray*}
\sum_{i\in [2,r]} c_{t_i}(\P) \leq \sum_{i\in [2,r]}d(t_i, t_{i-1}) \leq \sum_{i\in [r]}( d(a_{s_i}, t_i) + d(t_{i},b_{s_i})) + 
\sum_{j \in [n]\setminus\{s_1,\ldots,s_r\}} d(a_{j},b_{j})=F.
\end{eqnarray*}

We next upper bound $F$ by $\sum_{i\in [r]}2c(T_i^*)$. 
Note that each arc $A_j$ belongs to exactly one $C_i$ and every $C_i$ contains at least one such arc 
(otherwise $T^*$ would have a cycle). 
We concentrate to a specific $C_i$ and show that the portion of $F$ associated with 
$C_i$'s arcs is upper bounded by $2c(T_i^*)$. 

Let $A_{i1},...,A_{i{n_i}}$ be the arcs belonging 
to $C_i$ and $a_{ij}$, $b_{ij}$ be the endpoints of $A_{ij}$. 
Also let $A_{is}$ be an arc containing $t_i$. 
Recall that $T^*_i$ is a path and every edge of $E(C_i)\cap E^*$ belongs to 
$T^*_i$. Therefore, $T^*_i$ contains entirely all but one $A_{ij}$, say $A_{im}$ 
(see also Figure~\ref{fig:circle}(b)). 
We examine the two cases of $m=s$ and $m\neq s$ separately. 

\noindent\textbf{Case 1: }$\mathbf{m=s}.$ $a_{is}$, $b_{is}$ (as endpoints of edges of $E^*$) 
and $t_i$ are vertices of the path $T_i^*$. Therefore, either some path from $t_i$ to $a_{is}$ 
or some path from $t_i$ to $b_{is}$ belongs to $T_i^*$; w.l.o.g. assume that it is some path from $t_i$ to $a_{is}$. 
Then $\sum_{j\in [n_i],j\neq s} d(a_{ij},b_{ij}) + d(t_i,a_{is}) \leq c(T^*_i)$. Moreover, 
since $b_{is}$ and $t_i$ are vertices of $T_i^*$, $d(t_i,b_{is}) \leq c(T^*_i)$. 

\noindent\textbf{Case 2: }$\mathbf{m\neq s}.$
Similarly, $\sum_{j\in [n_i],j\neq m,s} d(a_{ij},b_{ij}) + d(t_i,a_{is}) + d(t_i,b_{is}) \leq c(T^*_i)$. 
Also $a_{im}$ and $b_{im}$ are vertices of $T^*_i$ and hence, $d(a_{im},b_{im})\leq c(T^*_i)$. 

To sum up, in both cases it holds that  
$$\sum_{j\in [n_i],j\neq s} d(a_{ij},b_{ij}) + d(t_i,a_{is}) + d(t_i,b_{is}) \leq 2c(T^*_i).$$
By summing over all $i$, 
$F \leq \sum_{i\in [r]}2c(T_i^*) \leq 4 c(T^*).$ 
Finally, by summing over the whole $S$, $c(\P) = \sum_{v\in S} c_v(\P) \leq 8c(T^*)$.
\end{proof}


\section{Stochastic Network Design}
\label{sec:stochastic}

In this section we study the {\em stochastic} model, where the set of active vertices is drawn 
from some probability distribution $\Pi$.   
Each vertex $v$ is activated independently  
with probability $p_v$;  
the set of the activated vertices are no longer 
picked adversarially, but it is sampled based on the probabilities $p_v$'s, i.e., 
the probability that set $S$ is active is $\Pi(S) = \prod_{v\in S} p_v \cdot \prod_{v \notin S} (1-p_v)$. 
On the other hand, the probabilities $p_v$'s (and therefore $\Pi$), are chosen adversarially.  
The cost sharing protocol is decided by the designer without the knowledge of the activated set 
and the designer may have knowledge of $\Pi$ or access to some oracle of $\Pi$.

We show that there exists a {\em randomized} ordered protocol that achieves 
constant PoA. This result holds even for the 
{\em black-box} model \cite{ST08}, meaning that 
the probabilities are not known to the designer, however she is allowed to draw 
independent (polynomially many) samples. 
On the other hand, if we assume that the probabilities $p_v$'s are known to the designer, 
there exists a {\em deterministic} ordered protocol that achieves 
constant PoA. We note that both protocols can be determined in polynomial time. 

The result for the randomized protocol depends on   
approximation ratios of the minimum Steiner tree problem. More precisely, given 
an $\alpha$-approximate minimum Steiner tree, we show an upper bound of $2(\alpha+2)$. 
The approximate tree is used in our algorithm as a base in order to construct a spanning tree, 
which finally determines an order of all vertices; the detailed algorithm is given in 
Algorithm \ref{alg:stochastic}. This algorithm and its slight variants have been used in 
different contexts: rend-or-buy problem \cite{GKPR07}, a priori TSP \cite{ST08} 
and, stochastic Steiner tree problem \cite{GGLS08}. 
  
\begin{algorithm}[H]
  \SetAlgorithmName{Algorithm}{algorithm}{List of Algorithms}
  \KwIn{A rooted graph $G=(V,E,t)$ and an oracle for the probability distribution $\Pi$.}
  \KwOut{$\Xi_{{rand}}$.}
  \begin{itemize}
	\item Choose a random set of vertices $R$ by drawing from distribution $\Pi$ 
	and construct an $\alpha$-approximate minimum Steiner tree, 
	$T_{\alpha}(R)$, over $R\cup\{t\}$.  
	\item Connect all other vertices 
	$V \setminus V(T_{\alpha}(R))$ with their nearest neighbor in $V(T_{\alpha}(R))$ (by breaking ties arbitrarily). 
	\item Double the edges of that tree and traverse some Eulerian tour starting from $t$. Order 
	the vertices based on their first appearance in the tour. 
	\end{itemize}
  \caption{Randomized order protocol $\Xi_{{rand}}$}
  \label{alg:stochastic}
\end{algorithm}

\begin{theorem}
\label{thm:stochasticRandomized}
Given an $\alpha$-approximate solution of the minimum Steiner tree
problem, $\Xi_{{rand}}$ has PoA at most $2(\alpha+2)$.
\end{theorem}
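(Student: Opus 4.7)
The plan is to combine three ingredients: (i) a shortcut bound on the NE cost via the Eulerian-tour structure of the spanning tree $T^*(R)$ that Algorithm~\ref{alg:stochastic} builds from the sample $R$; (ii) a decomposition of the minimal subtree of $T^*(R)$ that contains $S\cup\{t\}$ into its $\alpha$-approximate Steiner part and its attachment edges; and (iii) a swap argument that exploits the fact that the algorithmic sample $R$ and the active set $S$ are independent and identically $\Pi$-distributed.

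First, since $\Xi_{{rand}}$ is an ordered protocol, any Nash equilibrium on $S$ is realized by the Prim-like greedy procedure: writing $v_0=t,v_1,\ldots,v_k$ for $S\cup\{t\}$ listed in the order $\pi$ produced by the algorithm, the NE cost is at most $\sum_{i=1}^{k}d(v_i,v_{i-1})$. Let $T^*(R|S)$ denote the minimal subtree of $T^*(R)$ containing $S\cup\{t\}$. The Eulerian tour of the doubled $T^*(R)$, when projected onto $V(T^*(R|S))$ by deleting visits to outside vertices, becomes an Eulerian tour of the doubled $T^*(R|S)$: detours into subtrees of $T^*(R)$ disjoint from $V(T^*(R|S))$ begin and end at the same vertex and therefore contribute no new edges to the projection, while each edge of $T^*(R|S)$ is traversed exactly twice. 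The first-appearance order of this projected tour agrees with $\pi$ restricted to $V(T^*(R|S))$, so applying the triangle inequality along the segments of the projected tour between consecutive first appearances of vertices of $S\cup\{t\}$ yields $\sum_{i=1}^k d(v_i,v_{i-1}) \leq 2\,c(T^*(R|S))$.

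Second, by construction $c(T^*(R|S)) \leq c(T_\alpha(R)) + \sum_{v\in S\setminus V(T_\alpha(R))} d(v,V(T_\alpha(R)))$; since $R\cup\{t\}\subseteq V(T_\alpha(R))$ we have $d(v,V(T_\alpha(R))) \leq d(v,R\cup\{t\})$, and since $T_\alpha(R)$ is $\alpha$-approximate, $c(T_\alpha(R))\leq\alpha\,\opt(R\cup\{t\})$. Taking expectations over both $R\sim\Pi$ and $S\sim\Pi$ and using $\E_R[\opt(R\cup\{t\})]=\E_S[c(\O)]$ leads to
\[
\E_{R,S}\bigl[\max_{\P\in\mathcal{N}} c(\P)\bigr] \;\leq\; 2\alpha\,\E_S[c(\O)] \;+\; 2\sum_{v\in V} p_v\,\E_R[d(v,R\cup\{t\})].
\]

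The main obstacle is bounding the remaining sum by $2\,\E_S[c(\O)]$. Since $(R\cup\{t\})\setminus\{v\}\subseteq R\cup\{t\}$ we have $\E_R[d(v,R\cup\{t\})]\leq\E_R[d(v,(R\cup\{t\})\setminus\{v\})]$; by independence of activations, the distribution of $(R\cup\{t\})\setminus\{v\}$ coincides with that of $(S\cup\{t\})\setminus\{v\}$ conditioned on $v\in S$. Multiplying by $p_v$ and summing over $v\in V$ gives
\[
\sum_{v\in V} p_v\,\E_R[d(v,R\cup\{t\})] \;\leq\; \E_S\!\left[\sum_{v\in S} d(v,(S\cup\{t\})\setminus\{v\})\right].
\]
For each fixed $S$, doubling the edges of the optimal Steiner tree on $S\cup\{t\}$ and shortcutting the resulting Eulerian tour to a Hamiltonian cycle on $S\cup\{t\}$ bounds the inner sum by the cycle length, which is at most $2\,c(\O)$. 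Combining all the bounds yields expected NE cost at most $(2\alpha+4)\,\E_S[c(\O)] = 2(\alpha+2)\,\E_S[c(\O)]$, as required.
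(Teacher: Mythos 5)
Your proof is correct and follows essentially the same route as the paper's: bound the NE cost by consecutive distances in the Eulerian-tour order, relate this to a tree built from $T_\alpha(R)$ together with attachment edges for $S\setminus V(T_\alpha(R))$, and then swap $R$ for $S$ using independence of the two samples together with the independence of the event $v\in S$ from $d(v,(S\cup\{t\})\setminus\{v\})$. The only cosmetic differences are that you take the minimal subtree of the full spanning tree rather than the paper's $T_{R,S}$ (which yields the same cost bound), and you bound $\sum_{v\in S}d(v,(S\cup\{t\})\setminus\{v\})$ by shortcutting an Eulerian tour to a Hamiltonian cycle rather than via the MST-parent argument; both give the same factor $2$.
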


\begin{proof}
Let $\pi$ be the order of all vertices $V$, defined by $\Xi_{{rand}}$, and $S$ be the random set of activated vertices 
that require connectivity with $t$. For the rest of the proof we denote by $MST(S)$ 
a minimum spanning tree over $S\cup\{t\}$.

	Let $s_1, \ldots, s_r$ be the vertices of $S$ as appeared in $\pi$ and the strategy profile $\P_R(S)=(P_1,\ldots,P_r)$ 
	be a NE of set $S$. Under the convention that $s_0 = t$, $c_{s_i}(\P_R(S)) \leq d_G(s_i,s_{i-1})$ for all $s_i \in S$. 
	We construct a tree $T_{R,S}$ 
from the $T_{\alpha}(R)$ of Algorithm \ref{alg:stochastic}, by connecting only all vertices of 
	$S \setminus V(T_{\alpha}(R))$ with their nearest neighbor in $V(T_{\alpha}(R))$ 
	(by breaking ties in accordance to Algorithm \ref{alg:stochastic}). Note that, by doubling the edges of $T_{R,S}$, 
	there exists an Eulerian tour starting from $t$, where the order of   
	the vertices $S$ (based on their first appearance in the tour) is $\pi$  
	restricted to the set $S$. Therefore, 
	$\sum_{s_i \in S} d_{T_{R,S}}(s_i,s_{i-1}) +d_{T_{R,S}}(s_0,s_r)= 2 c(T_{R,S})$. By combining the above, 
	
	\begin{eqnarray}
	c(\P_R(S)) = \sum_{s_i\in S} c_{s_i}(\P_R(S)) \leq \sum_{s_i\in S} d_G(s_i,s_{i-1}) \leq  \sum_{s_i\in S} d_{T_{R,S}}(s_i,s_{i-1}) \leq 2 c(T_{R,S}). \label{boundStrategies}
	\end{eqnarray}
	Let $D_v(R)$ be the distance of $v$ from its nearest neighbor in $(R\cup\{t\})\setminus\{v\}$. 
	In the special case that $v=t$, we define $D_v(R)=0$ 
	Then, 
	\begin{eqnarray}
	c(T_{R,S}) = c(T_{\alpha}(R))+\sum_{v\in S\setminus V(T_{\alpha}(R))} D_v(V(T_{\alpha}(R))) \leq c(T_{\alpha}(R))+\sum_{v\in S} D_v(R). \label{T_RA}
	\end{eqnarray} 
	We use an indicator $I(v\in S)$ which is $1$ when $v \in S$ and $0$ otherwise; then 
	$\sum_{v\in S} D_v(R) = \sum_{v}I(v\in S)D_v(R)$. By taking the expectation over $R$ and $S$, 
	
	$$ \E_R[\E_S[c(T_{R,S})]] \leq  \E_R\left[c(T_{\alpha}(R))\right]+\E_R\left[\E_S\left[\sum_{v \in V}I(v\in S)D_v(R)\right]\right].$$
	Since $S$ and $R$ are independent samples we can bound the second term as:
	\begin{eqnarray}
	\E_R\left[\E_S\left[\sum_{v \in V}I(v\in S)D_v(R)\right]\right] 
	&=& \sum_{v \in V}\E_S[I(v\in S)]\E_R[D_v(R)] 
	= \sum_{v \in V}\E_S[I(v\in S)]\E_S[D_v(S)] \notag\\
	&=& \E_S\left[\sum_{v \in V}I(v\in S)D_v(S)\right]
	\leq \E_S[c(MST(S))]. \label{indExp}
	\end{eqnarray}
	The last equality holds since $D_v(S)$ is the distance of $v$ from its nearest neighbor in $(S\cup\{t\})\setminus\{v\}$ 
	and it is independent of the event $I(v\in S)$. For the inequality, $D_v(S)$ is upper bounded by 
	the minimum distance of $v$ from its parent in the $MST(S)$. 
	Let $T^*_S$ be the minimum Steiner tree over $S\cup\{t\}$, 
	then it is well known that $c(MST(S))\leq 2c(T^*_S)$. Overall, 
	$$ \E_R[\E_S[c(\P_R(S))]] \leq 2\E_R[\E_S[c(T_{R,S})]]\leq 2 (\E_S[c(T_{\alpha}(S))] + \E_S[c(MST(S))]) \leq 2(\alpha +2) \E_S[c(T^*_S)].$$  
	
\end{proof}

By applying the $1.39$-approximation algorithm of \cite{BGRS10} we get the following.

\begin{corollary}
$\Xi_{{rand}}$ has PoA at most $6.78$.
\end{corollary}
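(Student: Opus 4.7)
The plan is to invoke Theorem~\ref{thm:stochasticRandomized} as a black box with the best known approximation ratio $\alpha$ plugged in, since the theorem already gives the PoA bound as a linear function $2(\alpha+2)$ of any approximation factor we have available for the minimum Steiner tree problem. So the entire content of the corollary is to identify a good $\alpha$ and compute.

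First I would recall the setup of $\Xi_{rand}$ (Algorithm~\ref{alg:stochastic}): in Step~1 it draws a sample $R \sim \Pi$ and constructs \emph{some} $\alpha$-approximate Steiner tree $T_\alpha(R)$ on $R \cup \{t\}$. The theorem is completely agnostic to which approximation algorithm is used there; only the value of $\alpha$ enters the final bound. Therefore, the only task is to substitute the best available polynomial-time approximation factor for Steiner tree into the expression $2(\alpha + 2)$.

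Next I would cite the currently best-known approximation algorithm for the minimum Steiner tree problem, namely the $1.39$-approximation of Byrka, Grandoni, Rothvo{\ss}, and Sanit\`a~\cite{BGRS10}, which is a polynomial-time algorithm based on iterative randomized rounding of a hypergraphic LP relaxation. Using this algorithm inside Step~1 of $\Xi_{rand}$ we may take $\alpha = 1.39$, which also preserves the polynomial-time claim made at the beginning of the section.

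Finally, I would simply compute: by Theorem~\ref{thm:stochasticRandomized}, the PoA of $\Xi_{rand}$ is at most $2(\alpha + 2) = 2(1.39 + 2) = 6.78$, yielding the claimed bound. There is no genuine obstacle here; the only thing to be careful about is that the approximation algorithm of \cite{BGRS10} can indeed be used inside Step~1 without disturbing the rest of the analysis (it can, since the proof of Theorem~\ref{thm:stochasticRandomized} treats $T_\alpha(R)$ purely through the cost bound $\E_R[c(T_\alpha(R))] \leq \alpha \cdot \E_R[c(T^*_R)]$).
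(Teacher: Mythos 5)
Your proposal is correct and matches the paper's argument exactly: the paper likewise invokes the $1.39$-approximation of Byrka, Grandoni, Rothvo{\ss}, and Sanit\`a~\cite{BGRS10} and plugs $\alpha = 1.39$ into the bound $2(\alpha+2)$ from Theorem~\ref{thm:stochasticRandomized} to obtain $6.78$. Nothing is missing.
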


\begin{theorem}
\label{thm:stochasticDeterministic}
There exists a deterministic ordered protocol 
with PoA at most $16$.
\end{theorem}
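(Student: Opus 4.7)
The plan is to derandomize the protocol $\Xi_{rand}$ of Algorithm~\ref{alg:stochastic} using the method of conditional expectations applied to the $|V|$ independent Bernoulli random variables that define the sample $R$. Recall that the proof of Theorem~\ref{thm:stochasticRandomized} actually establishes the stronger statement
\begin{equation*}
\E_R\!\left[\E_S[c(\P_{R}(S))]\right] \;\leq\; 2\E_R\!\left[\E_S[c(T_{R,S})]\right] \;\leq\; 2(\alpha+2)\,\E_S[c(T^*_S)],
\end{equation*}
where $R\sim\Pi$ is encoded by independent coin flips $X_v\sim\mathrm{Bernoulli}(p_v)$. By a standard averaging argument, there must exist a deterministic outcome $R^*$ of these coins, and hence a deterministic order $\pi^*$ derived from $T_{\alpha}(R^*)$ exactly as in Algorithm~\ref{alg:stochastic}, whose NE cost under the random activation $S$ satisfies the same bound.

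The algorithmic content is to find such $R^*$ in polynomial time. The plan is to iterate through the vertices $v_1,\dots,v_N$ in arbitrary order and, at step $i$, fix $X_{v_i}\in\{0,1\}$ to whichever value minimises the conditional expectation of $\E_S[c(\P_R(S))]$ given the previously fixed coins, with the remaining $X_{v_j}$ still drawn from their marginals. To circumvent the difficulty that this conditional expectation is not available in closed form, I would use the pessimistic estimator philosophy of \cite{WZ07,ST08}: replace the true conditional expectation by the analytic upper bound exhibited in the proof of Theorem~\ref{thm:stochasticRandomized}, namely
\begin{equation*}
2\Bigl(\,\E_{R}[c(T_{\alpha}(R))\mid X_{v_1},\dots,X_{v_i}] \,+\, \sum_{v\in V} p_v\, \E_R[D_v(R)\mid X_{v_1},\dots,X_{v_i}]\Bigr).
\end{equation*}
Choosing $\alpha=2$ (the MST-based $2$-approximation, which keeps everything deterministic and combinatorial) the first term is computable exactly by running Kruskal/Prim on the fixed partial realisation together with the conditional probabilities, and the second term decouples into independent contributions per vertex $v$ of the form $p_v\cdot \E_R[D_v(R)\mid\cdot]$, each of which can be computed (or, if needed, approximated within a factor of at most $2$ via a straightforward nearest-neighbour argument) from the current conditional marginals. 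Hence the pessimistic estimator never increases as we fix coins one at a time, and after all $N$ steps we obtain a deterministic $R^*$ with $\E_S[c(\P_{R^*}(S))]\le 16\,\E_S[c(T^*_S)]$.

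The main obstacle is controlling the loss incurred by the pessimistic estimator: naively bounding $c(\P_R(S))$ in the conditional expectation sense can cost an extra constant factor over and above the factor $2(\alpha+2)=8$ of the randomized analysis, and the goal is to keep this loss at most $2$ so that the final bound is $16$. The saving grace is that both quantities entering the estimator are \emph{linear} (resp. sub-additive) in the Bernoulli variables $X_v$ once conditioned: $c(T_\alpha(R))$ is monotone and sub-additive in the set $R$, and $\sum_v p_v D_v(R)$ is a sum in which the randomness of $R$ and $S$ are independent, precisely the decomposition used in \eqref{indExp} of the proof of Theorem~\ref{thm:stochasticRandomized}. This decoupling is what lets the conditional-expectation steps be executed coordinate-wise and keeps the cumulative loss within the required factor of $2$.
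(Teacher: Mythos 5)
Your high-level plan (derandomize $\Xi_{\mathrm{rand}}$ via the method of conditional expectations / pessimistic estimators, following \cite{WZ07,ST08}) is the same as the paper's, and your opening averaging argument correctly establishes pure \emph{existence} of a good deterministic $R^*$. However, the algorithmic part of your proposal contains a genuine gap, and it is precisely the obstacle the paper devotes most of its proof to.

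You claim that $\E_R[c(T_\alpha(R))\mid X_{v_1},\dots,X_{v_i}]$ (with $\alpha=2$, the MST-based $2$-approximation) ``is computable exactly by running Kruskal/Prim on the fixed partial realisation together with the conditional probabilities.'' This is not true. The expectation is over exponentially many realizations of the unfixed coins, and running a single MST computation on some deterministic weighted graph built from conditional probabilities does not produce this expectation; the cost of an MST is a highly nonlinear function of which vertices appear. The paper explicitly flags exactly this difficulty: computing $\E_R[c(T_\alpha(R))\mid Q_1\subseteq R, Q_2\cap R=\emptyset]$ ``seems hard to be handled.'' Your proposal does not resolve it — it assumes it away. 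The paper's resolution is to replace $\E_R[c(T_\alpha(R))\mid\cdot]$ by an LP-based surrogate: solve the LP relaxation of the (rooted) Connected Facility Location Problem once, use its optimal solution $(\z^*,\x^*)$ to build a \emph{fractional} Steiner-tree solution $\bar{\y}$ for any $R$, and observe that $\E_R[\bar c_{ST}(R)\mid\cdot]$ has a closed form linear in the per-vertex conditional probabilities (Claim~\ref{est<B+C} and the display right after it). The pessimistic estimator is $EST(Q_1,Q_2)=2\E_R[\bar c_{ST}(R)\mid\cdot]+\E_R[\bar c_C(R)\mid\cdot]$; the factor of $2$ in front is exactly what lets the GW primal-dual algorithm convert the fractional $\bar{\y}$ into an actual tree $T_{PD}(R^*)$ at the end. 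The factor accounting is $B^*+C^*\le 3\,\E_S[c(T^*_S)]$, hence $EST(\{t\},\emptyset)\le 8\,\E_S[c(T^*_S)]$, and finally $\E_S[c(\P_{R^*}(S))]\le 2\,EST(R^*,V\setminus R^*)\le 16\,\E_S[c(T^*_S)]$. Your ``keep the loss at most $2$'' claim has no mechanism behind it without this LP machinery; the appeal to sub-additivity of $c(T_\alpha(R))$ does not yield efficient evaluation of a valid pessimistic estimator.

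Your second term, $\sum_v p_v\,\E_R[D_v(R)\mid\cdot]$, is indeed efficiently computable (this is Claim~2.1 of \cite{WZ07} and the paper uses it as well), so that part of your plan is fine. The missing ingredient is the LP surrogate for the Steiner-tree part.
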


\begin{proof}
We use derandomization techniques similar to \cite{WZ07,ST08} and for completeness we give the full proof here. 
First we discuss how we can get a PoA of $6.78$, if we drop the requirement of determining the protocol in polynomial time. 
	Similar to the proof of Theorem~\ref{thm:stochasticRandomized} we define the tree $T_{R,S}$ for the random activated set $S$ 
	as follows: we construct $T_{R,S}$ 
from the $T_{\alpha}(S)$ of Algorithm \ref{alg:stochastic}, by connecting only all vertices of 
	$S \setminus V(T_{\alpha}(R))$ with their nearest neighbor in $V(T_{\alpha}(R))$ 
	(by breaking ties in accordance to Algorithm \ref{alg:stochastic}). 
We apply the standard derandomization approach of {\em conditional expectation method} on $T_{R,S}$. 
More precisely, we construct a deterministic set $Q_1$ to replace the random $R$ in Algorithm~\ref{alg:stochastic}, 
by deciding for each vertex of $V\setminus \{t\}$, one by one, whether it belongs 
to $R$ or not. Assume that we have already processed the set $Q \subset V$ and we have decided that 
for its partition $(Q_1,Q_2)$, $Q_1 \subseteq R$ and $Q_2 \cap R=\emptyset$ (starting from $Q_1=\{t\}$ and $Q_2=\emptyset$). Let $v$ be the next vertex to be processed. 
From the conditional expectations and the independent activations we know that 
\begin{eqnarray*}
\E_{S,R}[c(T_{R,S})| Q_1 \subseteq R, Q_2 \cap R=\emptyset]
&=& \E_{S,R}[c(T_{R,S})| Q_1 \subseteq R, Q_2 \cap R=\emptyset,v \in R]p_v\\
&&+\E_{S,R}[c(T_{R,S})| Q_1 \subseteq R, Q_2 \cap R=\emptyset,v\notin R](1-p_v),
\end{eqnarray*}
meaning that  
\begin{eqnarray*}
&\mbox{either} &\E_{S,R}[c(T_{R,S})| Q_1 \subseteq R, Q_2 \cap R=\emptyset,v \in R] \leq \E_{S,R}[c(T_{R,S})| Q_1 \subseteq R, Q_2 \cap R=\emptyset],\\ 
&\mbox{or}&\E_{S,R}[c(T_{R,S})| Q_1 \subseteq R, Q_2 \cap R=\emptyset,v\notin R] \leq \E_{S,R}[c(T_{R,S})| Q_1 \subseteq R, Q_2 \cap R=\emptyset].
\end{eqnarray*} 
In the first case we add $v$ in $Q_1$ and in the second case we add $v$ in $Q_2$.  
Therefore, after processing all vertices, $E_S[c(T_{Q_1,S})] \leq \E_{S,R}[c(T_{R,S})]$. If we replace the sampled $R$ of Algorithm \ref{alg:stochastic} 
with the deterministic set $Q_1$, we can get the same bound on the PoA with the randomized protocol of Theorem \ref{thm:stochasticRandomized}.

However, the value of $\E_{S,R}[c(T_{R,S})| Q_1 \subseteq R, Q_2 \cap R=\emptyset]$ seems difficult to be computed in polynomial time; 
the reason is that it involves the computation of $\E_{R}[c(T_{\alpha}(R))| Q_1 \subseteq R, Q_2 \cap R=\emptyset]$ 
which seems hard to be handled.  
To overcome this problem we use an estimator $EST(Q_1,Q_2)$ of $\E_{S,R}[c(T_{R,S})| Q_1 \subseteq R, Q_2 \cap R=\emptyset]$, 
which is constant away from the optimum $\E_S[c(T^*_S)]$, 
where $T^*_S$ is the minimum Steiner 
tree over $S\cup\{t\}$. 
Following \cite{WZ07,ST08}, we use the optimum solution of the relaxed Connected Facility Location Problem (CFLP) on $G$ in order 
to construct a feasible solution $\bar{\y}$ of the relaxed Steiner Tree Problem (STP) for a given set $R$. We show that 
the objective's value of the fractional STP for $\bar{\y}$ is constant away from $\E_S[c(T^*_S)]$ and 
that its (conditional) expectation over $R$ can be efficiently computed. 
This quantity is used in order to construct the estimator $EST(Q_1,Q_2)$. 
We apply the method of conditional expectations on $EST(Q_1,Q_2)$ and after processing all vertices, by using 
the primal-dual algorithm \cite{GW95}, we compute a Steiner tree on $Q_1$ with cost no more than twice the cost of the fractional solution. 

In the rooted CFLP, a rooted graph $G=(V,E,t)$ is given and the designer should select some facilities to open,  
including $t$, and connects them via some Steiner tree $T$. Every other vertex is assigned to some facility. 
The cost of the solution is $M$ ($M>1)$ times the cost of $T$, plus the distance of every other vertex 
from its assigned facility. Our analysis requires to consider a slightly different cost of the solution, 
which is the cost of $T$, plus the distance of every other vertex $v$ from its assigned facility multiplied by $p_v$. 
In the following LP relaxation of the CFLP, $z_{e}$ and $x_{ij}$ are $0$-$1$ variables indicate, respectively, if 
$e \in E(T)$ and whether the vertex $j$ is assigned to facility $i$. 
$\delta(U)$ denotes the set of edges with one endpoint in $U$ and the other in $V\setminus U$, 
$d(i,j)$ denotes the minimum distance between vertices $i$ and $j$ in $G$ and $c_e$ is the cost of edge $e$.   

\vspace{10pt}
\begin{tabular}{|l r  l r|}
\hline
\multicolumn{4}{|l|}{LP1: CFLP}
\\\hline
&$\min$ &$B+C$ &\\
subject to & $\sum_{i \in V} x_{ij}$ =& 1& $\forall j \in V $\\
&$\sum_{e \in \delta(U)}z_e\geq$&$\sum_{i\in U}x_{ij}$&$\forall j\in V, \forall U \subseteq V\setminus\{t\}$\\
&$B=$&$\sum_{e \in E}c_e z_{e}$&\\
&$C=$&$\sum_{j\in V} p_j\sum_{i \in V}d{(i,j)}x_{ij}$&$$\\
&$z_{e},x_{ij}\geq$&$0$&$\forall i,j \in V$ and $\forall e\in E$\\\hline
\end{tabular}
\vspace{10pt}

Let $(\z^*=(z^*_{e})_{e},\x^*=(x^*_{ij})_{ij},B^*,C^*)$ be the optimum solution of LP1. 
\begin{claim}
\label{B+C<3Opt}
$B^*+C^* \leq 3\E_S[c(T_S^*)]$. 
\end{claim}
\begin{proof}
Given a set $S \subseteq V$, for every edge $e \in T^*_S$ ($T^*_S$ is the minimum Steiner tree over 
$S\cup\{t\}$) let $z_e = 1$ and for $e \notin T^*_S$ let $z_e = 0$. 
For every $j \in V$ let $x_{ij} = 1$ if $i$ is $j$'s  
nearest neighbor in $(S\cup\{t\})\setminus \{j\}$. 
Set the rest of $x_{ij}$ equal to $0$. Note that this is a feasible solution of LP1 
with objective value $B_S+C_S \leq c(T^*_S)+\sum_{v\in V} p_v D_v(S)$. By taking the expectation 
over $S$, 
\begin{eqnarray*}
B^*+C^* &\leq& \E_S[B_S+C_S] \leq \E_S[c(T^*_S)]+\sum_{v\in V} \E_S[I(v\in S)] \E_S[D_v(S)] = \E_S[c(T^*_S)]+ \E_S\left[\sum_{v\in S} D_v(S)\right]\\
&\leq& \E_S[c(T^*_S)]+\E_S[c(MST(S))]\leq 3\E_S[c(T^*_S)],
\end{eqnarray*}
\end{proof}

By using the solution $(\z^*=(z^*_{e})_{e},\x^*=(x^*_{ij})_{ij},B^*,C^*)$, we construct a feasible solution for the 
following LP relaxation of the STP over some set $R\cup\{t\}$.  

\vspace{10pt}
\begin{tabular}{|l r  l r|}
\hline
\multicolumn{4}{|l|}{LP2: STP over $R\cup \{t\}$}
\\\hline
&$\min$ &$\sum_{e \in E}c_{e}y_{e}$ &\\
subject to & $\sum_{e \in \delta(U)}y_{e}\geq$&$1$& $\forall U \subseteq V\setminus\{t\}:R\cap U \neq \emptyset$\\
&$y_{e}\geq$&$0$&$\forall e \in E$\\\hline
\end{tabular}
\vspace{10pt}

We define $a_{ij}(e)=1$ if $e$ lies in the shortest path between $i$ and $j$ and it is $0$ otherwise.  
For every edge $e$ we set $\bar{y}_{e}=z^*_{e}+\sum_{j\in R} \sum_{i \in V} a_{ij}(e)x^*_{ij}$.

\begin{claim}
$\bar{\y} = (\bar{y}_{e})_{e}$ is a feasible solution for LP2.
\end{claim}
\begin{proof}
The proof is identical with the one in \cite{WZ07} but we give it here for completeness. 
Consider any set $U\subseteq V\setminus\{t\}$ such that $R\cap U \neq \emptyset$ and let $\ell \in R\cap U$. It follows that 

\begin{eqnarray*}
\sum_{e \in \delta(U)}\bar{y}_{e} &\geq& \sum_{e \in \delta(U)}z^*_{e} + \sum_{e \in \delta(U)}\sum_{j \in R} \sum_{i \in V} a_{ij}(e) x^*_{ij} 
\geq \sum_{i \in U}x^*_{i\ell} + \sum_{e \in \delta(U)} \sum_{i \in V} a_{i\ell}(e) x^*_{i\ell} \\
&\geq& \sum_{i \in U}x^*_{i\ell} + \sum_{i \notin U} x^*_{i\ell}  \sum_{e \in \delta(U)}a_{i\ell}(e) 
\geq  \sum_{i \in U}x^*_{i\ell} + \sum_{i \notin U} x^*_{i\ell}  = 1.
\end{eqnarray*}
For the last inequality, note that $a_{i\ell}(e)$ should be $1$ for at least one $e \in \delta(U)$ since $i \notin U$ and $\ell \in U$.
\end{proof}

\begin{claim}
\label{est<B+C}
Let $\bar{c}_{ST}(R)$ be the cost of the objective of LP2 induced by the solution $\bar{\y}$. 
Then $\E_R[\bar{c}_{ST}(R)] = B^*+C^*$.
\end{claim}
\begin{proof}
\begin{eqnarray*}
\E_R\left[\bar{c}_{ST}(R)\right] &=& \E_R\left[\sum_{e\in E} c_e (z^*_{e}+\sum_{j\in R} \sum_{i \in V} a_{ij}(e)x^*_{ij})\right] 
 = B^* +\E_R\left[\sum_{j\in R} \sum_{i \in V} \sum_{e\in E} c_e a_{ij}(e)x^*_{ij}\right] \\
&=& B^* +\E_R\left[\sum_{j\in R} \sum_{i \in V} d(i,j)x^*_{ij}\right] 
= B^* +\sum_{j\in V} p_j \sum_{i \in V} d(i,j)x^*_{ij} 
= B^* + C^*.
\end{eqnarray*}
\end{proof}

Observe that due to the expression of $\bar{\y}$ we can efficiently 
compute any conditional expectation $\E[\bar{c}_{ST}(R)| Q_1 \subseteq R, Q_2 \cap R=\emptyset]$; 
this is because 
$$\E_R\left[\sum_{j\in R} \sum_{i \in V} a_{ij}(e)x^*_{ij}|Q_1 \subseteq R, Q_2 \cap R=\emptyset\right] = 
\sum_{j\in Q_1} \sum_{i \in V} a_{ij}(e)x^*_{ij} + \sum_{j\notin Q_1\cup Q_2} p_j\sum_{i \in V} a_{ij}(e)x^*_{ij}.$$
We further define $c_{C}(R) = \sum_{v\in V} p_v D_v(R)$. We can also efficiently compute any conditional 
expectation $\E[c_{C}(R)| Q_1 \subseteq R, Q_2 \cap R=\emptyset]$ (Claim 2.1 of \cite{WZ07}).  
We are ready to define our estimator: 
$$EST(Q_1,Q_2)=2\E_R[\bar{c}_{ST}(R)|Q_1 \subseteq R, Q_2 \cap R=\emptyset]+\E_R[\bar{c}_{C}(R)|Q_1 \subseteq R, Q_2 \cap R=\emptyset].$$
 
Our goal is to define a deterministic set $R^*$ to replace the sampled $R$ of Algorithm \ref{alg:stochastic}. 
We process the vertices one by one and we decide if they belong to $R^*$ by using the model conditional expectations on $EST(Q_1,Q_2)$. 
More specifically, assume that we have already processed the sets $Q_1$ and $Q_2$ (starting from $Q_1=\{t\}$ and $Q_2=\emptyset$)
 such that $Q_1 \subseteq R^*$ and 
$Q_2 \cap R^* = \emptyset$. Let $v$ be the next vertex to be processed. 
From the conditional expectations and the independent activations we know that 
$EST(Q_1,Q_2)=p_v EST(Q_1\cup\{v\},Q_2)+(1-p_v)EST(Q_1,Q_2\cup\{v\}).$ 
If $EST(Q_1\cup\{v\},Q_2) \leq EST(Q_1,Q_2)$ we add $v$ to $Q_1$, otherwise we add $v$ to $Q_2$. 
After processing all vertices and by using Claims \ref{B+C<3Opt} and \ref{est<B+C}, 
\begin{eqnarray*}
EST(R^*,V\setminus R^*) &\leq& EST(\{t\},\emptyset) \leq 6\E_S[c(T_S^*)] + \sum_{v\in V} p_v E_R[D_v(R)] \\
&=&6\E_S[c(T_S^*)] + E_R\left[\sum_{v\in V} I(v\in R) D_v(R)\right] \\
&\leq& 6\E_S[c(T_S^*)] + E_R[c(MST(R))] \leq 8\E_S[c(T_S^*)].
\end{eqnarray*}  
Let $T_{PD}(R^*)$ be the Steiner tree over $R^*\cup \{t\}$ computed by the primal-dual algorithm \cite{GW95}. Then, 
$$EST(R^*,V\setminus R^*) = 2\bar{c}_{ST}(R^*) + \sum_{v\in V} p_v D_v(R^*)\geq c(T_{PD}(R^*)) + 
\E_S\left[\sum_{v\in S} D_v(R^*)\right].$$ 
By combining inequalities \eqref{boundStrategies} and \eqref{T_RA} (after replacing $R$ by $R^*$ and $T_{\alpha}(R^*)$ by $T_{PD}(R^*)$) with all the above, we have that  
\begin{eqnarray*}
\E_S[c(\P_{R^*}(S))] \leq 2 \left(c(T_{PD}(R^*)) + \E_S\left[\sum_{v\in S} D_v(R^*)\right]\right) \leq 2EST(R^*,V\setminus R^*) \leq 16\E_S[c(T_S^*)].
\end{eqnarray*}
 
\end{proof}


\newpage
\bibliographystyle{plain} 
\bibliography{cost-sharing}

\begin{thebibliography}{10}

\bibitem{AbedCH14}
Fidaa Abed, Jos{\'{e}}~R. Correa, and Chien{-}Chung Huang.
\newblock Optimal coordination mechanisms for multi-job scheduling games.
\newblock In {\em Algorithms - {ESA} 2014 - 22th Annual European Symposium,
  Wroclaw, Poland, September 8-10, 2014. Proceedings}, pages 13--24, 2014.

\bibitem{AbedH12}
Fidaa Abed and Chien{-}Chung Huang.
\newblock Preemptive coordination mechanisms for unrelated machines.
\newblock In {\em Algorithms - {ESA} 2012 - 20th Annual European Symposium,
  Ljubljana, Slovenia, September 10-12, 2012. Proceedings}, pages 12--23, 2012.

\bibitem{AA93}
Noga Alon and Yossi Azar.
\newblock On-line {S}teiner trees in the euclidean plane.
\newblock {\em Discrete {\&} Computational Geometry}, 10:113--121, 1993.

\bibitem{AlonRSV06}
Noga Alon, Rados Radoicic, Benny Sudakov, and Jan Vondr{\'{a}}k.
\newblock A ramsey-type result for the hypercube.
\newblock {\em Journal of Graph Theory}, 53(3):196--208, 2006.

\bibitem{ADKTWR08}
Elliot Anshelevich, Anirban Dasgupta, Jon~M. Kleinberg, {\'{E}}va Tardos, Tom
  Wexler, and Tim Roughgarden.
\newblock The price of stability for network design with fair cost allocation.
\newblock {\em {SIAM} J. Comput.}, 38(4):1602--1623, 2008.

\bibitem{ADTW08}
Elliot Anshelevich, Anirban Dasgupta, {\'E}va Tardos, and Tom Wexler.
\newblock Near-optimal network design with selfish agents.
\newblock {\em Theory of Computing}, 4(1):77--109, 2008.

\bibitem{AAB04}
Baruch Awerbuch, Yossi Azar, and Yair Bartal.
\newblock On-line generalized {S}teiner problem.
\newblock {\em Theor. Comput. Sci.}, 324(2-3):313--324, 2004.

\bibitem{AJM07}
Yossi Azar, Kamal Jain, and Vahab Mirrokni.
\newblock ({A}lmost) optimal coordination mechanisms for unrelated machine
  scheduling.
\newblock In {\em Proceedings of the Nineteenth Annual ACM-SIAM Symposium on
  Discrete Algorithms}, SODA '08, pages 323--332, Philadelphia, PA, USA, 2008.
  Society for Industrial and Applied Mathematics.

\bibitem{BC97}
Piotr Berman and Chris Coulston.
\newblock On-line algorithms for {S}teiner tree problems (extended abstract).
\newblock In {\em STOC}, pages 344--353, 1997.

\bibitem{BG88}
Dimitris Bertsimas and Michelangelo Grigni.
\newblock On the spacefilling curve heuristic for the euclidean traveling
  salesman problem, 1988.

\bibitem{BJO90}
Dimitris~J. Bertsimas, Patrick Jaillet, and Amedeo~R. Odoni.
\newblock A priori optimization.
\newblock {\em Operations Research}, 38(6):pp. 1019--1033, 1990.

\bibitem{BIKM14}
Sayan Bhattacharya, Sungjin Im, Janardhan Kulkarni, and Kamesh Munagala.
\newblock Coordination mechanisms from (almost) all scheduling policies.
\newblock In {\em Innovations in Theoretical Computer Science, ITCS'14,
  Princeton, NJ, USA, January 12-14, 2014}, pages 121--134, 2014.

\bibitem{BKM14}
Sayan Bhattacharya, Janardhan Kulkarni, and Vahab~S. Mirrokni.
\newblock Coordination mechanisms for selfish routing over time on a tree.
\newblock In {\em Automata, Languages, and Programming - 41st International
  Colloquium, {ICALP} 2014, Copenhagen, Denmark, July 8-11, 2014, Proceedings,
  Part {I}}, pages 186--197, 2014.

\bibitem{BB11}
Vittorio Bil{\`{o}} and Roberta Bove.
\newblock Bounds on the price of stability of undirected network design games
  with three players.
\newblock {\em Journal of Interconnection Networks}, 12(1-2):1--17, 2011.

\bibitem{BCFM13}
Vittorio Bil{\`{o}}, Ioannis Caragiannis, Angelo Fanelli, and Gianpiero Monaco.
\newblock Improved lower bounds on the price of stability of undirected network
  design games.
\newblock {\em Theory Comput. Syst.}, 52(4):668--686, 2013.

\bibitem{BFM13}
Vittorio Bil{\`{o}}, Michele Flammini, and Luca Moscardelli.
\newblock The price of stability for undirected broadcast network design with
  fair cost allocation is constant.
\newblock In {\em 54th Annual {IEEE} Symposium on Foundations of Computer
  Science, {FOCS} 2013, 26-29 October, 2013, Berkeley, CA, {USA}}, pages
  638--647, 2013.

\bibitem{BGRS10}
Jaroslaw Byrka, Fabrizio Grandoni, Thomas Rothvo{\ss}, and Laura Sanit{\`{a}}.
\newblock An improved lp-based approximation for steiner tree.
\newblock In {\em Proceedings of the 42nd {ACM} Symposium on Theory of
  Computing, {STOC} 2010, Cambridge, Massachusetts, USA, 5-8 June 2010}, pages
  583--592, 2010.

\bibitem{Caragiannis09}
Ioannis Caragiannis.
\newblock Efficient coordination mechanisms for unrelated machines scheduling.
\newblock In {\em SODA}, pages 815--824, 2009.

\bibitem{CKMNS08}
Moses Charikar, Howard~J. Karloff, Claire Mathieu, Joseph Naor, and Michael~E.
  Saks.
\newblock Online multicast with egalitarian cost sharing.
\newblock In {\em SPAA}, pages 70--76, 2008.

\bibitem{CCLNO07}
Chandra Chekuri, Julia Chuzhoy, Liane Lewin-Eytan, Joseph Naor, and Ariel Orda.
\newblock Non-cooperative multicast and facility location games.
\newblock {\em IEEE Journal on Selected Areas in Communications},
  25(6):1193--1206, 2007.

\bibitem{CR09}
Ho{-}Lin Chen and Tim Roughgarden.
\newblock Network design with weighted players.
\newblock {\em Theory Comput. Syst.}, 45(2):302--324, 2009.

\bibitem{CRV10}
Ho{-}Lin Chen, Tim Roughgarden, and Gregory Valiant.
\newblock Designing network protocols for good equilibria.
\newblock {\em {SIAM} J. Comput.}, 39(5):1799--1832, 2010.

\bibitem{CCLPS09}
George Christodoulou, Christine Chung, Katrina Ligett, Evangelia Pyrga, and Rob
  van Stee.
\newblock On the price of stability for undirected network design.
\newblock In {\em Approximation and Online Algorithms, 7th International
  Workshop, {WAOA} 2009, Copenhagen, Denmark, September 10-11, 2009. Revised
  Papers}, pages 86--97, 2009.

\bibitem{CKN09}
George Christodoulou, Elias Koutsoupias, and Akash Nanavati.
\newblock Coordination mechanisms.
\newblock {\em Theor. Comput. Sci.}, 410(36):3327--3336, 2009.

\bibitem{CMP14}
George Christodoulou, Kurt Mehlhorn, and Evangelia Pyrga.
\newblock Improving the price of anarchy for selfish routing via coordination
  mechanisms.
\newblock {\em Algorithmica}, 69:619--640, 2014.

\bibitem{CCG+11}
Richard Cole, Jos{\'{e}}~R. Correa, Vasilis Gkatzelis, Vahab~S. Mirrokni, and
  Neil Olver.
\newblock Inner product spaces for minsum coordination mechanisms.
\newblock In {\em Proceedings of the 43rd {ACM} Symposium on Theory of
  Computing, {STOC} 2011, San Jose, CA, USA, 6-8 June 2011}, pages 539--548,
  2011.

\bibitem{CFN85}
G{\'e}rard Cornu{\'e}jols, Jean Fonlupt, and Denis Naddef.
\newblock The traveling salesman problem on a graph and some related integer
  polyhedra.
\newblock {\em Mathematical Programming}, 33(1):1--27, 1985.

\bibitem{DFKM13}
Yann Disser, Andreas~Emil Feldmann, Max Klimm, and Mat{\'{u}}s Mihal{\'{a}}k.
\newblock Improving the ${H_k}$-bound on the price of stability in undirected
  shapley network design games.
\newblock In {\em Algorithms and Complexity, 8th International Conference,
  {CIAC} 2013, Barcelona, Spain, May 22-24, 2013. Proceedings}, pages 158--169,
  2013.

\bibitem{FKLOS06}
Amos Fiat, Haim Kaplan, Meital Levy, Svetlana Olonetsky, and Ronen Shabo.
\newblock On the price of stability for designing undirected networks with fair
  cost allocations.
\newblock In {\em Automata, Languages and Programming, 33rd International
  Colloquium, {ICALP} 2006, Venice, Italy, July 10-14, 2006, Proceedings, Part
  {I}}, pages 608--618, 2006.

\bibitem{GGLS08}
Naveen Garg, Anupam Gupta, Stefano Leonardi, and Piotr Sankowski.
\newblock Stochastic analyses for online combinatorial optimization problems.
\newblock In {\em Proceedings of the Nineteenth Annual {ACM-SIAM} Symposium on
  Discrete Algorithms, {SODA} 2008, San Francisco, California, USA, January
  20-22, 2008}, pages 942--951, 2008.

\bibitem{GKR14}
Vasilis Gkatzelis, Konstantinos Kollias, and Tim Roughgarden.
\newblock Optimal cost-sharing in weighted congestion games.
\newblock In {\em Web and Internet Economics - 10th International Conference,
  {WINE} 2014, Beijing, China, December 14-17, 2014. Proceedings}, pages
  72--88, 2014.

\bibitem{GW95}
Michel~X. Goemans and David~P. Williamson.
\newblock A general approximation technique for constrained forest problems.
\newblock {\em SIAM J. Comput.}, 24(2):296--317, April 1995.

\bibitem{GMW13}
Ragavendran Gopalakrishnan, Jason~R. Marden, and Adam Wierman.
\newblock Potential games are necessary to ensure pure nash equilibria in cost
  sharing games.
\newblock In {\em EC}, pages 563--564, 2013.

\bibitem{GKSS10}
Igor Gorodezky, Robert~D. Kleinberg, David~B. Shmoys, and Gwen Spencer.
\newblock Improved lower bounds for the universal and \emph{a priori} {TSP}.
\newblock In {\em Approximation, Randomization, and Combinatorial Optimization.
  Algorithms and Techniques, 13th International Workshop, {APPROX} 2010, and
  14th International Workshop, {RANDOM} 2010, Barcelona, Spain, September 1-3,
  2010. Proceedings}, pages 178--191, 2010.

\bibitem{Ram90}
Ronald~L. Graham, Bruce~L. Rothschild, and Joel~H. Spencer.
\newblock {\em Ramsey Theory, 2nd Edition}.
\newblock Wiley Series in Discrete Mathematics and Optimization, 1990.

\bibitem{GKPR07}
Anupam Gupta, Amit Kumar, Martin P{\'{a}}l, and Tim Roughgarden.
\newblock Approximation via cost sharing: Simpler and better approximation
  algorithms for network design.
\newblock {\em J. {ACM}}, 54(3):11, 2007.

\bibitem{HKL06}
Mohammad~Taghi Hajiaghayi, Robert~D. Kleinberg, and Frank~Thomson Leighton.
\newblock Improved lower and upper bounds for universal {TSP} in planar
  metrics.
\newblock In {\em Proceedings of the Seventeenth Annual {ACM-SIAM} Symposium on
  Discrete Algorithms, {SODA} 2006, Miami, Florida, USA, January 22-26, 2006},
  pages 649--658, 2006.

\bibitem{IW91}
Makoto Imase and Bernard~M. Waxman.
\newblock Dynamic {S}teiner tree problem.
\newblock {\em SIAM J. Discrete Math.}, 4(3):369--384, 1991.

\bibitem{ILMS05}
Nicole Immorlica, Li~Li, Vahab~S. Mirrokni, and Andreas Schulz.
\newblock Coordination mechanisms for selfish scheduling.
\newblock In {\em Internet and Network Economics, First International Workshop
  (WINE)}, pages 55--69, 2005.

\bibitem{JLNRS05}
Lujun Jia, Guolong Lin, Guevara Noubir, Rajmohan Rajaraman, and Ravi Sundaram.
\newblock Universal approximations for {TSP}, {S}teiner tree, and set cover.
\newblock In {\em Proceedings of the 37th Annual {ACM} Symposium on Theory of
  Computing, Baltimore, MD, USA, May 22-24, 2005}, pages 386--395, 2005.

\bibitem{KM00}
David~R. Karger and Maria Minkoff.
\newblock Building steiner trees with incomplete global knowledge.
\newblock In {\em 41st Annual Symposium on Foundations of Computer Science,
  {FOCS} 2000, 12-14 November 2000, Redondo Beach, California, {USA}}, pages
  613--623, 2000.

\bibitem{Kol08}
Konstantinos Kollias.
\newblock Non-preemptive coordination mechanisms for identical machine
  scheduling games.
\newblock In {\em SIROCCO}, pages 197--208, 2008.

\bibitem{KP99}
Elias Koutsoupias and Christos~H. Papadimitriou.
\newblock Worst-case equilibria.
\newblock In {\em STACS}, pages 404--413, 1999.

\bibitem{LL13}
Euiwoong Lee and Katrina Ligett.
\newblock Improved bounds on the price of stability in network cost sharing
  games.
\newblock In {\em {ACM} Conference on Electronic Commerce, {EC} '13,
  Philadelphia, PA, USA, June 16-20, 2013}, pages 607--620, 2013.

\bibitem{Li09}
Jian Li.
\newblock An o(log(n)/log(log(n))) upper bound on the price of stability for
  undirected shapley network design games.
\newblock {\em Inf. Process. Lett.}, 109(15):876--878, 2009.

\bibitem{PB89}
Loren~K. Platzman and John J.~Bartholdi III.
\newblock Spacefilling curves and the planar travelling salesman problem.
\newblock {\em J. {ACM}}, 36(4):719--737, 1989.

\bibitem{SS08}
Frans Schalekamp and David~B. Shmoys.
\newblock Algorithms for the universal and a priori {TSP}.
\newblock {\em Oper. Res. Lett.}, 36(1):1--3, 2008.

\bibitem{ST08}
David~B. Shmoys and Kunal Talwar.
\newblock A constant approximation algorithm for the a prioritraveling salesman
  problem.
\newblock In {\em Integer Programming and Combinatorial Optimization, 13th
  International Conference, {IPCO} 2008, Bertinoro, Italy, May 26-28, 2008,
  Proceedings}, pages 331--343, 2008.

\bibitem{Sys79}
Maciej~M. Syslo.
\newblock Characterizations of outerplanar graphs.
\newblock {\em Discrete Mathematics}, 26(1):47 -- 53, 1979.

\bibitem{Umb15}
Seeun Umboh.
\newblock Online network design algorithms via hierarchical decompositions.
\newblock In {\em Proceedings of the Twenty-Sixth Annual {ACM-SIAM} Symposium
  on Discrete Algorithms, {SODA} 2015, San Diego, CA, USA, January 4-6, 2015},
  pages 1373--1387, 2015.

\bibitem{FH13}
Philipp von Falkenhausen and Tobias Harks.
\newblock Optimal cost sharing for resource selection games.
\newblock {\em Math. Oper. Res.}, 38(1):184--208, 2013.

\bibitem{WZ07}
David~P. Williamson and Anke van Zuylen.
\newblock A simpler and better derandomization of an approximation algorithm
  for single source rent-or-buy.
\newblock {\em Oper. Res. Lett.}, 35(6):707--712, 2007.

\end{thebibliography}

\end{document}